\newcommand*{\rom}[1]{\expandafter\@slowromancap\romannumeral #1@}
\newcommand{\bs}{\boldsymbol}
\newtheorem{theorem}{Theorem}
\newtheorem{lemma}{Lemma}
\newtheorem{property}{Property}
\begin{document}
\title{Asymptotic Outage Analysis of HARQ-IR over Time-Correlated Nakagami-$m$ Fading Channels}
\author{Zheng~Shi,
%        Haichuan~Ding,
        Shaodan~Ma, Guanghua~Yang, Kam-Weng~Tam, and Minghua Xia
%\thanks{Manuscript received January 14, 2015; revised May 29, 2015; accepted July 19, 2015. The associate editor coordinating the review of this paper and approving it for publication was M. Elkashlan.}
\thanks{Zheng Shi, Shaodan Ma and Kam-Weng Tam are with the Department of Electrical and Computer Engineering, University of Macau, Macao (e-mail:shizheng0124@gmail.com, shaodanma@umac.mo, kentam@umac.mo).}
%\thanks{Haichuan Ding was with University of Macau, and is now with the Department of Electrical and Computer Engineering, University of Florida, U.S.A. (email: dhcbit@gmail.com).}
\thanks{Guanghua Yang is with the Institute of Physical Internet, Jinan University, China (e-mail: ghyang@jnu.edu.cn).}
\thanks{Minghua Xia is with the Department of Electronics and Communication Engineering, Sun Yat-Sen University, Guangzhou, China (email: xiamingh@mail.sysu.edu.cn).}
%\thanks{The corresponding author is Shaodan Ma.}
%\thanks{This work was supported by the Research Committee of University of Macau under grants: MYRG078(Y1-L2)-FST12-MSD and MYRG101(Y1-L3)-FST13-MSD.}
}
\maketitle
\begin{abstract}
In this paper, outage performance of hybrid automatic repeat request with incremental redundancy (HARQ-IR) is analyzed. Unlike prior analyses, time-correlated Nakagami-$m$ fading channel is considered. The outage analysis thus involves the probability distribution analysis of a product of multiple correlated shifted Gamma random variables and is more challenging than prior analyses. Based on the finding of the conditional independence of the received signal-to-noise ratios (SNRs), the outage probability is exactly derived by using conditional Mellin transform. Specifically, the outage probability of HARQ-IR under time-correlated Nakagami-$m$ fading channels can be written as a weighted sum of outage probabilities of HARQ-IR over independent Nakagami fading channels, where the weightings are determined by a negative multinomial distribution. This result enables not only an efficient truncation approximation of the outage probability with uniform convergence but also asymptotic outage analysis to further extract clear insights which have never been discovered for HARQ-IR even under fast fading channels. The asymptotic outage probability is then derived in a simple form which clearly quantifies the impacts of transmit powers, channel time correlation and information transmission rate. It is proved that the asymptotic outage probability is an inverse power function of the product of transmission powers in all HARQ rounds, an increasing function of the channel time correlation coefficients, and a monotonically increasing and convex function of information transmission rate. The simple expression of the asymptotic result enables optimal power allocation and optimal rate selection of HARQ-IR with low complexity. Finally, numerical results are provided to verify our analytical results and justify the application of the asymptotic result for optimal system design.

%is theoretically analyzed that full diversity can be achieved by HARQ-IR even under time-correlated fading channels while time correlation has a detrimental impact on the outage performance. It is also proven that the asymptotic outage probability is a function of the product of transmit powers in all HARQ rounds and is a convex function of information transmission rate, which enable the optimal power allocation and optimal rate selection of HARQ-IR, respectively. In addition, HARQ-IR under quasi-static fading channels is particularly discussed as a special case of time-correlated fading channels, and it is revealed that no time diversity can be achieved from retransmissions of HARQ. Finally, numerical results are presented for verifications and discussions.
\end{abstract}
% Note that keywords are not normally used for peerreview papers.
\begin{IEEEkeywords}
Hybrid automatic repeat request with incremental redundancy, time correlation, Nakagami-$m$ fading, asymptotic outage analysis, product of multiple random variables.
\end{IEEEkeywords}
\IEEEpeerreviewmaketitle

\section{Introduction}\label{sec:int}
Recently, hybrid automatic repeat request (HARQ) technique has found wide applications in wireless communications due to its high potential for reliable transmissions. It has been proved from an information-theoretic view in \cite{caire2001throughput} that HARQ with incremental redundancy (HARQ-IR) can achieve the ergodic capacity in Gaussian collision channels. It also provides superior performance over other types of HARQ since extra coding gain is obtained through code combining. Thus this paper focuses on the analysis of HARQ-IR. As shown in \cite{makki2014performance}, the most fundamental metric to evaluate the performance of HARQ-IR is outage probability and its analysis essentially turns to determine the cumulative distribution function (CDF) of accumulated mutual information.
%It is proved that the derivation of the CDF of accumulated mutual information is equivalent to determining the CDF of a product of multiple shifted random variables (RVs) \cite{shen2009average,makki2013green,to2015power,kim2014optimal,sesia2004incremental,jinho2013energy,stanojev2009energy,
%aghajanzadeh2011outage,yilmaz2010outage,chelli2013performance,larsson2014throughput,
%larsson2016throughput}. Specifically, for Nakagami-$m$ fading channels, these RVs follow Gamma distribution.
%Specifically, these RVs comply with Gamma distribution for Nakagami-$m$ fading channels, which involve Rayleigh fading as its special case with fading order $m=1$.

In prior literature, various methods have been proposed to derive the CDF of accumulated mutual information for HARQ-IR under either quasi-static \cite{shen2009average,makki2013green} or fast fading channels \cite{to2015power,kim2014optimal,sesia2004incremental,jinho2013energy,stanojev2009energy,
aghajanzadeh2011outage,yilmaz2010outage,chelli2013performance,
larsson2014throughput,larsson2016throughput}. To be more specific, for quasi-static fading channels where the same channel realization is experienced by the transmitted signal in each HARQ round, the CDF of accumulated mutual information is easy to be derived because of the simplicity of handling only one single random variable (RV). Hereby, \cite{shen2009average} and \cite{makki2013green} have conducted average rate analysis and power optimization for HARQ-IR, respectively. Unfortunately, the analytical results under quasi-static fading channels are only applicable to low mobility environment. In high mobility environment, the transmitted signals among all HARQ rounds would experience fast fading channels, where the channel responses vary independently from one transmission to another. Under fast fading channels, the derivation of the CDF of accumulated mutual information essentially turns to determine the distribution of the product of multiple independent shifted RVs. There are several approaches proposed to tackle this problem in the literature. For example, in \cite{to2015power,kim2014optimal,sesia2004incremental}, Log-normal approximation is proposed based on central limit theorem (CLT). In \cite{jinho2013energy,stanojev2009energy,aghajanzadeh2011outage}, a lower bound and an upper bound of the CDF are derived based on Jensen's inequality and Minkowski inequality, respectively. To calculate the exact CDF, Mellin transform and multi-fold convolution have been applied in \cite{yilmaz2010outage,chelli2013performance} and \cite{larsson2014throughput} respectively. Specifically, based on Mellin transform,
\cite{yilmaz2010outage} derives the exact CDF in terms of generalized Fox's H function. The analytical results are then applied for outage analysis in \cite{chelli2013performance}. In \cite{larsson2014throughput}, noticing that the accumulated mutual information is a sum of independent RVs under fast fading channels, the exact CDF is thus derived by using multi-fold convolution. Moreover, by associating the analytical results in \cite{larsson2014throughput} with matrix exponential distribution through Laplace transform, \cite{larsson2016throughput} approximates the CDF of accumulated mutual information to a matrix exponential distribution. Unfortunately, the exact results in \cite{yilmaz2010outage,chelli2013performance,larsson2014throughput}
are too complicated to provide meaningful insights due to difficulties in handling generalized Fox's H function and multi-fold integrals, while the approximated results in \cite{to2015power,kim2014optimal,sesia2004incremental} and \cite{larsson2016throughput} involve certain numerical calculations which also hinder the extraction of clear insights.

%this approximation becomes instable for higher signal-to-noise ratio (SNR). However, it is difficult to extract clear insights to ease the system design through this approximation. However, the analytical results obtained in \cite{yilmaz2010outage,chelli2013performance,larsson2014throughput} are too complicated to provide meaningful insights due to difficulties in handling special functions and multi-fold integrals.
%few insights have been extracted from the closed-form expressions because of difficulties in handling special functions and multi-fold integrals.

The analysis of HARQ-IR becomes more challenging while considering time-correlated fading channels, which usually occur in low-to-medium mobility environment, because a product of multiple shifted correlated RVs is involved in the CDF of accumulated mutual information. To the best of our knowledge, there are only few approximation approaches available to analyze the performance of HARQ-IR over time-correlated fading channels, that is, Log-normal approximation \cite{yang2014performance}, polynomial fitting technique \cite{shi2015analysis} and inverse moment matching method \cite{shi2016inverse}. Unfortunately, the Log-normal approximation in \cite{yang2014performance} is inaccurate when the fading channels have medium-to-high time correlation. Although the results in \cite{shi2015analysis} and \cite{shi2016inverse} usually can achieve a good approximation of the CDF of accumulated mutual information, their tightness is proved in mean square error (MSE) sense. It does not necessarily imply that the approximation error always approaches to zero and the occurrence of inaccuracy is found especially under low outage probability/hign signal-to-noise ratio (SNR). In addition, the approximated results in \cite{shi2015analysis,shi2016inverse} are still complicated with little insights, thus limiting their applications in practical system design.

In this paper, asymptotic outage analysis is conducted to thoroughly investigate the performance of HARQ-IR over time-correlated Nakagami-$m$ fading channels with meaningful insights. Based on our finding that the received SNRs in multiple HARQ transmissions are conditionally independent given a certain RV, the CDF of accumulated mutual information can be exactly derived by using conditional Mellin transform. With the result, the outage probability can be derived as a weighted sum of outage probabilities of HARQ-IR over independent Nakagami fading channels, where the weightings are determined by a negative multinomial distribution. A truncation approximation with uniform convergence is then proposed to ease the computation of the exact outage probability. Notice that the uniform convergence is stronger than the MSE convergence in \cite{shi2015analysis,shi2016inverse} and thus offers higher accuracy in the approximation. The analytical result in this paper also enables the asymptotic outage analysis, which has never been conducted for HARQ-IR even under fast fading channels. More specifically, the asymptotic outage probability is derived in a simple form which clearly quantifies the impacts of transmission powers in multiple transmissions, channel time correlation and information transmission rate. It is also proved that the asymptotic outage probability is an inverse power function of the product of transmission powers in all HARQ rounds, an increasing function of the channel time correlation coefficients, and a monotonically increasing and convex function of information transmission rate. The simple expression of the asymptotic result enables optimal power allocation and optimal rate selection of HARQ-IR through plenty of readily available optimization techniques with low complexity. Finally, numerical results are provided to verify our analytical results and justify the application of the asymptotic result for optimal system design.

% The theoretical analysis reveals that full diversity can be achieved by HARQ-IR even under time-correlated fading channels except for quasi-static fading channels while the time correlation of fading channels has a detrimental impact on system performance. The asymptotic outage probability is proven to be a function of the product of transmit powers in all HARQ rounds and a convex function of coding rate, which will enable optimal power allocation and optimal rate selection of HARQ-IR through plenty of readily available optimization methods. Furthermore, the analysis of HARQ-IR under quasi-static fading channels is separately discussed because of its particularity among time-correlated fading channels, and it is found that no time diversity can be achieved from retransmissions of HARQ. Finally, the numerical results are shown to verify our analytical results together with some discussions.

%To this end, it is necessary to derive the exact outage probability of HARQ-IR, which becomes more challenging than prior analyses when both channel time correlation and Nakagami-$m$ fading channels are considered. Fortunately,

%Specifically, since the asymptotic outage probability is found to be a function of power-product term, optimal power allocation can be used to minimize the outage probability given a constraint on average total power. The monotonicity and the convexity of outage probability with respect to coding rate facilitates the optimal rate selection for HARQ-IR system.

The remainder of this paper is organized as follows. Section \ref{sec:sys_mod} introduces the system model and formulates the outage probability. The probability distribution of the product of multiple correlated shifted RVs which is necessary for outage analysis is derived and the exact outage probability is given in Section \ref{sec:exa}. Asymptotic outage probability is then derived in a simple form and meaningful insights are shown in Section \ref{sec:asy}. In Section \ref{sec:num_res}, numerical results are presented for validation and discussion. Finally, Section \ref{sec:con} concludes this paper.
%The representation of correlated Gamma RVs and its application is introduced in Section \ref{sec:corr}. Sections \ref{sec:dis_prod_gam} and \ref{sec:dis_prod_shift_cor} study the statistics of the product of correlated and correlated shifted-Gamma RVs, respectively. The analytical results are validated in Section \ref{sec:num_res}. Finally, Section \ref{sec:con} concludes this paper.

\section{System Model and Outage Formulation}\label{sec:sys_mod}
A point-to-point HARQ-IR enabled system operating over time-correlated Nakagami-$m$ fading channels is considered in this paper. The details of the system are introduced as follows.

%To facilitate outage formulation, HARQ-IR protocol and time-correlated Nakagami-$m$ fading channels are first introduced.

\subsection{HARQ-IR Protocol}\label{sec:harq_ir}
Following HARQ-IR protocol, the source first encodes every $b$-bits information message into $K$ codewords each with length of $L$, where $K$ denotes the maximum allowable number of transmissions for each message. As a result, the initial coding rate is $c = b/L$ \cite{wu2010performance}. The $K$ codewords will be sequentially transmitted to the destination until the message is successfully decoded. In each transmission, the previously received codewords are combined with the most recently received codeword for joint decoding. If successful, an acknowledgement (ACK) message is fed back from the destination to the source and the source then moves to the transmission of the next information message. Otherwise, a negative acknowledgement (NACK) message is fed back from the destination to the source and the source transmits the next codeword until the maximum number of transmissions $K$ is reached. Similarly to the analyses in the literature \cite{chelli2013performance,larsson2014throughput,larsson2016throughput}, error-free feedback channels are assumed here, that is, all feedback messages can be successfully decoded.
%
%
%The success or failure of the decoding operation will be declared by feeding back a single bit of acknowledgement (ACK) or negative acknowledgement (NACK) message to the source, respectively. Specifically, if a NACK is received and the maximum number of transmissions $K$ is not reached, the next packet will be transmitted. However, once either a ACK is received or the maximum transmission number $K$ is reached, the source will start delivery of a new message following the same procedure. In addition, we assume an error-free feedback channel, that is, all feedback signals can be correctly decoded.
\subsection{Time-Correlated Nakagami-$m$ Fading Channels}\label{sec:corr}
Denote ${\bf x}_k$ as the $k$th codeword with length of $L$. It is transmitted over a block fading channel, i.e., each symbol of ${\bf x}_k$ experiences an identical channel realization during the $k$th transmission. Therefore, the signal received in the $k$th transmission is given by
\begin{equation}\label{eqn:sign_mod}
  {\bf y}_k = h_k {\bf x}_k + {\bf n}_k,
\end{equation}
where ${\bf n}_k$ denotes a complex additive white Gaussian noise (AWGN) vector with zero mean vector and covariance matrix $\mathcal N_0{\bf I}_{L}$, i.e., ${\bf n}_k \sim {\cal CN}(0,\mathcal N_0{{\bf I}_{L}})$, ${\bf I}_{L}$ represents an $L \times L$ identity matrix, and $h_k$ denotes the block fading channel coefficient in the $k$th transmission.

%\textcolor[rgb]{1.00,0.00,0.00}{Unlike prior works, time correlation of fading channels is considered for the analysis of HARQ-IR. [Deleted] zhu2016Performance}

%such that $|h_k| \sim {\rm Nakagami}(m,{\sigma_k}^2)$,

Notice that Nakagami-$m$ fading is a general channel model whose parameters can be adjusted to fit a variety of empirical measurements \cite{goldsmith2005wireless}, e.g., it covers one-sided Gaussian channel and Rayleigh fading as special cases by setting $m=\frac{1}{2}$ and $m=1$ respectively. It is thus considered here. Unlike most of the prior analyses, time-correlated fading channels are considered, that is, the channel coefficients among the $K$ transmissions are correlated. The time correlation usually occurs in low-to-medium mobility environment \cite{kim2011optimal}. Herein, a widely used Nakagami-$m$ fading channel model with generalized correlation is adopted and the channel magnitude $|h_k|$ is written as \cite{beaulieu2011novel,lopez2013bivariate,shi2012correlated}
\begin{equation}\label{eqn:R_k_def}
|{h_k}| = \sqrt {\frac{{{\sigma _k}^2}}{m}} \left\| {\sqrt {1 - {\lambda _k}^2} \left( {\begin{array}{*{20}{c}}
{{\vartheta _{k,1}}}\\
%{{\vartheta _{k,2}}}\\
 \vdots \\
{{\vartheta _{k,m}}}
\end{array}} \right) + {\lambda _k}\left( {\begin{array}{*{20}{c}}
{{\vartheta _{0,1}}}\\
%{{\vartheta _{0,2}}}\\
 \vdots \\
{{\vartheta _{0,m}}}
\end{array}} \right)} \right\| ,
\end{equation}
where $\left\| \cdot \right\|$ represents ${\ell }^2$ norm, ${\vartheta_{k,l}} $ and ${\vartheta_{0,l}}$ are independent circularly-symmetric complex Gaussian random variables (RVs) with zero mean and unit variance, i.e., ${\vartheta_{0,l}}, {\vartheta_{k,l}} \sim \mathcal{CN}\left( {0,1} \right)$, $m$ represents the fading order that indicates the severity of fading channels, ${\sigma_k}^2$ denotes Nakagami spread and is equal to the expectation of the squared channel magnitude, i.e., ${\rm  E}\{|h_k|^2\}={\sigma_k}^2$, and ${\boldsymbol{\lambda }} = \left( {{\lambda _1},{\lambda _2}, \cdots ,{\lambda _K}} \right)$ specifies the generalized time correlation among fading channels of all HARQ transmissions \cite{beaulieu2011novel}. Under this model, the magnitude of channel coefficient ${\left| {{h_{k}}} \right|}$ follows a Nakagami-$m$ distribution, i.e., $\left| {{h_k}} \right| \sim {\rm Nakagami}(m,{\sigma_k}^2)$, and the channel magnitudes ${\left| {{h_{l}}} \right|}$ and ${\left| {{h_{k}}} \right|}$ are correlated. Moreover, the cross correlation coefficient between the squared channel magnitudes ${\left| {{h_{l}}} \right|^2}$ and ${\left| {{h_{k}}} \right|^2}$ is determined by the time correlation coefficients $\bs \lambda$ as \cite{beaulieu2011novel}
\begin{align}
\label{eqn_cor_fa}
 \frac{{{\rm{E}}\left( {{{\left| {{h_{l}}} \right|}^2}{{\left| {{h_{k}}} \right|}^2}} \right) - {\rm{E}}\left( {{{\left| {{h_{l}}} \right|}^2}} \right){\rm{E}}\left( {{{\left| {{h_{k}}} \right|}^2}} \right)}}{{\sqrt {{\rm{Var}}\left( {{{\left| {{h_{l}}} \right|}^2}} \right){\rm{Var}}\left( {{{\left| {{h_{k}}} \right|}^2}} \right)} }}
= {\lambda _{l}}^2{\lambda _{k}}^2,\, 1 \le l \ne k \le K,
\end{align}
where ${\rm Var}(\cdot)$ denotes the variance of the random variable in the brackets. Without loss of generality, the time correlation coefficients are assumed to be non-negative, i.e., $0 \le \lambda _1,\cdots,\lambda _K \le 1$. Clearly from (\ref{eqn:R_k_def}), when ${\boldsymbol{\lambda }} = {\bf 1}_K$, the time-correlated fading channel reduces to quasi-static fading channel with $h_1=h_2=\cdots=h_K$, while when ${\boldsymbol{\lambda }} = {\bf 0}_K$, the time-correlated fading channel reduces to a fast fading channel where the channel coefficients $h_1$, $h_2$, $\cdots$, $h_K$ are mutually independent. In other words, the Nakagami-$m$ fading channel with generalized time correlation in (\ref{eqn:R_k_def}) includes quasi-static fading channel and fast fading channel as its special cases. Here ${\bf 1}_K$ and ${\bf 0}_K$ denote an all-ones vector and a null vector, each with length $K$, respectively. Unless otherwise indicated, subscript $K$ is omitted in the sequel.

According to (\ref{eqn:sign_mod}), the received signal-to-noise ratio (SNR) in the $k$th transmission is given by
\begin{equation}\label{eqn:SNR}
{\gamma _k} = \frac{{P_k}{\left| {{h_k}} \right|^2}}{\mathcal N_0} ,
\end{equation}
where $P_k$ denotes the transmitted signal power in the $k$th transmission. Since $\left| {{h_k}} \right|$ is Nakagami-$m$ distributed, i.e., $\left| {{h_k}} \right| \sim {\rm Nakagami}(m,{\sigma_k}^2)$, it is readily proved that ${\gamma _k} $ complies with \emph{Gamma distribution}, i.e., ${\gamma _k} \sim \mathcal G\left( {m,\frac{{P_k}{\sigma _k}^2}{m\mathcal N_0}} \right)$. Due to the time correlation among the channel coefficients as given in (\ref{eqn_cor_fa}), the SNRs $\bs \gamma = (\gamma _1,\gamma _2,\cdots,\gamma _K)$ are correlated Gamma RVs.% To proceed with our analysis, the joint probability density function (PDF) of correlated SNRs $\bs \gamma$ will be first derived.

%\section{Problem Formulation}
\subsection{Outage Formulation}
Outage probability has been proved as the most fundamental performance metric of HARQ schemes \cite{caire2001throughput}. For HARQ-IR, the outage probability is directly determined by the CDF of accumulated mutual information. Specifically, assuming that information-theoretic capacity achieving channel coding is adopted for HARQ-IR, an outage event happens when the information cannot be successfully decoded after $K$ transmissions, i.e.,  the accumulated mutual information $I_K$ is below the information transmission rate $\mathcal R$. Notice that the information transmission rate $\mathcal R$ depends on the coding rate $c$, modulation scheme and symbol transmission rate. The outage probability after $K$ transmissions is thus written as
\begin{equation}\label{eqn:out_prob_def}
{p_{out,K}} = \Pr \left( {{I_K}  < \mathcal R} \right) = F_{{I_K}}(\mathcal R).
\end{equation}
where $F_{{I_K}}(\cdot)$ denotes the CDF of ${I_K}$. In HARQ-IR protocol, decoding at each transmission is performed based on the combined codewords received in the previously transmissions and the current transmission and the accumulated mutual information after $K$ transmissions is given by
\begin{equation}\label{eqn:def:accum_inf}
{I_K} = \sum\nolimits_{k = 1}^K {{{\log }_2}\left( {1 + {\gamma _k}} \right)}.
\end{equation}
Accordingly, the outage probability ${p_{out,K}}$ becomes
\begin{equation}\label{eqn:out_prob_rew}
{p_{out,K}} = \Pr \left( {G_K \triangleq \prod\nolimits_{k = 1}^K {\left( {1 + {\gamma _k}} \right)}  < {2^{\mathcal R}}} \right) = {F_{G_K}}\left( {{2^{\mathcal R}}} \right),
\end{equation}
where $F_{G_K}(\cdot)$ denotes the CDF of $G_K$. Noticing that $(\gamma _1,\gamma _2,\cdots,\gamma _K)$ are correlated Gamma RVs, the derivation of outage probability ${p_{out,K}}$ essentially turns to determining the CDF of the product of multiple correlated shifted-Gamma RVs, i.e., ${F_{G_K}}\left( {{x}} \right)$. Due to the presence of time correlation and Gamma distribution, the outage analysis  is more challenging than those for quasi-static \cite{shen2009average,makki2013green} or fast fading channels \cite{to2015power,kim2014optimal,sesia2004incremental,jinho2013energy,stanojev2009energy,aghajanzadeh2011outage,
yilmaz2009productshifted,yilmaz2010outage,chelli2013performance,larsson2014throughput,larsson2016throughput}.  %Due to the involvement of multiple correlated RVs, it is more challenging to derive a closed-form expression for ${p_{out,K}}$ compared with the cases under quasi-static fading channels ($\bs \lambda = \bf 1$) and fast fading channels ($\bs \lambda = \bf 0$).

It is worth noting that the CDF of the product of multiple shifted and correlated RVs has numerous applications in wireless communications and the outage formulation in (\ref{eqn:out_prob_def}) is also applicable to parallel transmission systems and orthogonal frequency-division multiplexing (OFDM) systems \cite{bai2013outage,luo2005service}. Specifically, in OFDM systems, coded signals are transmitted through multiple subcarriers. Due to close frequency spacing of the subcarriers and limited Doppler spread, the fading channels on multiple subcarriers are usually correlated \cite{lu2002ldpc}. The outage analysis in OFDM systems is thus also reduced to the analysis of the CDF of the product of multiple shifted and correlated RVs in (\ref{eqn:out_prob_rew}).

%the CDF of the product of multiple shifted and correlated RVs has numerous applications in wireless communications. For example, (\ref{eqn:out_prob_def}) can be regarded as the outage probability of the signal transmission over parallel fading channels in multi-carrier transmission systems or orthogonal frequency-division multiplexing (OFDM) systems \cite{bai2013outage,luo2005service}. Due to close frequency spacing of the subcarriers and limited Doppler spread, the fading channels on multiple subcarriers in OFDM systems  are usually correlated \cite{lu2002ldpc}. Our analysis thus has a wide range of applications, which justifies the significance of our work.

%, where parallel fading channel model is frequently adopted to model orthogonal frequency-division multiplexing (OFDM) system \cite{bai2013outage,luo2005service}.

\section{Probability Distribution of the Product of Multiple Correlated Shifted-Gamma RVs}\label{sec:exa}
%In order to conduct the asymptotic outage analysis of HARQ-IR, it is necessary to exactly derive the outage probability. Unlike the outage analysis under quasi-static \cite{shen2009average,makki2013green} or fast fading channels \cite{to2015power,kim2014optimal,sesia2004incremental,jinho2013energy,stanojev2009energy,aghajanzadeh2011outage,
%yilmaz2009productshifted,yilmaz2010outage,chelli2013performance,larsson2014throughput,larsson2016throughput}, it is more challenging to derive an exact expression for outage probability ${p_{out,K}}$ when fading channels are time-correlated and Nakagami distributed.
As aforementioned, the derivation of outage probability ${p_{out,K}}$ essentially turns to determining the CDF of the product of multiple shifted and correlated Gamma RVs. It is mathematically difficult due to the presence of time correlation. In the literature, only a few \emph{approximated} results are available in \cite{yang2014performance,shi2015analysis,shi2016inverse}. They are complicated without clear insights and thus are not favorable for system design. Moreover, their accuracy can not be guaranteed under certain scenarios. In this paper, we will derive the \emph{exact} CDF of the product of multiple shifted and correlated Gamma RVs, based on which a truncation approximation with uniform convergence can be proposed and asymptotic outage probability can be derived in a simple form with clear insights.

\subsection{Exact CDF}
From (\ref{eqn:R_k_def}) and (\ref{eqn:SNR}), we can find that the SNRs $\{\gamma _1,\gamma _2,\cdots,\gamma _K\}$ are conditionally independent, given the complex Gaussian RVs ${\vartheta_{0,1}},\cdots,{\vartheta_{0,m}}$. It has been proved in \cite[Theorem 1.3.4]{muirhead2009aspects} that $\{\gamma _1,\gamma _2,\cdots,\gamma _K\}$ follow \emph{independent noncentral chi-squared distributions} with $2m$ degrees-of-freedom when conditioned on $T \triangleq \sum\nolimits_{l = 1}^m {{{\left| {\vartheta_{0,l}} \right|}^2}} $. The conditional probability density function (PDF) is given in the following lemma.
\begin{lemma}\cite[Theorem 1.3.4]{muirhead2009aspects}
The conditional PDF of the SNR $\gamma_k$ given $T =t$ is written as
\begin{align}\label{eqn:cond_pdf_r_k_rew}
{{f_{\left. {{\gamma _k}} \right|T}}\left( {\left. {{x_k}} \right|t} \right)} &{ = {{\left( {\frac{1}{{{\Omega _k}}}} \right)}^m}\frac{{{x_k}^{m - 1}}}{{\Gamma \left( m \right)}}{e^{ - \frac{{{u_k}{\lambda _k}^2t + {x_k}}}{{{\Omega _k}}}}}_0{F_1}\left( {;m;{{\left( {\frac{{\sqrt {{u_k}{\lambda _k}^2{x_k}t} }}{{{\Omega _k}}}} \right)}^2}} \right),}\,  {{\lambda _k}} \ne 1,
\end{align}
where ${u_k} = \frac{{P_k}{\sigma _k}^2}{m\mathcal N_0}$, $\Omega _k = \frac{{P_k}{\sigma _k}^2\left( {1 - {\lambda _k}^2} \right)}{m\mathcal N_0}$,  $\Gamma(\cdot)$ and ${}_0F_1(\cdot)$ denote Gamma function and the confluent hypergeometric limit function \cite[Eq.16.2.1]{olver2010nist}, respectively.
%\footnote{Although $m$ is assumed to be an integer in (\ref{eqn:R_k_def}), the conditional PDF (\ref{eqn:cond_pdf_r_k_rew}) can factually be extended to the cases with arbitrary $m \ge \frac{1}{2}$ \cite{beaulieu2011novel}.}.
\end{lemma}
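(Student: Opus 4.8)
The plan is to reduce the statement to the classical fact that a sum of squared magnitudes of independent complex Gaussian variables with (possibly) non-zero means is a scaled noncentral chi-squared variable, and then to rewrite the resulting modified Bessel function in terms of the confluent hypergeometric limit function ${}_0F_1$. The conditioning variable $T$ never needs to be handled directly; what matters is that, after conditioning on the common Gaussians, $\gamma_k$ becomes an explicit function of $m$ independent shifted Gaussians whose shifts enter only through $\sum_l|\vartheta_{0,l}|^2$.

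First I would condition on the whole vector $(\vartheta_{0,1},\dots,\vartheta_{0,m})$ rather than merely on $T$. Combining \eqref{eqn:R_k_def} and \eqref{eqn:SNR} gives $\gamma_k = u_k\sum_{l=1}^{m}\bigl|\sqrt{1-\lambda_k^2}\,\vartheta_{k,l}+\lambda_k\vartheta_{0,l}\bigr|^2$ with $u_k = P_k\sigma_k^2/(m\mathcal{N}_0)$. Because the $\vartheta_{k,l}$ are mutually independent and independent of the $\vartheta_{0,l}$, conditionally on the $\vartheta_{0,l}$ the $m$ summands are independent and the $\ell$-th one is the squared magnitude of a $\mathcal{CN}\!\bigl(\lambda_k\vartheta_{0,l},\,1-\lambda_k^2\bigr)$ variable. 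Hence $2\gamma_k/\Omega_k$, with $\Omega_k = u_k(1-\lambda_k^2)$, is conditionally noncentral chi-squared with $2m$ degrees of freedom and noncentrality parameter $\delta_k = 2\lambda_k^2\bigl(\sum_{l=1}^{m}|\vartheta_{0,l}|^2\bigr)/(1-\lambda_k^2) = 2u_k\lambda_k^2 T/\Omega_k$; this is precisely \cite[Theorem~1.3.4]{muirhead2009aspects}. Since $\delta_k$ depends on the conditioning variables only through $T = \sum_{l=1}^{m}|\vartheta_{0,l}|^2$, the conditional law of $\gamma_k$ given the full vector coincides with its conditional law given $T$ alone; this observation simultaneously yields the conditional independence of $\gamma_1,\dots,\gamma_K$ given $T$, since their only coupling is the shared $\vartheta_{0,l}$.

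Next I would write the noncentral chi-squared density explicitly and rescale. Starting from $f(w) = \tfrac12 e^{-(w+\delta_k)/2}(w/\delta_k)^{(m-1)/2}I_{m-1}\!\bigl(\sqrt{\delta_k w}\bigr)$ and substituting $w = 2x_k/\Omega_k$ with Jacobian $2/\Omega_k$, the prefactors collapse to $\Omega_k^{-1}e^{-(x_k+u_k\lambda_k^2 t)/\Omega_k}\bigl(x_k/(u_k\lambda_k^2 t)\bigr)^{(m-1)/2}I_{m-1}\!\bigl(2\sqrt{u_k\lambda_k^2 x_k t}/\Omega_k\bigr)$. Finally, using the series identity $I_{m-1}(z) = (z/2)^{m-1}\,{}_0F_1(;m;z^2/4)/\Gamma(m)$ with $z = 2\sqrt{u_k\lambda_k^2 x_k t}/\Omega_k$, the factor $(z/2)^{m-1}$ absorbs the leftover $t$-dependence, merges the two fractional powers of $x_k$ into $x_k^{m-1}$, and combines with the remaining $\Omega_k^{-1}$ into $(1/\Omega_k)^m x_k^{m-1}/\Gamma(m)$, which reproduces \eqref{eqn:cond_pdf_r_k_rew}.

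I expect the only genuine obstacle to be purely clerical: keeping the three scale factors---$u_k$, the Nakagami factor $1-\lambda_k^2$, and the $\tfrac12$ in the standard normalization of the chi-squared density---mutually consistent, so that the exponent $-(u_k\lambda_k^2 t + x_k)/\Omega_k$, the monomial $x_k^{m-1}$, and the argument $\bigl(\sqrt{u_k\lambda_k^2 x_k t}/\Omega_k\bigr)^2$ of ${}_0F_1$ emerge exactly as written. The distributional identity itself is immediate once the conditioning is arranged, being a direct invocation of the cited theorem, so there is no substantive difficulty beyond bookkeeping.
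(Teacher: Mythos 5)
Your proposal is correct: the paper offers no proof of this lemma beyond citing \cite[Theorem 1.3.4]{muirhead2009aspects}, and your derivation is exactly the content of that citation — conditioning on $(\vartheta_{0,1},\dots,\vartheta_{0,m})$ to identify $2\gamma_k/\Omega_k$ as a noncentral chi-squared variable with $2m$ degrees of freedom and noncentrality $2u_k\lambda_k^2 T/\Omega_k$, then rewriting $I_{m-1}$ via ${}_0F_1$. All the bookkeeping checks out (the fractional powers of $x_k$ and $t$ cancel as you claim), and your observation that the noncentrality depends on the common Gaussians only through $T$ correctly supplies the conditional-independence fact the paper uses immediately afterward.
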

It should be mentioned that (\ref{eqn:cond_pdf_r_k_rew}) is not applicable to quasi-static fading channels, i.e., $\bs \lambda = \bf 1$, because of the occurrence of singularity. Due to its speciality, the outage analysis of HARQ-IR over quasi-static fading channels will be discussed separately. %The performance of HARQ-IR over quasi-static fading channels will be discussed individually.

Noticing that Mellin transform\footnote{
The Mellin transform with respect to a function $f(x)$ is defined as $\left\{ {\mathcal M f} \right\}\left( s \right) = \int\nolimits_0^\infty  {{x^{s - 1}}f\left( x \right)dx} \triangleq \tilde f\left( s \right)$, and the associated inverse Mellin transform is given by $f\left( x \right) = \frac{1}{{2\pi {\rm{i}}}}\int\nolimits_{c - {\rm{i}}\infty }^{c + {\rm{i}}\infty } {{x^{ - s}}\tilde f\left( s \right)dx}$ \cite{debnath2010integral}.} is a powerful mathematical tool to obtain the probability distribution of a product of multiple independent RVs \cite{yilmaz2009productshifted,yilmaz2010outage,chelli2013performance}, the conditional independence of SNRs $\bs \gamma$ given $T$ inspires us to derive the conditional PDF of $G_K$ in closed-form by using conditional Mellin transform. Specifically, with the conditional independence, the Mellin transform of the conditional PDF of $G_K$ can be written as a product of the Mellin transforms corresponding to the conditional PDFs of the shifted SNRs. With this special property, as proved in Appendix \ref{app:phi_derivation}, the conditional PDF of $G_K$ given $T=t$, ${f_{\left. G_K \right|T}}(x|t)$, can be derived as
\begin{align}\label{eqn:g_hat_cond_t}
 {f_{\left. { G_K} \right|T}}\left( {\left. x \right|t} \right)
% &= \left\{ {{M^{ - 1}} \phi } \right\}\left( s \right) = \frac{1}{{2\pi {\rm{i}}}}\int\nolimits_{c - {\rm{i}}\infty }^{c + {\rm{i}}\infty } {{x^{ - s}}}  \phi \left( {\left. s \right|t} \right)ds \notag \\
% &= \frac{1}{{2\pi {\rm{i}}}}\int\nolimits_{c - {\rm{i}}\infty }^{c + {\rm{i}}\infty } {{x^{ - s}}} \sum\limits_{{\ell_1}, \cdots ,{\ell_K} = 0}^\infty  {{t^{\sum\limits_{k = 1}^K {{\ell_k}} }}{e^{ - t\sum\limits_{k = 1}^K {\frac{{{\lambda _k}^2}}{{1 - {\lambda _k}^2}}} }}\prod\limits_{k = 1}^K {\frac{{{{\left( {\frac{{{\lambda _k}^2}}{{1 - {\lambda _k}^2}}} \right)}^{{\ell_k}}}}}{{{\ell_k}!{{\left( {{\Omega _k}} \right)}^{m + {\ell_k}}}}}\Psi \left( {m + {\ell_k},m + {\ell_k} + s;\frac{1}{{{\Omega _k}}}} \right)} } ds \notag \\
& = \sum\limits_{{\ell_1}, \cdots ,{\ell_K} = 0}^\infty  {{t^{\sum\nolimits_{k = 1}^K {{\ell_k}} }}{e^{ - t\sum\nolimits_{k = 1}^K {\frac{{{\lambda _k}^2}}{{1 - {\lambda _k}^2}}} }}\prod\limits_{k = 1}^K {\frac{1}{{{\ell_k}!}}{{{{\left( {\frac{{{\lambda _k}^2}}{{1 - {\lambda _k}^2}}} \right)}^{{\ell_k}}}}}} } {f_{{{ {\mathcal A}}_{\bs{\ell}}}}}(x),
\end{align}
where ${ \bs{\ell}}$ is a vector with $K$ elements as ${ \bs{\ell}} = (\ell_1,\cdots,\ell_K)$ and ${{f_{{{ {\cal A}}_{\bs{\ell}}}}}(x)}$ is given by
\begin{align}\label{eqn:pdf_produc_gamma_ind_sec}
{f_{{{ {\mathcal A}}_{\bs{\ell}}}}}(x) &= \frac{1}{{2\pi {\rm{i}}}}\int\nolimits_{c - {\rm{i}}\infty }^{c + {\rm{i}}\infty } {\prod\limits_{k = 1}^K {\frac{{\Psi \left( {m + {\ell_k},s + m + {\ell_k};\frac{1}{{{\Omega _k}}}} \right)}}{{{{\left( {{\Omega _k}} \right)}^{m + {\ell_k}}}}}} {x^{ - s}}ds}\notag\\
&= \frac{1}{{\prod\nolimits_{k = 1}^K {{\Omega _k}} }}Y_{0,K}^{K,0}\left[ {\left. {\begin{array}{*{20}{c}}
 - \\
{\left( {0,1,\frac{1}{{{\Omega _1}}},m + {\ell_1}} \right), \cdots ,\left( {0,1,\frac{1}{{{\Omega _K}}},m + {\ell_K}} \right)}
\end{array}} \right|\frac{x}{{\prod\nolimits_{k = 1}^K {{\Omega _k}} }}} \right],
\end{align}
with ${\rm i} = \sqrt{-1}$, $\Psi \left( {\alpha ,\gamma ;z} \right) = \frac{1}{{\Gamma \left( \alpha  \right)}}\int\nolimits_0^\infty  {{e^{ - zt}}{t^{\alpha  - 1}}{{\left( {1 + t} \right)}^{\gamma  - \alpha  - 1}}dt}$ denoting Tricomi's confluent hypergeometric function \cite[Eq.9.211.4]{gradshteyn1965table}, and $Y_{p,q}^{m,n}[\cdot]$ denoting the generalized Fox's H function defined by Mellin-Barnes integral as \cite{yilmaz2010outage,chelli2013performance}
%The definition of Fox's H function is given in Appendix \ref{app:fox_H_fun},
%For the generalized Fox H's function, it is explicitly defined in [12,13], as copied below.
%\begin{mydef}%\cite{yilmaz2010outage,chelli2013performance},
% frequently adopted to formulate the distribution of the product of independent shifted-Gamma RVs in a generalized form, and
\begin{equation}\label{eqn_fox_h_def}
Y_{p,q}^{m,n}\left[ {\left. {\begin{array}{*{20}{c}}
{\left( {{a_1},{\alpha _1},{ A_1},{\varphi _1}} \right), \cdots ,\left( {{a_p},{\alpha _p},{ A_p},{\varphi _p}} \right)}\\
{\left( {{b_1},{\beta _1},{B_1},{\phi _1}} \right), \cdots ,\left( {{b_q},{\beta _q},{B_q},{\phi _q}} \right)}
\end{array}} \right|x} \right] = \frac{1}{{2\pi {\rm i}}}\int_{\mathcal L} {M_{p,q}^{m,n}\left[ s \right]{x^{ - s}}ds},
\end{equation}
where $\mathcal{L}$ is a Mellin-Barnes contour in the complex $s$-plane running from $c-{\rm i}\infty $ to $c+{\rm i}\infty$, $c \in \mathbb{R}$ and $M_{p,q}^{m,n}\left[ s \right]$ is written as
\begin{multline}\label{eqn:def_M}
M_{p,q}^{m,n}\left[ s \right] = \frac{{\prod\nolimits_{j = 1}^m {{B_j}^{{\phi _j} + {b_j} + {\beta _j}s - 1}\Psi \left( {{\phi _j},{\phi _j} + {b_j} + {\beta _j}s;{B_j}} \right)} }}{{\prod\nolimits_{i = n + 1}^p {{ A_i}^{{\varphi _i} + {a_i} + {\alpha _i}s - 1}\Psi \left( {{\varphi _i},{\varphi _i} + {a_i} + {\alpha _i}s;{ A_i}} \right)} }}\times \\
\frac{{\prod\nolimits_{i = 1}^n {{ A_i}^{{\varphi _i} - {a_i} - {\alpha _i}s}\Psi \left( {{\varphi _i},{\varphi _i} + 1 - {a_i} - {\alpha _i}s;{ A_i}} \right)} }}{{\prod\nolimits_{j = m + 1}^q {{B_j}^{{\phi _j} - {b_j} - {\beta _j}s}\Psi \left( {{\phi _j},{\phi _j} + 1 - {b_j} - {\beta _j}s;{B_j}} \right)} }}.
\end{multline}
%and the contour $\mathcal{L}$ is specially choosen such that it is a Mellin-Barnes contour in the complex $s$-plane running from $c-\infty $ to $c+\infty$, $c \in \mathbb{R}$ and the points $s=(-b_j-k)/\beta_j$ for $1 \le j\le m$ and $k=0,1,2,\cdots$ and the points $s=(1-a_i+k)/\alpha_i$ for $1 \le i \le n$ and $k=0,1,2,\cdots$ lie to the left and right of the chosen contour, respectively.\\
%\end{mydef}
Notice that the generalized Fox's H function does have two special properties as shown in Appendix \ref{app:fox_H_fun} which can simplify our mathematical derivations, and an efficient MATHEMATICA{\textregistered} implementation of the generalized Fox's H function in (\ref{eqn_fox_h_def}) can be found in \cite{yilmaz2010outage}. Interestingly from (\ref{eqn:pdf_produc_gamma_ind_sec}), $f_{{\mathcal A}_{\bs{\ell}}}(x)$ is the inverse Mellin transform of a product function $\prod\nolimits_{k = 1}^K {\frac{{\Psi \left( {m + {\ell_k},s + m + {\ell_k};\frac{1}{{{\Omega _k}}}} \right)}}{{{{\left( {{\Omega _k}} \right)}^{m + {\ell_k}}}}}}$.
%  the Mellin transform of $f_{{\mathcal A}_{\bs{\ell}}}(x)$ is
%\begin{equation}\label{eqn:meillin_trans_ind_gam}
%\left\{ {\mathcal M{f_{{{ {\mathcal A}}_{\bs{\ell}}}}}} \right\}\left( s \right) = \prod\limits_{k = 1}^K {\int\nolimits_0^\infty  {\frac{{{{\left( {1 + x} \right)}^{s - 1}}{x^{m + {\ell_k} - 1}}{e^{ - \frac{x}{{{\Omega _k}}}}}}}{{{{\left( {{\Omega _k}} \right)}^{m + {\ell_k}}}\Gamma \left( {m + {\ell_k}} \right)}}dx} }  = \prod\limits_{k = 1}^K {\frac{{\Psi \left( {m + {\ell_k},s + m + {\ell_k};\frac{1}{{{\Omega _k}}}} \right)}}{{{{\left( {{\Omega _k}} \right)}^{m + {\ell_k}}}}}}.
%\end{equation}
%\begin{equation}\label{eqn:meillin_trans_ind_gam}
%\left\{ {\mathcal M{f_{{{ {\mathcal A}}_{\bs{\ell}}}}}} \right\}\left( s \right) = \prod\limits_{k = 1}^K {\frac{{\Psi \left( {m + {\ell_k},s + m + {\ell_k};\frac{1}{{{\Omega _k}}}} \right)}}{{{{\left( {{\Omega _k}} \right)}^{m + {\ell_k}}}}}}.
%\end{equation}
Here the multiplier ${\frac{{\Psi \left( {m + {\ell_k},s + m + {\ell_k};\frac{1}{{{\Omega _k}}}} \right)}}{{{{\left( {{\Omega _k}} \right)}^{m + {\ell_k}}}}}}$ in fact is the Mellin transform of the PDF of a shifted-Gamma RV $(1+R_{{\bs{\ell}},k})$, where $R_{{\bs{\ell}},k} \sim \mathcal {G}(m+\ell_k,{\Omega _k})$. Therefore, ${f_{{{ {\mathcal A}}_{\bs{\ell}}}}}(x)$ can be regarded as the PDF of the product of $K$ independent shifted-Gamma RVs i.e., $\mathcal A_{\bs{\ell}} = \prod\nolimits_{k = 1}^K (1+R_{{\bs{\ell}},k})$.
%
%, where $R_{{\bs{\ell}},k} \sim \mathcal {G}(m+\ell_k,{\Omega _k})$ \footnote{
%This is because that the Mellin transform of $f_{{\mathcal A}_{\bs{\ell}}}(x)$ is equivalent to (\ref{eqn:pdf_produc_gamma_ind_sec}), such that
%\begin{equation*}\label{eqn:meillin_trans_ind_gam}
%\left\{ {\mathcal M{f_{{{ {\mathcal A}}_{\bs{\ell}}}}}} \right\}\left( s \right) = \prod\limits_{k = 1}^K {\int\nolimits_0^\infty  {\frac{{{{\left( {1 + x} \right)}^{s - 1}}{x^{m + {\ell_k} - 1}}{e^{ - \frac{x}{{{\Omega _k}}}}}}}{{{{\left( {{\Omega _k}} \right)}^{m + {\ell_k}}}\Gamma \left( {m + {\ell_k}} \right)}}dx} }  = \prod\limits_{k = 1}^K {\frac{{\Psi \left( {m + {\ell_k},s + m + {\ell_k};\frac{1}{{{\Omega _k}}}} \right)}}{{{{\left( {{\Omega _k}} \right)}^{m + {\ell_k}}}}}}.
%\end{equation*}
%}.

With (\ref{eqn:g_hat_cond_t}), the PDF of $G_K$ can be eventually obtained by integrating the conditional probability ${f_{\left. G_K \right|T}}\left( {\left. x \right|t} \right)$ over the distribution of $T$, such that
\begin{equation}\label{eqn_mar_G}
{{f_{G_K}}\left( x \right) = {\rm E}_{T}\left\{{f_{\left. G_K \right|T}}\left( x|t \right)\right\} = \int\nolimits_0^\infty  {{f_{\left. G_K \right|T}}\left( {\left. x \right|t} \right){f_T}\left( t \right)dt} },
\end{equation}
where ${f_T}\left( t \right)$ denotes the PDF of $T$.
With respect to the distribution of $T \triangleq \sum\nolimits_{l = 1}^m {{{\left| {\vartheta_{0,l}} \right|}^2}} $, since $\vartheta_{0,1},\cdots,\vartheta_{0,m}$ are independent and identically distributed (i.i.d.) complex Gaussian RVs with zero mean and unit variance, it can be proved that $\left|\vartheta_{0,1}\right|^2,\cdots,\left|\vartheta_{0,m}\right|^2$ are i.i.d. Gamma RVs, i.e., ${{\left| {\vartheta_{0,l}} \right|}^2} \sim \mathcal{G}\left({1,1}\right)$. As a sum of $m$ independent Gamma RVs with identical scale parameter, the random variable $T$ then follows a Gamma distribution with the PDF of \cite[p289]{degroot1986probability}
\begin{equation}\label{eqn_pdf_T_nak}
{f_T}\left( t \right) = \frac{{{t^{m - 1}}}}{{\Gamma \left( m \right)}}{e^{ - t}},\, t \ge 0.
\end{equation}

Plugging (\ref{eqn:g_hat_cond_t}) and (\ref{eqn_pdf_T_nak}) into (\ref{eqn_mar_G}), it follows by using \cite[Eq.3.381.4]{gradshteyn1965table} that
\begin{align}\label{eqn_pdf_f_g_1_p}
{f_{ G_K}}\left( x \right)
%&= \int\nolimits_0^\infty  {{f_{\left. { G} \right|T}}\left( {\left. x \right|t} \right){f_T}\left( t \right)dt} \notag \\
%& = \frac{1}{{\Gamma \left( m \right)}}\int\nolimits_0^\infty  {{f_{\left. { G} \right|T}}\left( {\left. x \right|t} \right){t^{m - 1}}{e^{ - t}}dt} \notag\\
 &= \frac{1}{{\Gamma \left( m \right)}}\sum\limits_{{\ell_1}, \cdots ,{\ell_K} = 0}^\infty  {\int\nolimits_0^\infty  {{t^{\sum\limits_{k = 1}^K {{\ell_k}}  + m - 1}}{e^{ - t\left( {1 + \sum\limits_{k = 1}^K {\frac{{{\lambda _k}^2}}{{1 - {\lambda _k}^2}}} } \right)}}dt} \prod\limits_{k = 1}^K {\frac{1}{{{\ell_k}!}}{{{{\left( {\frac{{{\lambda _k}^2}}{{1 - {\lambda _k}^2}}} \right)}^{{\ell_k}}}}}{f_{{{ {\mathcal A}}_{\bs{\ell}}}}}(x)} } \notag\\
 %= \frac{1}{{\Gamma \left( m \right)}}\sum\limits_{{\ell _1}, \cdots ,{\ell _K} = 0}^\infty  {\frac{{\Gamma \left( {m{\rm{ + }}\sum\limits_{k = 1}^K {{\ell _k}} } \right)}}{{{{\left( {1 + \sum\limits_{k = 1}^K {\frac{{{\lambda _k}^2}}{{1 - {\lambda _k}^2}}} } \right)}^{m{\rm{ + }}\sum\limits_{k = 1}^K {{\ell _k}} }}}}\prod\limits_{k = 1}^K {\frac{{{{\left( {\frac{{{\lambda _k}^2}}{{1 - {\lambda _k}^2}}} \right)}^{{\ell _k}}}}}{{{\ell _k}!}}{f_{{{\cal A}_\ell }}}(x)} }
 &= \sum\nolimits_{{\ell_1}, \cdots ,{\ell_K} = 0}^\infty  {{W_{\bs{\ell}}}{f_{{{ {\mathcal A}}_{\bs{\ell}}}}}(x)},
 %&= \sum\limits_{{\ell_1}, \cdots ,{\ell_K} = 0}^\infty  {\underbrace {\frac{{\Gamma \left( {m + \sum\limits_{k = 1}^K {{\ell_k}} } \right)}}{{\Gamma \left( m \right)}}{{\left( {1 + \sum\limits_{k = 1}^K {\frac{{{\lambda _k}^2}}{{1 - {\lambda _k}^2}}} } \right)}^{ - m}}\prod\limits_{k = 1}^K {\frac{{{w_k}^{{\ell_k}}}}{{{\ell_k}!}}} }_{{W_{\bs{\ell}}}}{f_{{{\cal A}_{\bs{\ell}}}}}(x)},\\
 %&=\sum\limits_{{\ell_1}, \cdots ,{\ell_K} = 0}^\infty  {\underbrace {{{\left( {1 + \sum\limits_{k = 1}^K {\frac{{{\lambda _k}^2}}{{1 - {\lambda _k}^2}}} } \right)}^{ - m}}{{\left( m \right)}_{\sum\limits_{k = 1}^K {{\ell_k}} }}\prod\limits_{k = 1}^K {\frac{{{w_k}^{{\ell_k}}}}{{{\ell_k}!}}} }_{{W_{\bs{\ell}}}}{f_{{{\cal A}_{\bs{\ell}}}}}(x)} ,
\end{align}
where ${{\bs{\ell}}}$ is a vector with $K$ elements as ${{\bs{\ell}}} = (\ell_1,\cdots,\ell_K)$ and the weighting ${W_{\bs{\ell}}}$ is given as
\begin{equation}\label{eqn:W_l_def_sec}
%{W_{\bs{\ell}}} \triangleq {\left( {1 + \sum\limits_{k = 1}^K {\frac{{{\lambda _k}^2}}{{1 - {\lambda _k}^2}}} } \right)^{ - m}}{\left( m \right)_{\sum\limits_{k = 1}^K {{\ell_k}} }}\prod\limits_{k = 1}^K {\frac{{{w_k}^{{\ell_k}}}}{{{\ell_k}!}}} ,
{W_{\bs{\ell}}} \triangleq \frac{{\Gamma \left( {m + \sum\nolimits_{k = 1}^K {{\ell_k}} } \right)}}{{\Gamma \left( m \right)}}{\left( {1 + \sum\limits_{k = 1}^K {\frac{{{\lambda _k}^2}}{{1 - {\lambda _k}^2}}} } \right)^{ - m}}\prod\limits_{k = 1}^K {\frac{{{w_k}^{{\ell_k}}}}{{{\ell_k}!}}},\, {{\bs{\ell}=(\ell_1,\cdots,\ell_K)} \in {{\mathbb N}_0}^K},
\end{equation}
${w_k} = \frac{{{\lambda _k}^2}}{{1 - {\lambda _k}^2}}{\left( {1 + \sum\nolimits_{l = 1}^K {\frac{{{\lambda _l}^2}}{{1 - {\lambda _l}^2}}} } \right)^{ - 1}}$ and ${{\mathbb N}_0}^K$ denotes $K$-ary Cartesian power of natural number set. %and $(q)_n$ is the Pochhammer symbol. probability mass function (pmf) of negative multinomial distribution
By comparing (\ref{eqn:W_l_def_sec}) with \cite[Eq.13.8-1]{bishop2007discrete}, it is found that $\left\{{W_{\bs{\ell}}},{{{\bs{\ell}}} \in {{\mathbb N}_0}^K}\right\}$  are probabilities of a negative multinomial distributed vector RV ${\bs{\ell}}$, i.e., ${\bs{\ell}} \sim {\rm NM}(m,\bf w)$, where ${\bf w}=(w_1,\cdots,w_K)$. Clearly, we have $\sum\nolimits_{{\bs{\ell}} \in {{\mathbb N}_0}^K}  {{W_{\bs{\ell}}}}  = 1$. As such, ${f_{ G_K}}\left( x \right)$ is expressed as a weighted sum of the PDFs corresponding to $\left\{\mathcal A_{\bs{\ell}},{{{\bs{\ell}}} \in {{\mathbb N}_0}^K}\right\}$.%As a consequence, the PDF of $G$ shown in (\ref{eqn_pdf_f_g_1_p}) can be rephrased as a weighted sum of PDFs of products of independent shifted-Gamma RVs $\left\{\mathcal A_{\bs{\ell}},{{{\bs{\ell}}} \in {{\mathbb N}_0}^K}\right\}$.
%, such that
%\begin{equation}\label{eqn:f_G_def_sec}
%{f_G}\left( x \right) = {{\rm{E}}_{\bs{\ell}}}\left\{ {{f_{{\mathcal A_{\bs{\ell}}}}}(x)}\right\} = \sum\limits_{{{\bs{\ell}}} \in {{\mathbb N}_0}^K}  {{W_{\bs{\ell}}}{f_{{\mathcal A_{\bs{\ell}}}}}(x)}.
%\end{equation}

Based on (\ref{eqn_pdf_f_g_1_p}), the CDF of the product of multiple correlated shifted-Gamma RVs ${F_{ G_K}}\left( x \right)$ can be derived as shown in the following theorem.
\begin{theorem}\label{the:cdf_pdf_corr_gam_shif}
The CDF of $G_K = \prod\nolimits_{k = 1}^K {\left( {1 + {\gamma _k}} \right)}$ is given by
\begin{equation}\label{eqn:CDF_G_def_sec}
{F_{ G_K}}\left( x \right) = \int\nolimits_0^x {{f_{ G_K}}\left( t \right)dt}  %= {{\rm{E}}_{\bs{\ell}}}\left\{ {\int\nolimits_0^x {{f_{{{ {\mathcal A}}_{\bs{\ell}}}}}(t)dt}}\right\}= {{\rm{E}}_{\bs{\ell}}}\left\{ {{F_{{\mathcal A_{\bs{\ell}}}}}(x)}\right\} = \sum\limits_{{{\bs{\ell}}} \in {{\mathbb N}_0}^K}  {{W_{\bs{\ell}}}{F_{{{ {\cal A}}_{\bs{\ell}}}}}\left( x \right)}
=\sum\nolimits_{{\ell_1}, \cdots ,{\ell_K} = 0}^\infty  {{W_{\bs{\ell}}}\int\nolimits_0^x {{f_{{{ {\mathcal A}}_{\bs{\ell}}}}}(t)dt}}
= \sum\nolimits_{{\ell_1}, \cdots ,{\ell_K} = 0}^\infty  {{W_{\bs{\ell}}}{F_{{{ {\mathcal A}}_{\bs{\ell}}}}}(x)},
\end{equation}
where ${{F_{{{ {\mathcal A}}_{\bs{\ell}}}}}(x)}$ denotes the CDF of the product of independent shifted-Gamma RVs $\mathcal A_{\bs{\ell}} = \prod\nolimits_{k = 1}^K (1+R_{{\bs{\ell}},k})$ with $R_{{\bs{\ell}},k} \sim \mathcal {G}(m+\ell_k,{\Omega _k})$ and ${{\bs{\ell}}} = (\ell_1,\cdots,\ell_K)$. Specifically, ${{F_{{{ {\mathcal A}}_{\bs{\ell}}}}}(x)}$ can be expressed in terms of the generalized Fox's H function as
%$\mathcal {G}(m+\ell_k,{\Omega _k})$
\begin{multline}\label{eqn:CDF_F_A_def_sec}
{F_{{{ {\mathcal A}}_{\bs{\ell}}}}}(x) = Y_{1,K + 1}^{K,1}\left[ {\left. {\begin{array}{*{20}{c}}
{\left( {1,1,0,1} \right)}  \\
{\left( {1,1,\frac{1}{{{\Omega _1}}},m + {\ell_1}} \right), \cdots ,\left( {1,1,\frac{1}{{{\Omega _K}}},m + {\ell_K}} \right),\left( {0,1,0,1} \right)}
\end{array}} \right|\frac{x}{{\prod\nolimits_{k = 1}^K {{\Omega _k}} }}} \right].
\end{multline}
\end{theorem}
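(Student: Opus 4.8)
The plan is to derive \eqref{eqn:CDF_G_def_sec} directly from the PDF decomposition in \eqref{eqn_pdf_f_g_1_p}, and then obtain the closed-form expression \eqref{eqn:CDF_F_A_def_sec} for $F_{\mathcal A_{\bs \ell}}(x)$ from the Mellin-transform representation \eqref{eqn:pdf_produc_gamma_ind_sec}. For the first part, since $F_{G_K}(x)=\int_0^x f_{G_K}(t)\,dt$ and $f_{G_K}(t)=\sum_{\bs\ell\in\mathbb N_0^K} W_{\bs\ell} f_{\mathcal A_{\bs\ell}}(t)$ with $\sum_{\bs\ell} W_{\bs\ell}=1$ and all terms nonnegative, I would invoke the monotone convergence theorem (or Tonelli) to interchange the sum and the integral, giving $F_{G_K}(x)=\sum_{\bs\ell} W_{\bs\ell}\int_0^x f_{\mathcal A_{\bs\ell}}(t)\,dt=\sum_{\bs\ell} W_{\bs\ell} F_{\mathcal A_{\bs\ell}}(x)$. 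This step is essentially bookkeeping and the only subtlety is justifying the interchange, which the nonnegativity and summability of the weights handle cleanly.

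The substantive part is establishing \eqref{eqn:CDF_F_A_def_sec}. First I would note that $\mathcal A_{\bs\ell}=\prod_{k=1}^K(1+R_{{\bs\ell},k})$ with independent $R_{{\bs\ell},k}\sim\mathcal G(m+\ell_k,\Omega_k)$, so the Mellin transform of its PDF factorizes as $\widetilde f_{\mathcal A_{\bs\ell}}(s)=\prod_{k=1}^K \mathcal M\{f_{1+R_{{\bs\ell},k}}\}(s)$. The Mellin transform of the PDF of a shifted-Gamma RV $1+R$ with $R\sim\mathcal G(a,\theta)$ is $\int_0^\infty x^{s-1}\frac{(x-1)^{a-1}}{\Gamma(a)\theta^a}e^{-(x-1)/\theta}\,dx$, which after the substitution $t=(x-1)/\theta$ becomes $\frac{1}{\theta^a \Gamma(a)}\int_0^\infty (1+\theta t)^{s-1} t^{a-1} e^{-t}\,dt = \frac{\Psi(a,s+a;1/\theta)}{\theta^a}$ using the integral representation of Tricomi's function cited after \eqref{eqn:pdf_produc_gamma_ind_sec} (one rescales the integrand and matches parameters $\alpha=a$, $\gamma-\alpha-1=s-1$). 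This reproduces exactly the multiplier in \eqref{eqn:pdf_produc_gamma_ind_sec}. To get the CDF rather than the PDF, I would use the Mellin-transform identity $\mathcal M\{F\}(s)=\frac{1}{s}\mathcal M\{f\}(s+1)$ (valid when the boundary terms vanish, which holds here since $F_{\mathcal A_{\bs\ell}}(0)=0$ and $F_{\mathcal A_{\bs\ell}}$ is bounded), so $\widetilde F_{\mathcal A_{\bs\ell}}(s)=\frac{1}{s}\prod_{k=1}^K \frac{\Psi(m+\ell_k,s+1+m+\ell_k;1/\Omega_k)}{\Omega_k^{m+\ell_k}}$.

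The final step is to recognize the inverse Mellin transform of this $s$-kernel as a generalized Fox's H function. Comparing $\widetilde F_{\mathcal A_{\bs\ell}}(s)$ against the definition \eqref{eqn:def_M} of $M_{p,q}^{m,n}[s]$: the $K$ factors $\Psi(m+\ell_k,\,(m+\ell_k)+1+s;\,1/\Omega_k)$ with $B_j=1/\Omega_k$, $\phi_j=m+\ell_k$, $b_j=1$, $\beta_j=1$ match the $j=1,\dots,m$ (here $m=K$) block of denominator-type Tricomi factors, giving lower parameters $(1,1,1/\Omega_k,m+\ell_k)$; the factor $1/s=\Gamma(s)/\Gamma(s+1)$ should be matched by a pair of "degenerate" $\Psi$ factors with zero argument, i.e.\ $\Psi(\phi,\phi;0)$-type terms that collapse to Gamma functions — this is precisely the role of the two special properties of the generalized Fox's H function mentioned in Appendix~\ref{app:fox_H_fun}, which let $\Psi(\cdot,\cdot;0)$ reduce appropriately. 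I expect the main obstacle to be this careful parameter-matching: tracking the roles of $A_i,\varphi_i,a_i,\alpha_i$ versus $B_j,\phi_j,b_j,\beta_j$, fixing the contour $\mathcal L$ so that it separates the poles correctly and the integral converges (which requires checking the asymptotics of $\Psi$ along vertical lines), and verifying that the $(1,1,0,1)$ upper-parameter and $(0,1,0,1)$ lower-parameter entries in \eqref{eqn:CDF_F_A_def_sec} correctly encode the $1/s$ factor via the degenerate-argument reductions. Once the kernel is identified with $M_{1,K+1}^{K,1}[s]$ for the stated parameter tuple, \eqref{eqn_fox_h_def} yields \eqref{eqn:CDF_F_A_def_sec} and the theorem follows; the routine scaling by $\prod_k\Omega_k$ in the argument comes from the standard Mellin scaling property $\mathcal M\{f(x/a)\}(s)=a^s\mathcal M\{f\}(s)$.
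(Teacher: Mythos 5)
Your proposal is correct and follows essentially the same route as the paper's Appendix B: term-by-term integration of the weighted PDF series for the first identity, and for the second, the factorized Mellin transform with Tricomi-$\Psi$ multipliers, a $1/s$ factor from passing from PDF to CDF (the paper's $\Gamma(s)/\Gamma(s+1)$ in (37)), and the recognition that the degenerate tuples $(1,1,0,1)$ and $(0,1,0,1)$ encode that factor via the limit $\Gamma(a+s)=\lim_{u\to 0}u^{a+s}\Psi(1,1+a+s;u)$. One small caution: with the standard convention $\mathcal M\{F\}(s)=\int_0^\infty x^{s-1}F(x)\,dx$, boundedness of $F$ alone does not kill the boundary term at infinity — you need $\Re s<0$, where the identity reads $\mathcal M\{F\}(s)=-\tfrac{1}{s}\mathcal M\{f\}(s+1)$ (equivalently the paper's $+\tfrac{1}{s}$ on a contour $c_1>0$ in the $x^{+s}$ convention), which is exactly the fundamental-strip issue the paper addresses around (42)–(43).
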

\begin{proof}
Please see Appendix \ref{app:fox_H_fun}.

% holds by using the property 3 regarding generalized Fox's H Function in Appendix \ref{app:fox_H_fun}.
\end{proof}

%Therefore, it is found from Theorem \ref{the:cdf_pdf_corr_gam_shif} that the CDF of $G$ is a weighted sum of CDFs of products of independent shifted-Gamma RVs $\left\{\mathcal A_{\bs{\ell}},{{{\bs{\ell}}} \in {{\mathbb N}_0}^K}\right\}$, where the weightings are probabilities of negative multinomial distribution ${\rm NM}(m,\bf w)$.
%Accordingly, the outage probability ${p_{out,K}}$ can be obtained on the basis of Theorem \ref{the:cdf_pdf_corr_gam_shif}.
%\subsection{Outage Probability ${p_{out,K}}$ and Its Implementation}\label{sec:out_phy}
Therefore, the outage probability ${p_{out,K}}$ can be obtained by substituting (\ref{eqn:CDF_G_def_sec}) into (\ref{eqn:out_prob_rew}), such that
\begin{equation}\label{eqn:out_prob_def_hat}
{p_{out,K}} = {F_{ G}}\left( {{2^{\mathcal R}} } \right) = \sum\nolimits_{{\ell_1}, \cdots ,{\ell_K} = 0}^\infty  {{W_{\bs{\ell}}}{F_{{{ {\mathcal A}}_{\bs{\ell}}}}}(2^{\mathcal R})}.
\end{equation}
%Due to the representation of infinite series of (\ref{eqn:out_prob_def_hat}), it is impossible to accurately compute (\ref{eqn:out_prob_def_hat}).
%To enable the evaluation of (\ref{eqn:out_prob_def_hat}), ${p_{out,K}}$ can be approximately computed by truncating (\ref{eqn:out_prob_def_hat}) to a sum of finite terms, and its computational accuracy will be verified in Section \ref{sec:num_res}. Furthermore, meaningful insights can be extracted from (\ref{eqn:out_prob_def_hat}), as given in the following remark.
%\begin{remark}\label{rem:chann}
Notice that $\mathcal A_{\bs{\ell}} = \prod\nolimits_{k = 1}^K (1+R_{{\bs{\ell}},k})$, $R_{{\bs{\ell}},k}$ follows Gamma distribution as $R_{{\bs{\ell}},k} \sim \mathcal {G}(m+\ell_k,{\Omega _k})$ and $\Omega _k = \frac{{P_k}{\sigma _k}^2\left( {1 - {\lambda _k}^2} \right)}{m\mathcal N_0}$. Clearly, the random variable $R_{{\bs{\ell}},k}$ can be factorized as $R_{{\bs{\ell}},k} = \frac{P_k\left| {{h_{{{\mathcal A}_{\bs{\ell}}},k}}} \right|^2}{\mathcal N_0}$ where $\left| {{h_{{{\mathcal A}_{\bs{\ell}}},k}}} \right|$ follows Nakagami distribution as
\begin{equation}\label{eqn:channel_coeff_dec}
\left| {{h_{{{\mathcal A}_{\bs{\ell}}},k}}} \right| \sim {\rm Nakagami}\left( {m + {\ell_k},\frac{\left( {m + {\ell_k}} \right)(1-{\lambda_k}^2){\sigma_k}^2}{m}} \right),\, 1 \le k \le K.
\end{equation}
Therefore, ${F_{{{ {\cal A}}_{\bs{\ell}}}}}\left( 2^{\mathcal R} \right)$ can be regarded as the outage probability of HARQ-IR after $K$ transmissions over independent Nakagami-$m$ fading channels, where $R_{{\bs{\ell}},k}$ denotes the received SNR in the $k$th HARQ transmission and ${{h_{{{\mathcal A}_{\bs{\ell}}},k}}}$ denotes the Nakagami-$m$ fading channel coefficient in the $k$th transmission. Consequently, the outage probability ${p_{out,K}}$ can be rephrased as a weighted sum of outage probabilities of HARQ-IR over independent Nakagami fading channels where the weightings ${W_{\bs{\ell}}}$ are probabilities of the negative multinomial distribution ${\rm NM}(m,\bf w)$.
%\end{remark}

%\begin{remark}\label{fastfading}
It is worth mentioning that the outage result in (\ref{eqn:out_prob_def_hat}) is applicable to fast fading channels. Under fast fading channels, the time correlation coefficients are equal to zero,  i.e., $\bs \lambda =\bf 0$. Putting it into (\ref{eqn:W_l_def_sec}), we have $W_{\bf{0}} = 1$ and $W_{\bs{\ell}} = 0$ for all ${\bs{\ell}} \ne {\bf 0}$. Then the outage probability ${p_{out,K}}$ reduces to ${F_{{{\mathcal A}_{\bf 0}}}}({2^R})$.
%\end{remark}

\subsection{Truncation Approximation}
Considering that $p_{out,K}$ is represented by an infinite series in (\ref{eqn:out_prob_def_hat}), it is impossible to compute the exact value by adding an infinite number of terms up. To enable its computation, it is natural to truncate $p_{out,K}$ into a finite series. Towards this end, an effective truncation approach is proposed. Specifically, $p_{out,K}$ in (\ref{eqn:out_prob_def_hat}) is approximated with a truncation order $N$ as
\begin{equation}\label{eqn:F_G_x_trunc_ser}
 p_{out,K} \approx {\sum\nolimits_{{\sum\nolimits_{k=1}^K  \ell_k \le N}} {{W_{\bs{\ell}}}{F_{{ {\cal A}_{\bs{\ell}}}}}\left( 2^{\cal R} \right)} }  \triangleq \tilde p_{out,K}^N.
\end{equation}
%%Noticing the importance of the truncation order to governing the approximation accuracy, the convergence analysis with regard to $\tilde p_{out,K}^N$ is carried out next.
%
%%\subsection{Convergence Analysis of $\tilde p_{out,K}^N$}\label{sec:conv_ana}
%
The truncation error $\nabla(N)$ is characterized by the difference between $p_{out,K}$ and $\tilde p_{out,K}^N$, such that
 %The difference between $p_{out,K}$ and $\tilde p_{out,K}^N$, i.e., the truncation error $\nabla(N)$, can be expressed as
\begin{equation}\label{eqn:trun_err_prod_gam}
 \nabla(N)= p_{out,K} - \tilde p_{out,K}^N = {\sum\nolimits_{{\sum\nolimits_{k=1}^K  \ell_k \ge N+1}}  {{W_{\bs{\ell}}}{F_{{ {\cal A}_{\bs{\ell}}}}}\left( 2^{\cal R} \right)} }  \ge 0.
\end{equation}
Clearly from (\ref{eqn:trun_err_prod_gam}), $\nabla(N)$ is a monotonically decreasing function of $N$ and satisfies $\lim_{N \rightarrow \infty} \nabla(N) =0$. Moreover, since ${{F_{{ {\cal A}_{\bs{\ell}}}}}\left( 2^{\cal R} \right)} \le 1$ and ${{W_{\bs{\ell}}}} \le 1$, the truncation error $\nabla(N)$ is in fact uniformly upper bounded by ${\sum\nolimits_{{\sum\nolimits_{k=1}^K  \ell_k \ge N+1}}  {{W_{\bs{\ell}}}} }$ which is irrespective of the rate $\cal R$ and converges to zero when $N$ increases. Thus strictly speaking, the truncation approximation in (\ref{eqn:F_G_x_trunc_ser}) actually admits a uniform convergence\cite[p147]{rudin1964principles}\footnote{We say that $f_n(x)$ converges uniformly to $f(x)$ on its domain $\cal D$ if, given $\varepsilon $, there exists an integer $N(\varepsilon )$ independent of $x$, such that $|f_n(x)-f(x)|\le \varepsilon $ for all $x \in \cal D$ whenever $n \ge N(\varepsilon )$\cite[p147]{rudin1964principles}.}. This uniform convergence guarantees the truncation approximation of outage probability with high accuracy for any $\cal R$. It is different from the convergence in MSE in \cite{shi2015analysis,shi2016inverse} which does not necessarily imply that the approximation error approaches to zero for arbitrary $\cal R$ \cite[p86]{adams2013continuous}. Therefore our truncation approximation is expected to perform better than those in \cite{shi2015analysis,shi2016inverse}, which will be further demonstrated in Section \ref{sec:num_res}.

As proved in Appendix \ref{app:uperb_err}, we further notice that the truncation error $\nabla(N)$ is bounded as % given $\varepsilon > 0$ and $\cal R$, such that $\nabla \le \varepsilon$
\begin{align}\label{eqn:trun_err_fina_hyp}
\nabla(N)  %&< {{F_{{ {\cal A}_{\bs{\ell}}},N}^{\rm max}}\left( 2^{\cal R} \right)} {W_{\bf{0}}}\sum\limits_{n = 0}^\infty  {\frac{{{{\left( m \right)}_{n + N + 1}}}}{{\left( {n + N + 1} \right)!}}{{\left( {\sum\limits_{k = 1}^K {{w_k}} } \right)}^{n + N + 1}}} \notag \\
 %&= {{F_{{ {\cal A}_{\bs{\ell}}},N}^{\rm max}}\left( 2^{\cal R} \right)} {W_{\bf{0}}}{\left( {\sum\limits_{k = 1}^K {{w_k}} } \right)^{N + 1}}\frac{{{{\left( m \right)}_{N + 1}}}}{{\left( {N + 1} \right)!}}\sum\limits_{n = 0}^\infty  {\frac{{{{\left( {m + N + 1} \right)}_n}{{\left( 1 \right)}_n}}}{{{{\left( {N + 2} \right)}_n}}}\frac{{{{\left( {\sum\limits_{k = 1}^K {{w_k}} } \right)}^n}}}{{n!}}}. %\notag \\
 &\le  {W_{\bf{0}}}{{F_{{ {\cal A}_{\bs{\ell}}},N}^{\rm max}}\left( 2^{\cal R} \right)} \xi \left( N \right) \triangleq \mathcal B_u, %\underbrace {{W_{\bf{0}}}{{\left( {\sum\limits_{k = 1}^K {{w_k}} } \right)}^{N + 1}}\frac{{{{\left( m \right)}_{N + 1}}}}{{\left( {N + 1} \right)!}}{}_2{F_1}\left( {m + N + 1,1;N + 2;\sum\limits_{k = 1}^K {{w_k}} } \right)}_{\xi \left( N \right)}.
% \label{eqn:trun_err_fina_hyp1}
% &\le {W_{\bf{0}}}{\left( {\sum\limits_{k = 1}^K {{w_k}} } \right)^{N + 1}}\frac{{{{\left( m \right)}_{N + 1}}}}{{\left( {N + 1} \right)!}}{}_2{F_1}\left( {m + N + 1,1;N + 2;\sum\limits_{k = 1}^K {{w_k}} } \right) \le 1.
\end{align}
where ${{F_{{ {\cal A}_{\bs{\ell}}},N}^{\rm max}}\left( 2^{\cal R} \right)} = \mathop {\max }\nolimits_{\sum\nolimits_{k=1}^K  \ell_k = N + 1} \left( {{F_{{ {\cal A}_{\bs{\ell}}}}}\left( 2^{\cal R} \right)} \right)$, ${\xi \left( N \right)}$ is a decreasing function of $N$ as $\xi \left( N \right) ={\left( {\sum\nolimits_{k = 1}^K {{w_k}} } \right)^{N + 1}}\frac{{{{\left( m \right)}_{N + 1}}}}{{\left( {N + 1} \right)!}} {}_2{F_1}\left( {m + N + 1,1;N + 2;\sum\nolimits_{k = 1}^K {{w_k}} } \right)$, the notation $(\cdot)_n$ stands for Pochhammer symbol, and ${}_2F_1(a,b;c;z)=\sum\nolimits_{s = 0}^\infty  {\frac{{{{\left( a \right)}_s}{{\left( b \right)}_s}}}{{\Gamma \left( {c + s} \right)s!}}{z^s}} $ denotes hypergeometric function \cite[Eq.15.1.1]{olver2010nist}. For Rayleigh fading channels, i.e., $m=1$, the term $\xi \left( N \right)$ reduces to $\xi \left( N \right) = {W_{\bf{0}}}^{ - 1}{\left( {\sum\nolimits_{k = 1}^K {{w_k}} } \right)^{N + 1}}$ by using ${}_2{F_1}\left( {N + 2,1;N + 2;\sum\nolimits_{k = 1}^K {{w_k}} } \right) = {W_{\bf{0}}}^{ - 1}$ in \cite[Eq.1.40]{mathai2009h}. Together with ${{F_{{ {\cal A}_{\bs{\ell}}},N}^{\rm max}}\left( 2^{\cal R} \right)} \le 1$, the truncation error under Rayleigh fading channels is thus further bounded as $\nabla(N) \le \mathcal B_u \le {\left( {\sum\nolimits_{k = 1}^K {{w_k}} } \right)^{N + 1}}$. Since ${\sum\nolimits_{k = 1}^K {{w_k}} } < 1$, the truncation error under Rayleigh fading channels exponentially decays with $N$. It implies that the convergence speed of our truncation approach is fast and also justifies the effectiveness of our truncation approach.

In practice, the truncation order plays an important role in improving the approximation accuracy and an efficient selection of the truncation order is necessary. Specifically, for a given accuracy requirement $\varepsilon $, the minimal truncation order $N(\varepsilon)$ can be determined by setting the upper bound $\mathcal B_u$ in (\ref{eqn:trun_err_fina_hyp}) less than the maximum tolerable approximation error $\varepsilon $, such that $N(\varepsilon) = \mathop {\min } \left\{\left. N \ge 0 \right| \mathcal B_u \le \varepsilon   \right\}$.

The impact of truncation order on the approximation accuracy is then examined and the results are shown in Fig. \ref{fig:N_K}. Clearly, the truncated outage probability $\tilde p_{out,K}^N$ converges quite fast to the exact outage probability $p_{out,K}$ and the truncation order of $N=3$ is sufficient for a very good approximation of the outage probability. Plugging $N=3$ together with ${\cal R}=1$bps/Hz into (\ref{eqn:trun_err_fina_hyp}), the upper bounds $\mathcal B_u$ of four cases from I to IV can be calculated as  $7.0*10^{-5}$, $7.4*10^{-6}$, $1.4*10^{-4}$ and $3.7*10^{-8}$, respectively. They are very small and thus verify the accuracy of our truncation approach. By comparing the upper bound $\mathcal B_u$  of Case I with that of Case II, it can be observed that the increase of $K$ decreases $\mathcal B_u$ from $7.0*10^{-5}$ to $7.4*10^{-6}$, which means that the increase of $K$ would be beneficial to the reduction of the truncation error and thus improve the accuracy of our truncation approach. This can be roughly explained as follows. From the upper bound $\mathcal B_u$ in (\ref{eqn:trun_err_fina_hyp}), we can see that increasing $K$ has twofold impact on the upper bound $\mathcal B_u$. Specifically, the increase of $K$ reduces the first component ${W_{\bf{0}}}{{F_{{ {\cal A}_{\bs{\ell}}},N}^{\rm max}}\left( 2^{\cal R} \right)}$ of $\mathcal B_u$ on one hand, while increases the second component $\xi \left( N \right)$ on the other hand. Since the first component dominates particularly in a low-to-medium outage region which is the most concerned region for practical applications, the upper bound $\mathcal B_u$ can be viewed as a decreasing function of $K$.

\begin{figure}
  \centering
  % Requires \usepackage{graphicx}
  \includegraphics[width=2.5in]{./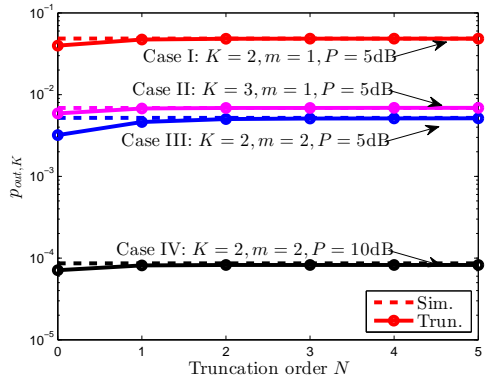}\\
  \caption{The impact of truncation order by setting $\lambda_1=\cdots=\lambda_K=0.8$, $\frac{{\sigma_1}^2}{\mathcal N_0}=\cdots=\frac{{\sigma_K}^2}{\mathcal N_0}=1$, $P_1=\cdots=P_K=P$ and $\mathcal R=1$bps/Hz.}\label{fig:N_K}%, $\upsilon_1=0.5$ and $\upsilon_2=0.1$
\end{figure}

\section{Asymptotic Outage Analysis}\label{sec:asy}
\subsection{Asymptotic Outage Probability}
Although the outage probability of HARQ-IR over time-correlated Nakagami fading channels can be exactly derived in closed-form as shown in (\ref{eqn:out_prob_def_hat}) and can be approximated as (\ref{eqn:F_G_x_trunc_ser}), they are still complex and hinder the extraction of meaningful insights. Fortunately, after analyzing the CDF ${F_{{{ {\cal A}}_{\bs{\ell}}}}}\left( x \right)$, we can find some special properties of ${F_{{{ {\cal A}}_{\bs{\ell}}}}}\left( x \right)$, which enable us to further conduct asymptotic outage analysis so that the expression of outage probability can be simplified with clear insights. To proceed with the analysis, we rewrite the transmission SNRs as
\begin{equation}\label{eqn:gamma_snr_rel}
\left( {\frac{{{P_1}}}{{{{\cal N}_0}}},\frac{{{P_2}}}{{{{\cal N}_0}}}, \cdots ,\frac{{{P_K}}}{{{{\cal N}_0}}}} \right) = \frac{{{P_{total}}}}{{{{\cal N}_0}}}\left( {\frac{{{P_1}}}{{{P_{total}}}},\frac{{{P_2}}}{{{P_{total}}}}, \cdots ,\frac{{{P_K}}}{{{P_{total}}}}} \right) = \gamma \bs \theta,
\end{equation}
where ${P_{total}} = {P_1} +  \cdots  + {P_K}$, $\gamma \triangleq \frac{{{P_{total}}}}{{{{\cal N}_0}}}$ denotes the total transmission SNR and $\bs \theta = ({\theta_1},\cdots,{\theta_K})\triangleq\left( {\frac{{{P_1}}}{{{P_{total}}}}, \cdots ,\frac{{{P_K}}}{{{P_{total}}}}} \right)$ represents the power allocation vector. With this definition of the transmission SNR, the CDF ${F_{{{ {\cal A}}_{\bs{\ell}}}}}\left( x \right)$ in (\ref{eqn:CDF_F_A_def_sec}) can be rewritten as shown in the following theorem.
\begin{theorem} \label{lemma:lemma_gener_fox_H}
The CDF ${F_{{{ {\mathcal A}}_{\bs{\ell}}}}}(x)$ can be rewritten as
\begin{multline}\label{eqn:der_A_1_CDF}
%{F_{{{\cal A}_{\bs{\ell}}}}}(x) = {\gamma ^{ - mK - \sum\limits_{k = 1}^K {{\ell_k}} }}\prod\limits_{k = 1}^K {\frac{{{{\left( {\frac{m}{{{\theta _k}{\sigma _k}^2\left( {1 - {\lambda _k}^2} \right)}}} \right)}^{m + {\ell_k}}}}}{{\Gamma \left( {m + {\ell_k}} \right)}}} \sum\limits_{{n_1}, \cdots ,{n_K} = 0}^\infty  {\prod\limits_{k = 1}^K {\frac{{{{\left( { - \frac{m}{{\gamma {\theta _k}{\sigma _k}^2\left( {1 - {\lambda _k}^2} \right)}}} \right)}^{{n_k}}}}}{{{n_k}!}}} } {g_{{\bf{n}} + {\bs{\ell}}}}\left( x \right),
{F_{{{\cal A}_{\bs{\ell}}}}}(x) = {\gamma ^{-d_{{ {\cal A}_{\bs{\ell}}}}}}\prod\nolimits_{k = 1}^K {\frac{1}{{\Gamma \left( {m + {\ell_k}} \right)}}{{\left( {\frac{m}{{{\theta _k}{\sigma _k}^2\left( {1 - {\lambda _k}^2} \right)}}} \right)}^{m + {\ell_k}}}} \\
\times \sum\nolimits_{{n_1}, \cdots ,{n_K} = 0}^\infty  {\prod\nolimits_{k = 1}^K {\frac{1}{{{n_k}!}}{{\left( { - \frac{m}{{\gamma {\theta _k}{\sigma _k}^2\left( {1 - {\lambda _k}^2} \right)}}} \right)}^{{n_k}}}} } {g_{{\bf{n}} + {\bs{\ell}}}}\left( x \right),
\end{multline}
where $d_{{ {\cal A}_{\bs{\ell}}}} = mK + \sum\nolimits_{k = 1}^K { { {\ell_k}} }$, ${\bf n} = (n_1,\cdots,n_K)$, ${{\bs{\ell}}} = (\ell_1,\cdots,\ell_K)$ and ${g_{{\bs{\ell}}}}\left( x \right)$ is given by
\begin{align}\label{eqn:sec_def_g_fun}
%{g_{{\bf{n}},{\bs{\ell}}}}\left( x \right) = \prod\limits_{k = 1}^K {\frac{1}{{\Gamma \left( {m + {\ell_k}} \right){n_k}!}}} \int\limits_0^{x - 1} { \cdots \int\limits_0^{x\prod\limits_{k = 1}^{K - 2} {{{\left( {1 + {t_k}} \right)}^{ - 1}}}  - 1} {\int\limits_0^{x\prod\limits_{k = 1}^{K - 1} {{{\left( {1 + {t_k}} \right)}^{ - 1}}}  - 1} {\prod\limits_{k = 1}^K {{t_k}^{m + {\ell_k} + {n_k} - 1}} d{t_1} \cdots d{t_{K - 1}}d{t_K}} } }.
{g_{\bs{\ell}}}\left( x \right) &= \int\nolimits_{\prod\nolimits_{k = 1}^K {\left( {1 + {t_k}} \right)}  \le x} {\prod\limits_{k = 1}^K {{t_k}^{m + {\ell_k} - 1}} d{t_1} \cdots d{t_{K - 1}}d{t_K}}\\
\label{eqn:g_0_0_der_meij_rem}
&= \prod\limits_{k = 1}^K {\Gamma \left( {m + {\ell_k}} \right)} G_{K + 1,K + 1}^{0,K + 1}\left( {\left. {\begin{array}{*{20}{c}}
{1,1 + {\ell_1} + m, \cdots ,1 + {\ell_K} + m}\\
{1, \cdots ,1,0}
\end{array}} \right|x} \right),
\end{align}
%\begin{equation}\label{eqn:g_0_0_der_meij_rem}
%{g_{\bs{\ell}}}\left( x \right) = \prod\limits_{k = 1}^K {\Gamma \left( {m + {\ell_k}} \right)} G_{K + 1,K + 1}^{0,K + 1}\left( {\left. {\begin{array}{*{20}{c}}
%{1,1 + {\ell_1} + m, \cdots ,1 + {\ell_K} + m}\\
%{1, \cdots ,1,0}
%\end{array}} \right|x} \right),%\,
% %{{\rm min}\{\bs{\ell}\}}+m > 0.
%\end{equation}
with $G_{p,q}^{m,n}\left( {\cdot|x} \right)$ denoting Meijer-G function \cite[Eq.9.301]{gradshteyn1965table}. % and the proof of (\ref{eqn:g_0_0_der_meij_rem}) is given in Appendix \ref{app:proof_g_n_l}.
\end{theorem}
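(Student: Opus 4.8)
The plan is to work directly from the representation established in Theorem~\ref{the:cdf_pdf_corr_gam_shif}, namely $\mathcal A_{\bs\ell}=\prod_{k=1}^K(1+R_{\bs\ell,k})$ with $R_{\bs\ell,k}\sim\mathcal G(m+\ell_k,\Omega_k)$ and $\Omega_k=\gamma\theta_k\sigma_k^2(1-\lambda_k^2)/m$, rather than from the Fox's~H form (\ref{eqn:CDF_F_A_def_sec}). First I would write the CDF as the $K$-fold integral
\begin{equation}
{F_{\mathcal A_{\bs\ell}}}(x) = \prod_{k=1}^K\frac{1}{\Gamma(m+\ell_k)\,\Omega_k^{m+\ell_k}}\int_{\prod_{k=1}^K(1+t_k)\le x}\prod_{k=1}^K t_k^{m+\ell_k-1}e^{-t_k/\Omega_k}\,dt_1\cdots dt_K ,
\end{equation}
and expand each exponential in its Maclaurin series $e^{-t_k/\Omega_k}=\sum_{n_k\ge0}\frac{(-1/\Omega_k)^{n_k}}{n_k!}t_k^{n_k}$. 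Interchanging the multi-indexed summation with the integral is legitimate because the integration region forces $t_k\in[0,x-1]$, so on this bounded set the product series converges uniformly and dominated convergence applies. After the interchange the inner integral is precisely $g_{{\bf n}+\bs\ell}(x)$ of (\ref{eqn:sec_def_g_fun}), since multiplying the integrand by $\prod_k t_k^{n_k}$ merely replaces $\ell_k$ by $\ell_k+n_k$. Substituting $1/\Omega_k=m/(\gamma\theta_k\sigma_k^2(1-\lambda_k^2))$, pulling out the $\gamma$-power $\gamma^{-\sum_k(m+\ell_k)}=\gamma^{-d_{\mathcal A_{\bs\ell}}}$, and collecting the remaining constants then yields (\ref{eqn:der_A_1_CDF}) verbatim.

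To establish the closed form (\ref{eqn:g_0_0_der_meij_rem}) of $g_{\bs\ell}(x)$, I would compute its Mellin transform and recognise it as a Meijer-G. The change of variables $u_k=1+t_k$ exhibits $g_{\bs\ell}$ as the distribution-function primitive $\int_0^x$ of the Mellin (multiplicative) convolution of the kernels $\phi_k(u)=(u-1)^{m+\ell_k-1}\mathbf 1_{\{u\ge1\}}$; a single Beta integral, via $u\mapsto1/v$, gives $\tilde\phi_k(s)=\Gamma(m+\ell_k)\,\Gamma(1-s-m-\ell_k)/\Gamma(1-s)$ for $\Re(s)<1-m-\ell_k$. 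Using $\widetilde{(\phi_1\ast\cdots\ast\phi_K)}(s)=\prod_k\tilde\phi_k(s)$ together with the elementary fact that if $P(x)=\int_0^x p(t)\,dt$ then $\tilde P(s)=-\tfrac1s\,\tilde p(s+1)$ (whose boundary term at infinity vanishes once $\Re(s)$ is taken sufficiently negative, e.g.\ below $-d_{\mathcal A_{\bs\ell}}$, and whose boundary term at $0$ vanishes since $g_{\bs\ell}(x)=0$ for $x\le1$), I obtain, after using $-1/s=\Gamma(-s)/\Gamma(1-s)$,
\begin{equation}
\tilde g_{\bs\ell}(s) = -\frac1s\prod_{k=1}^K\tilde\phi_k(s+1) = \prod_{k=1}^K\Gamma(m+\ell_k)\cdot\frac{\Gamma(-s)\prod_{k=1}^K\Gamma(-s-m-\ell_k)}{\Gamma(1-s)\,\Gamma(-s)^{K}} .
\end{equation}
Inverting this Mellin-Barnes integral and matching the gamma factors against the definition of $G_{p,q}^{m,n}$ identifies the integrand, after the identifications $b_j=1$ for $j\le K$, $b_{K+1}=0$, $a_1=1$ and $a_{k+1}=1+\ell_k+m$, as exactly that of $\prod_k\Gamma(m+\ell_k)\,G_{K+1,K+1}^{0,K+1}$ with the parameters displayed in (\ref{eqn:g_0_0_der_meij_rem}).

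I expect the main obstacle to be the contour-and-convergence bookkeeping in the second step: $\tilde\phi_k$ exists only on a left half-plane, the primitive identity needs the boundary term at infinity to vanish (forcing the vertical contour far to the left), and one must verify that this contour separates the poles of $\prod_k\Gamma(-s-m-\ell_k)$ from those of $\Gamma(-s)$ in the manner required by the Meijer-G definition, so that the inverse transform genuinely reproduces $g_{\bs\ell}$ — in particular, vanishing for $x<1$. By comparison, the series interchange in the first step is routine given the boundedness of the integration region, and reducing the elementary integral (\ref{eqn:sec_def_g_fun}) to a Meijer-G is then a mechanical manipulation of the Mellin-Barnes representation.
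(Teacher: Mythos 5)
Your proposal is correct, and it splits naturally into two halves that should be judged separately. The first half — writing $F_{\mathcal A_{\bs\ell}}(x)$ as a $K$-fold integral over $\prod_k(1+t_k)\le x$, Maclaurin-expanding each $e^{-t_k/\Omega_k}$, interchanging sum and integral on the bounded region, and absorbing $\prod_k t_k^{n_k}$ into the exponents to produce $g_{{\bf n}+\bs\ell}(x)$ — is exactly the paper's argument in Appendix D. The second half is where you genuinely diverge. The paper never computes the Mellin transform of $g_{\bs\ell}$ directly: it takes the Mellin--Barnes representation of $F_{\mathcal A_{\bs\ell}}$ inherited from Appendix B, expands each Tricomi function via \cite[Eq.9.210.2]{gradshteyn1965table} and each ${}_1F_1$ as $1+o(1)$ for large $\Omega_k$, reads off the coefficient of $\gamma^{-d_{\mathcal A_{\bs\ell}}}$ as the Meijer-G integral $\frac{1}{2\pi\mathrm{i}}\int\frac{\Gamma(s)}{\Gamma(s+1)}\prod_k\frac{\Gamma(s-m-\ell_k)}{\Gamma(s)}x^s\,ds$, and then deduces (\ref{eqn:g_0_0_der_meij_rem}) by matching this leading coefficient against the one in (\ref{eqn:der_A_1_CDF}). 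You instead compute $\tilde g_{\bs\ell}(s)$ from scratch: the Beta integral giving $\tilde\phi_k(s)=\Gamma(m+\ell_k)\Gamma(1-s-m-\ell_k)/\Gamma(1-s)$ is correct, the primitive rule $\tilde P(s)=-\tilde p(s+1)/s$ combined with $-1/s=\Gamma(-s)/\Gamma(1-s)$ reproduces precisely the paper's integrand after the reflection $s\mapsto-s$, and your contour placement ($\Re(s)<-d_{\mathcal A_{\bs\ell}}$ before reflection, hence to the right of all poles of $\prod_k\Gamma(s-m-\ell_k)$ after it) is consistent with the paper's choice $c_1>m+\max\{\bs\ell\}$. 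What each route buys: yours establishes (\ref{eqn:g_0_0_der_meij_rem}) as an exact, self-contained identity without invoking asymptotics in $\gamma$ or the uniqueness of coefficients in an asymptotic expansion, which is arguably cleaner logic; the paper's route recycles machinery it has already built and, as a by-product, delivers the high-SNR expansion (\ref{eqn:der_A_1_asyCDF}) that is used immediately afterwards, so it gets two results for the price of one derivation. No gap in your argument.
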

\begin{proof}
Please see Appendix \ref{app:lemma_gener_fox_H}.
\end{proof}

%Herein, it is worth noting that (\ref{eqn:sec_def_g_fun}) is rewritten as  (\ref{eqn:g_0_0_der_meij_rem}) in terms of Meijer-G function because of the low computational complexity of evaluating ${g_{{\bs{l}}}}\left( x \right)$ and the simplicity of handling its special properties. %the representation of ${g_{{\bs{l}}}}\left( x \right)$ given in (\ref{eqn:g_0_0_der_meij_rem}) can ease its computation due to the difficulty of tackling multiple-fold integral of (\ref{eqn:sec_def_g_fun}).
%%However, it is very difficult to evaluate ${g_{{\bf{0}}}}\left( x \right)$ using (\ref{eqn:sec_def_g_fun}) due to its high computational complexity of tackling multiple-fold integral.
%%Moreover, it is also very difficult to disclose the effect of $\mathcal R$ on  ${g_{{\bf{0}}}}\left( 2^{\mathcal R} \right)$ from (\ref{eqn:sec_def_g_fun}). To ease the computation of ${g_{{\bf{0}}}}\left( 2^{\mathcal R} \right)$ and enable later discussion, an alternative simple representation of ${g_{{\bs{\ell}}}(x)}$ is written as
%%\begin{theorem}\label{the:g_n_l_rep_ant}
%%As proved in Appendix \ref{app:proof_g_n_l}, ${g_{{\bf{0}}}}\left( x \right)$ can be rewritten as
%Moreover, through combining (\ref{eqn:sec_def_g_fun}) with (\ref{eqn:g_0_0_der_meij_rem}), the same analytical method can be applied into deriving the asymptotic outage probability of HARQ-IR over a variety of independent fading channels, including Nakagami-q fading, Racian fading, Weibull fading, double Rayleigh fading, etc. \cite{simon2005digital,chelli2014performance}.

From Theorem \ref{lemma:lemma_gener_fox_H}, when the transmission SNR is high, i.e., $\gamma \to \infty$, the CDF ${F_{{{ {\mathcal A}}_{\bs{\ell}}}}}(x)$ can be further expressed as
\begin{equation}\label{eqn:der_A_1_asyCDF}
{F_{{{\cal A}_{\bs{\ell}}}}}(x) = {\gamma ^{ - {d_{{\mathcal A_{\bs{\ell}}}}}}}\prod\limits_{k = 1}^K {\frac{1}{{\Gamma \left( {m + {\ell_k}} \right)}}{{\left( {\frac{m}{{{\theta _k}{\sigma _k}^2\left( {1 - {\lambda _k}^2} \right)}}} \right)}^{m + {\ell_k}}}} {g_{\bs{\ell}}}\left( x \right) + o\left( {{\gamma ^{ - {d_{{\mathcal A_{\bs{\ell}}}}}}}} \right),
\end{equation}
where $o(\cdot)$ refers to the little-O notation, and $f(\gamma) \in o(\phi(\gamma))$ provided that $\lim \limits_{\gamma \to \infty}f(\gamma)/\phi(\gamma) = 0$. Based on (\ref{eqn:der_A_1_asyCDF}), the following property of the CDF ${F_{{{ {\mathcal A}}_{\bs{\ell}}}}}(x)$ holds since $d_{{ {\cal A}_{\bs{\ell}}}} = mK + \sum\nolimits_{k = 1}^K { { {\ell_k}} } > d_{{ {\cal A}_{\bf{0}}}} = mK$ for ${\bs{\ell} \ne \bf{0}}$.
\begin{lemma}\label{the:cdf_pdf_relatoin_hat}
As $\gamma \to \infty$, the ratio of ${{F_{{ {\mathcal A}_{\bs{\ell}}}}}\left( x \right)}$ to ${{F_{{ {\mathcal A}_{\bf{0}}}}}\left( x \right)}$ satisfies
\begin{equation}\label{eqn:F_AI_F_A0_rel_hat}
\frac{{{F_{{ {\cal A}_{\bs{\ell}}}}}\left( x \right)}}{{{F_{{ {\cal A}_{\bf{0}}}}}\left( x \right)}} = o\left( 1 \right),\, {{{\bs{\ell}}} \in {{\mathbb N}_0}^K} \,{\rm and }\, {\bs{\ell} \ne \bf{0}}.
\end{equation}
%\begin{equation}\label{eqn:F_AI_F_A0_rel_hat}
%\frac{{{F_{{ {\cal A}_{{\bs{\ell}}^{\left(1\right)}}}}}\left( x \right)}}{{{F_{{ {\cal A}_{{\bs{\ell}}^{\left(2\right)}}}}}\left( x \right)}} = o\left( 1 \right),\quad \sum\limits_{k = 1}^K {\ell_k^{\left( 1 \right)}}  < \sum\limits_{k = 1}^K {\ell_k^{\left( 2 \right)}} .
%\end{equation}
%where ${{\bs{\ell}}^{\left( 1 \right)}} = \left( {\ell_1^{\left( 1 \right)},\cdots,\ell_K^{\left( 1 \right)}} \right)$ and ${{\bs{\ell}}^{\left( 2 \right)}} = \left( {\ell_1^{\left( 2 \right)},\cdots,\ell_K^{\left( 2 \right)}} \right)$. Hence, for a sufficiently large SNR, i.e., $\gamma \gg 1$, we have
%\begin{equation}\label{eqn:bound_F_A_l}
%\mathop {\max }\limits_{{\ell_1} + \cdots  + {\ell_K} \ge N+1} \left( {{F_{{ {\mathcal A}_{\bs{\ell}}}}}\left( x \right)} \right) = \mathop {\max }\limits_{{\ell_1} +  \cdots  + {\ell_K} = N+1} \left( {{F_{{ {\mathcal A}_{\bs{\ell}}}}}\left( x \right)} \right).
%\end{equation}
\end{lemma}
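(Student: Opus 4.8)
The plan is to read the claim off directly from the high-SNR expansion established in Theorem~\ref{lemma:lemma_gener_fox_H}. Introducing the shorthand
\[
c_{\bs{\ell}}(x)\triangleq\prod\nolimits_{k=1}^{K}\frac{1}{\Gamma(m+\ell_k)}\left(\frac{m}{\theta_k\sigma_k^{2}(1-\lambda_k^{2})}\right)^{m+\ell_k}g_{\bs{\ell}}(x),
\]
equation (\ref{eqn:der_A_1_asyCDF}) asserts that, for every $\bs{\ell}\in{{\mathbb N}_0}^K$,
\[
F_{\mathcal A_{\bs{\ell}}}(x)=c_{\bs{\ell}}(x)\,\gamma^{-d_{\mathcal A_{\bs{\ell}}}}+o\!\left(\gamma^{-d_{\mathcal A_{\bs{\ell}}}}\right),\qquad d_{\mathcal A_{\bs{\ell}}}=mK+\sum\nolimits_{k=1}^{K}\ell_k.
\]
The first step is simply to specialise this to $\bs{\ell}$ and to ${\bf{0}}$, giving the two leading-order behaviours $c_{\bs{\ell}}(x)\gamma^{-d_{\mathcal A_{\bs{\ell}}}}$ and $c_{{\bf{0}}}(x)\gamma^{-mK}$, and to note the strict ordering of the diversity exponents $d_{\mathcal A_{\bs{\ell}}}-d_{\mathcal A_{{\bf{0}}}}=\sum_{k=1}^{K}\ell_k\ge 1>0$ whenever $\bs{\ell}\ne{\bf{0}}$.

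The second step is to check that the denominator is non-degenerate, i.e. that $c_{{\bf{0}}}(x)$ is a finite, strictly positive constant for the range of $x$ of interest (in the outage application $x=2^{\mathcal R}>1$). Since the prefactor multiplying $g_{{\bf{0}}}(x)$ in $c_{{\bf{0}}}(x)$ is manifestly a finite positive constant, this reduces to controlling $g_{{\bf{0}}}(x)=\int_{\prod_{k}(1+t_k)\le x}\prod_{k}t_k^{m-1}\,dt_1\cdots dt_K$ from (\ref{eqn:sec_def_g_fun}): for fixed $x>1$ the integration region is nonempty with positive Lebesgue measure and is contained in the bounded box $\{0\le t_k\le x-1,\ 1\le k\le K\}$, while the integrand $\prod_k t_k^{m-1}$ is continuous and positive on its interior, so $0<g_{{\bf{0}}}(x)<\infty$; the same bounding-box estimate gives $g_{\bs{\ell}}(x)<\infty$ for every $\bs{\ell}$. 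Hence $c_{{\bf{0}}}(x)\in(0,\infty)$, so $F_{\mathcal A_{{\bf{0}}}}(x)$ is eventually bounded away from zero and the ratio in the statement is well defined for all sufficiently large $\gamma$. (If instead $x\le 1$, then $\mathcal A_{\bs{\ell}}=\prod_k(1+R_{\bs{\ell},k})\ge 1$ almost surely forces $F_{\mathcal A_{\bs{\ell}}}(x)=0$ for every $\bs{\ell}$ and there is nothing to prove; this case does not occur in the outage application.)

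The third step is the division itself: forming the quotient of the two expansions,
\[
\frac{F_{\mathcal A_{\bs{\ell}}}(x)}{F_{\mathcal A_{{\bf{0}}}}(x)}=\frac{c_{\bs{\ell}}(x)}{c_{{\bf{0}}}(x)}\,\gamma^{-\left(d_{\mathcal A_{\bs{\ell}}}-d_{\mathcal A_{{\bf{0}}}}\right)}\cdot\frac{1+o(1)}{1+o(1)},
\]
and since $c_{\bs{\ell}}(x)/c_{{\bf{0}}}(x)$ is a fixed finite number while the exponent $d_{\mathcal A_{\bs{\ell}}}-d_{\mathcal A_{{\bf{0}}}}\ge 1$ is strictly positive, the right-hand side tends to $0$ as $\gamma\to\infty$, which is exactly the claimed $o(1)$ decay. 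I do not expect a genuine obstacle here: the lemma is essentially a corollary of Theorem~\ref{lemma:lemma_gener_fox_H} once the exponent ordering is observed, and the only point needing a line of care is the non-degeneracy of the leading coefficient $c_{{\bf{0}}}(x)$ — i.e. making sure one is not dividing by an asymptotically vanishing quantity — which is disposed of by the elementary positivity/finiteness argument for $g_{{\bf{0}}}(x)$ above.
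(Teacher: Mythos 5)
Your proposal is correct and follows essentially the same route as the paper: both read the result directly off the high-SNR expansion in Theorem~\ref{lemma:lemma_gener_fox_H} and use the strict gap $d_{\mathcal A_{\bs{\ell}}}-d_{\mathcal A_{\bf{0}}}=\sum_k\ell_k\ge 1$ to conclude the ratio vanishes. Your added check that $g_{\bf{0}}(x)$ is finite and strictly positive (so the denominator is non-degenerate) is a small rigor improvement the paper leaves implicit, but it does not change the argument.
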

\begin{proof}
From (\ref{eqn:der_A_1_asyCDF}), we have
  \begin{align}\label{eqn:lemma2_proof}
\mathop {\lim }\limits_{\gamma  \to \infty } \frac{{{F_{{\mathcal A_{\bs \ell} }}}\left( x \right)}}{{{F_{{\mathcal A_{\bf{0}}}}}\left( x \right)}} %&= \mathop {\lim }\limits_{\gamma  \to \infty } \frac{{{\gamma ^{ - {d_{{{\cal A}_{\bs \ell} }}}}}\prod\limits_{k = 1}^K {\frac{1}{{\Gamma \left( {m + {\ell _k}} \right)}}{{\left( {\frac{m}{{{\theta _k}{\sigma _k}^2\left( {1 - {\lambda _k}^2} \right)}}} \right)}^{m + {\ell _k}}}} {g_{\bs \ell} }\left( x \right) + o\left( {{\gamma ^{ - {d_{{{\cal A}_{\bs \ell} }}}}}} \right)}}{{{\gamma ^{ - d_{{{\cal A}_{\bf 0} }}}}\prod\limits_{k = 1}^K {\frac{1}{{\Gamma \left( m \right)}}{{\left( {\frac{m}{{{\theta _k}{\sigma _k}^2\left( {1 - {\lambda _k}^2} \right)}}} \right)}^m}} {g_{\bf{0}}}\left( x \right) + o\left( {{\gamma ^{ - d_{{{\cal A}_{\bf 0} }}}}} \right)}} \notag\\
 &= \prod\limits_{k = 1}^K {\frac{{\Gamma \left( m \right)}}{{\Gamma \left( {m + {\ell _k}} \right)}}{{\left( {\frac{m}{{{\theta _k}{\sigma _k}^2\left( {1 - {\lambda _k}^2} \right)}}} \right)}^{{\ell _k}}}} \frac{{{g_{\bs \ell} }\left( x \right)}}{{{g_{\bf{0}}}\left( x \right)}}\mathop {\lim }\limits_{\gamma  \to \infty } {\gamma ^{d_{{{\cal A}_{\bf 0} }} - {d_{{{\cal A}_{\bs \ell} }}}}} = 0,
\end{align}
where the last equality holds because of ${d_{{{\cal A}_{\bf 0} }} - {d_{{{\cal A}_{\bs \ell} }}}} < 0$ when $\bs \ell \ne \bf 0$. Then by using the little-O notation, (\ref{eqn:F_AI_F_A0_rel_hat}) directly follows.%, we have $\frac{{{F_{{\mathcal A_{\bs \ell} }}}\left( x \right)}}{{{F_{{\mathcal A_{\bf{0}}}}}\left( x \right)}} = o(1)$.
\end{proof}
By applying Lemma \ref{the:cdf_pdf_relatoin_hat} into (\ref{eqn:out_prob_def_hat}), as $\gamma \to \infty$, the outage probability ${p_{out,K}}$ can be rewritten as \begin{align}\label{eqn:cdf_F_shif_gam_pro_asy}
{p_{out,K}} &= {W_{\bf{0}}}{F_{{{ {\mathcal A}}_{\bf{0}}}}}\left( 2^{\mathcal R} \right) + \sum\nolimits_{{\ell_1} +  \cdots  + {\ell_K} > 0}  {{W_{\bs{\ell}}}{F_{{{ {\mathcal A}}_{\bs{\ell}}}}}\left( 2^{\mathcal R} \right)} \notag \\
&= {W_{\bf{0}}}{F_{{{ {\mathcal A}}_{\bf{0}}}}}\left( 2^{\mathcal R} \right)\left( {1 + \frac{1}{{{W_{\bf{0}}}}}\sum\nolimits_{{\ell_1} + \cdots  + {\ell_K} > 0}  {{W_{\bs{\ell}}}\frac{{{F_{{{ {\mathcal A}}_{\bs{\ell}}}}}\left( 2^{\mathcal R} \right)}}{{{F_{{{ {\mathcal A}}_{\bf{0}}}}}\left( 2^{\mathcal R} \right)}}} } \right)\notag\\
& = {W_{\bf{0}}}{F_{{{ {\mathcal A}}_{\bf{0}}}}}\left( 2^{\mathcal R} \right)\left( {1 + \frac{1}{{{W_{\bf{0}}}}}\sum\nolimits_{{\ell_1} +  \cdots  + {\ell_K} > 0}  {{W_{\bs{\ell}}}o\left( 1  \right)} } \right) = {W_{\bf{0}}}{F_{{{ {\mathcal A}}_{\bf{0}}}}}\left( 2^{\mathcal R} \right)\left( {1 + o\left( 1 \right)} \right),
\end{align}
where the last equality holds since $W_{\bs{\ell}}$ is irrelevant to the transmission SNR $\gamma$ and $\sum\nolimits_{{\ell_1} + \cdots  + {\ell_K} > 0}  {{W_{\bs{\ell}}}} = 1-W_{\bf 0} < 1$. Now putting (\ref{eqn:der_A_1_asyCDF}) into (\ref{eqn:cdf_F_shif_gam_pro_asy}) and neglecting the high order terms $o\left( 1 \right)$ and $o\left( {{\gamma ^{ - {d_{{\mathcal A_{\bs{\ell}}}}}}}} \right)$ when $\gamma \to \infty$, the outage probability can be asymptotically approximated as
%From (\ref{eqn:cdf_F_shif_gam_pro_asy}), as $\gamma \to \infty$, the outage probability ${ {p_{out,K}}}$ can be asymptotically approximated by neglecting the higher order term $o\left( 1 \right)$ as
%\begin{equation}\label{eqn:asym_out}
%{ {p_{out,K}}} \approx {W_{\bf{0}}}{F_{{ {\mathcal A}_{\bf{0}}}}}\left( {{2^{\cal R}} } \right),
%\end{equation}
%which means that the outage probability under time-correlated Nakagami-$m$ fading channels can be asymptotically expressed as a weighted outage probability under independent Nakagami-$m$ fading channels, i.e., $\left| {{h_{{{\mathcal A}_{\bf 0}},k}}} \right| \sim {\rm Nakagami}\left( {m,(1-{\lambda_k}^2){\sigma_k}^2} \right), 1 \le k \le K$, and the weighting is ${W_{\bf{0}}}={\left( {1 + \sum\limits_{k = 1}^K {\frac{{{\lambda _k}^2}}{{1 - {\lambda _k}^2}}} } \right)^{ - m}}$.
%
%With (\ref{eqn:der_A_1_asyCDF}), (\ref{eqn:asym_out}) can be further simplified through neglecting higher order term $o\left( {{\gamma ^{ - {d_{{\mathcal A_{\bs{\ell}}}}}}}} \right)$ as
\begin{align}\label{eqn:der_hat_G_CDF_asy_fur}
{p_{out,K}} %&= {W_{\bf{0}}}{\gamma ^{ - mK}}\left( {{g_{\bf{0}}}\left( 2^{\cal R} \right)\prod\limits_{k = 1}^K {\frac{1}{{\Gamma \left( m \right)}}{{\left( {\frac{m}{{{\theta _k}{\sigma _k}^2\left( {1 - {\lambda _k}^2} \right)}}} \right)}^m}}  + o\left( \gamma^{-1/2} \right)} \right)\left( {1 + o\left( \gamma^{-1/2} \right)} \right) \notag\\
% &= {W_{\bf{0}}}{\gamma ^{ - mK}}{g_{\bf{0}}}\left(  2^{\cal R} \right)\prod\limits_{k = 1}^K {\frac{1}{{\Gamma \left( m \right)}}{{\left( {\frac{m}{{{\theta _k}{\sigma _k}^2\left( {1 - {\lambda _k}^2} \right)}}} \right)}^m}} \left( {1 + o\left( \gamma^{-1/2} \right)} \right).
 &\simeq  {W_{\bf{0}}}{\gamma ^{ - mK}}{g_{\bf{0}}}\left(  2^{\cal R} \right)\prod\limits_{k = 1}^K {\frac{1}{{\Gamma \left( m \right)}}{{\left( {\frac{m}{{{\theta _k}{\sigma _k}^2\left( {1 - {\lambda _k}^2} \right)}}} \right)}^m}} \triangleq {p_{out\_{asy},K}}. %\notag\\
 %&= \underbrace {\prod\limits_{k = 1}^K {\frac{1}{{\Gamma \left( m \right)}}{{\left( {\frac{m}{{{\theta _k}{\sigma _k}^2}}} \right)}^m}} }_\zeta \underbrace {{{\left( {\ell \left( {\bs \lambda ,K} \right)} \right)}^{ - m}}}_\varrho {\left( {\underbrace {{{\left( {{g_{\bf{0}}}\left( x \right)} \right)}^{ - \frac{1}{d}}}}_{\rm{C}}\gamma } \right)^{ - d}} + o\left( {{\gamma ^{ - d}}} \right),
\end{align}

Substituting (\ref{eqn:W_l_def_sec}) into (\ref{eqn:der_hat_G_CDF_asy_fur}), under high SNR regime, i.e., as $\gamma \to \infty$, the asymptotic outage probability ${p_{out\_{asy},K}}$ can be factorized as
\begin{align}\label{eqn:der_hat_G_CDF_asy_furre}
{p_{out\_{asy},K}} &= \underbrace {\prod\limits_{k = 1}^K {\frac{1}{{\Gamma \left( m \right)}}{{\left( {\frac{m}{{{\theta _k}{\sigma _k}^2}}} \right)}^m}} }_{\triangleq \zeta(\bs \theta)} \underbrace {{{\left( {\ell \left( {\bs \lambda ,K} \right)} \right)}^{ - m}}}_{\triangleq \varrho(\bs \lambda)} {\left( {\underbrace {{{\left( {{g_{\bf{0}}}\left(  2^{\cal R} \right)} \right)}^{ - \frac{1}{mK}}}}_{\triangleq {C}(\mathcal R)}\gamma } \right)^{ - mK}},
\end{align}
where $\ell \left( {{\bs{\lambda }},K} \right)$ is explicitly given as
 \begin{equation}\label{eqn:ell_def}
\ell \left( {{\bs{\lambda }},K} \right) = \left( {1 + \sum\nolimits_{k = 1}^K {\frac{{{\lambda _k}^2}}{{1 - {\lambda _k}^2}}} } \right)\prod\nolimits_{k = 1}^K {\left( {1 - {\lambda _k}^2} \right)},
\end{equation}
$\zeta (\bs \theta)$ and $\varrho(\bs \lambda)$ quantify the impacts of transmission power allocation and channel time correlation and thus are regarded as power allocation impact factor and time correlation impact factor, respectively, and $C(\mathcal R)$ is termed as coding and modulation gain since it depends on the information transmission rate $\mathcal R$ which eventually is determined by the coding rate, modulation scheme and symbol transmission rate.

%\begin{remark}\label{rem:out_asy_fin_hat}
%$\zeta$, $C$ and $d$ are given respectively as
%\begin{equation}\label{eqn:alpha}
%\zeta = {\left( {\ell \left( {\bs \lambda ,K} \right)\prod\limits_{k = 1}^K {\frac{{{\theta _k}{\sigma _k}^2}}{m}} } \right)^{ - m}},
%C = {\left( {{g_{{\bf{0}},{\bf{0}}}}({2^{\mathcal R}})} \right)^{-\frac{1}{d}}},
%d = mK.
%\end{equation}
%Therefore, it can be seen that the scaling constant $\zeta$ constitutes two parts, including time correlation of fading channels $|\bs \lambda |$ and power allocation factors ${\theta _1,\cdots,\theta _K}$. The coding gain $C$ is determined by coding rate $\mathcal R$. Additionally, for diversity order $d$, since a Nakagami-$m$ fading channel can be viewed as a set of $m$ parallel independent Rayleigh fading channels, it is readily found that full diversity can be achieved by HARQ-IR using a maximum of $K$ transmissions under time-correlated fading channels.
%%Whereas, the coding gain determines the relative horizontal shift of this curve.
%% which depends on the modulation employed and the coding rate \cite{jin2011optimal}
%\end{remark}
%From (\ref{eqn:der_hat_G_CDF_asy_furre}), it is readily found that the outage performance is influenced by time correlation, transmit powers and coding rate.
\subsection{Discussions}
Clearly from (\ref{eqn:der_hat_G_CDF_asy_furre}), the outage performance is determined by the number of transmissions, transmission power allocation, channel time correlation and information transmission rate. Their impacts will be thoroughly investigated through the analysis of the terms $\zeta(\bs \theta) $, $\varrho(\bs \lambda)$ and $C(\mathcal R)$. In addition, outage probability of HARQ-IR under quasi-static fading channels is particularly discussed as a special case of time-correlated fading channels.
\subsubsection{Diversity Order}
Diversity order indicates the number of degrees of freedom in communication systems. Roughly speaking, it is equivalent to the number of independently faded paths that a transmitted signal experiences. Specifically, it is defined as the slope of the outage probability against the transmission SNR on a log-log scale as \cite{chelli2014performance}
\begin{equation}\label{eqn:diver_order_def}
d = -\mathop {\lim }\limits_{\gamma  \to \infty } \frac{{\log \left( {{ p_{out,K}}} \right)}}{{\log \left( \gamma  \right)}}.
\end{equation}
Putting (\ref{eqn:der_hat_G_CDF_asy_furre}) into (\ref{eqn:diver_order_def}), the diversity order of HARQ-IR over time-correlated Nakagami-$m$ fading channels directly follows as $d=mK$. Noticing that a Nakagami-$m$ fading channel can be viewed as a set of $m$ parallel independent Rayleigh fading channels, the maximum number of independent fading channels in a HARQ-IR system with a maximum number of $K$ transmissions is $mK$. In other words, channel time correlation would not degrade the diversity order and full diversity can be achieved by HARQ-IR even under time-correlated fading channels. This result is consistent with those in \cite{shi2015analysis,shi2016inverse}, and demonstrates the correctness of our asymptotic outage analysis.

\subsubsection{Power Allocation Impact Factor $\zeta (\bs \theta)$}
It is clear from (\ref{eqn:der_hat_G_CDF_asy_furre}) that power allocation impact factor $\zeta(\bs \theta)$ characterizes the impact of power allocation on outage probability. Specifically, $\zeta(\bs \theta)$ is an inverse power function of the the product of power allocation factors ${\prod\nolimits_{k = 1}^K {{\theta_k}} }$. It would decrease as the product of power allocation factors $\bs \theta$ increases, which eventually results in the improvement of outage performance. With the definition of $\theta_k=\frac{{{P_k}}}{{{P_{total}}}}$, we also can conclude that the asymptotic outage probability is an inverse power function of the product of the transmission powers in all HARQ rounds. Notice that this clear quantitative relationship between the power allocation factors and the outage probability hasn't been discovered even under independent fading channels. With this quantitative relationship, optimal power allocation to achieve various objectives is enabled. Taking the energy-limited outage minimization as an example, the power allocation problem can be formulated as
\begin{equation}\label{eqn:opt_prob_simp}
\begin{array}{*{20}{cl}}
{\mathop {\rm min}\limits_{{P_1},\cdots,{P_K}} }&{p_{out,K}}\\
{{\rm{s.}}\,{\rm{t.}}}&{{\sum\nolimits_{k = 1}^K {{P_k}{p_{out,k - 1}}}} \le P_{\rm given} }\\
{}&{{P_k} \ge 0,\quad{\rm for}\quad 0 \le k \le K},\\
\end{array}
\end{equation}
where ${\sum\nolimits_{k = 1}^K {{P_k}{p_{out,k - 1}}}}$ refers to the average transmission energy normalized to the codeword length \cite{makki2013green},  $p_{out,0}=1$ and $P_{\rm given}$ is the average energy constraint. By substituting $P_k= {\theta_k} P_{total}$ and using the asymptotic outage probability ${p_{out\_{asy},k}}$ as the approximation of $p_{out,k}$, the optimization problem is reduced to the maximization of the product of power allocation factors ${\prod\nolimits_{k = 1}^K {{\theta_k}} }$ subject to certain constraints. Accordingly, (\ref{eqn:opt_prob_simp}) can be easily converted into the generalized power optimization problem in \cite[Eq.6]{chaitanya2016energy}. Karush-Khun-Tucker conditions can then be applied to derive the optimal solution in closed-form as shown in \cite[Eq.15]{chaitanya2016energy}.

\subsubsection{Time Correlation Impact Factor $\varrho(\bs \lambda)$}
The time correlation impact factor $\varrho(\bs \lambda)$ quantifies the impact of channel time correlation on the outage probability. Notice that when only one transmission is allowed, i.e., $K=1$, no time correlation is involved and we always have $\ell \left( {{\bs{\lambda }},K} \right)=1$ and $\varrho(\bs \lambda)=1$. When $K>1$, the time correlation impact factor $\varrho(\bs \lambda)$ has one property as shown in the following lemma. For notational convenience, we define a partial ordering for two vectors ${\bf x}= (x_1, x_2, \cdots, x_K), {\bf y}= (y_1, y_2, \cdots, y_K) \in \mathbb{R}^K$ as ${\bf x} \preceq \bf y$ if $x_i \le y_i, i = 1,2,\cdots,K$.
%
%
%In (\ref{eqn:der_hat_G_CDF_asy_furre}), the term $\ell \left( {{\bs{\lambda }},K} \right)$ exhibits the impact of time correlation on $\varrho(\bs \lambda)$, which consequently influences outage probability. To thoroughly examine the effect of correlation coefficients, the following remark is proved. For notational convenience, we define the partial ordering as ${\bf x} \prec _{\mathbb{R}_+^K} \bf y$ if $x_i < y_i, i = 1,2,\cdots,K$ and ${\bf x},{\bf y} \in \mathbb{R}^K$.
\begin{lemma}\label{cor:time_corr}
Given $K>1$ and two time correlation vectors $\bs{\lambda}_1 \preceq \bs{\lambda}_2$, we have
\begin{equation}\label{eqn:lam_ell_con}
1 \ge \ell \left( {{\bs{\lambda }_1},K} \right) \ge \ell \left( {{\bs{\lambda }_2},K} \right),
\end{equation}
\begin{equation} \label{eqn:varrho}
1 \le \varrho(\bs \lambda_1) \le \varrho(\bs \lambda_2).
\end{equation}
The left equalities in (\ref{eqn:lam_ell_con})-(\ref{eqn:varrho}) hold if and only if $\bs \lambda_1 = \bf 0$, while the right equalities in (\ref{eqn:lam_ell_con})-(\ref{eqn:varrho}) hold if and only if $\bs \lambda_1 = \bs \lambda_2$ .
\end{lemma}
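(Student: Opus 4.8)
The plan is to work directly with the explicit expression $\ell(\bs\lambda,K) = \bigl(1+\sum_{k=1}^K \frac{\lambda_k^2}{1-\lambda_k^2}\bigr)\prod_{k=1}^K(1-\lambda_k^2)$ from (\ref{eqn:ell_def}) and to show it is a coordinatewise decreasing function of each $\lambda_k$ on $[0,1)$. Since $\varrho(\bs\lambda) = (\ell(\bs\lambda,K))^{-m}$ with $m>0$, the monotonicity of $\varrho$ in (\ref{eqn:varrho}) follows immediately from the reversed monotonicity of $\ell$ in (\ref{eqn:lam_ell_con}), and likewise the equality cases transfer verbatim (the map $t\mapsto t^{-m}$ is strictly decreasing). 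So the whole lemma reduces to proving (\ref{eqn:lam_ell_con}).

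First I would reparametrize by $s_k \triangleq \frac{\lambda_k^2}{1-\lambda_k^2}\in[0,\infty)$, which is a strictly increasing function of $\lambda_k$ on $[0,1)$; then $1-\lambda_k^2 = \frac{1}{1+s_k}$ and $\ell$ becomes the clean rational expression $\ell = \bigl(1+\sum_{k=1}^K s_k\bigr)\big/\prod_{k=1}^K(1+s_k)$. It then suffices to show this is decreasing in each $s_j$ (and strictly so when $K>1$). Fixing all coordinates but $s_j$ and writing $A = 1+\sum_{k\ne j}s_k$ and $B=\prod_{k\ne j}(1+s_k)$, we have $\ell = \frac{A+s_j}{B(1+s_j)}$; differentiating in $s_j$ gives numerator $B(1+s_j) - (A+s_j)B = B(1-A) \le 0$, since $A = 1+\sum_{k\ne j}s_k \ge 1$, with strict inequality exactly when some $s_k>0$ for $k\ne j$, i.e. when $K>1$ and $\bs\lambda$ is not the zero vector in those coordinates. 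This establishes that $\ell$ is nonincreasing along the partial order $\preceq$ (moving from $\bs\lambda_1$ to $\bs\lambda_2$ only increases coordinates, hence only increases the $s_k$'s), giving $\ell(\bs\lambda_1,K)\ge\ell(\bs\lambda_2,K)$. The upper bound $\ell(\bs\lambda_1,K)\le 1$ is the special case $\bs\lambda_2 \to \bf 1$ handled as a limit, or more directly: at $\bs\lambda=\bf 0$ we get $\ell=1$, and since $\ell$ is nonincreasing in each $s_k$ from that point, $\ell\le 1$ everywhere.

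For the equality cases: the right inequality $\ell(\bs\lambda_1,K)=\ell(\bs\lambda_2,K)$ forces the derivative computation above to vanish along the entire segment from $\bs\lambda_1$ to $\bs\lambda_2$ in each coordinate that changes; but we just showed that derivative is strictly negative whenever $K>1$ and at least one other coordinate is positive, so the only way to have no strict decrease is for no coordinate to change at all, i.e. $\bs\lambda_1=\bs\lambda_2$. The left inequality $\ell(\bs\lambda_1,K)=1$ together with $\ell\le 1$ and strict monotonicity means $\bs\lambda_1$ must be the unique point where the maximum $1$ is attained; checking $\ell=1 \iff 1+\sum s_k = \prod(1+s_k)$, and since $\prod(1+s_k)\ge 1+\sum s_k$ with equality iff at most one $s_k$ is nonzero — wait, more carefully, iff $s_k s_l = 0$ for all $k\ne l$; combined with $K>1$ one still needs all cross terms to vanish, and pushing through (the simplest route is that the strict coordinatewise monotonicity already forces $\bs\lambda_1=\bf 0$) yields $\bs\lambda_1=\bf 0$. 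I expect the main obstacle to be nothing deep but rather bookkeeping the equality conditions cleanly — in particular being careful that "$\preceq$ with equality in $\ell$" propagates to equality of the vectors, which is where the strictness of the derivative bound for $K>1$ does the real work; the monotonicity itself is a one-line derivative computation after the $s_k$ substitution.
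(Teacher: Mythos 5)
Your main monotonicity argument is correct and is essentially the same as the paper's: the paper also proves (\ref{eqn:lam_ell_con}) by perturbing one coordinate of $\bs\lambda$ at a time and showing $\ell$ cannot increase, only it does so by direct algebraic manipulation of $\ell$ rather than via your substitution $s_k=\lambda_k^2/(1-\lambda_k^2)$ and a derivative. Your form $\ell=\bigl(1+\sum_k s_k\bigr)/\prod_k(1+s_k)$ is a cleaner way to see the same fact, since both the coordinatewise decrease and the bound $\ell\le 1$ (equivalent to $\prod_k(1+s_k)\ge 1+\sum_k s_k$) become transparent, and the reduction of (\ref{eqn:varrho}) to (\ref{eqn:lam_ell_con}) via $\varrho=\ell^{-m}$ is exactly how the paper treats it.

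The gap is in the equality cases, and you half-noticed it yourself. Your derivative numerator is $B(1-A)$ with $A=1+\sum_{k\ne j}s_k$, which vanishes identically whenever all the \emph{other} coordinates are zero; so $\ell$ is not strictly decreasing along coordinate axes, and the escape hatch you invoke ("the strict coordinatewise monotonicity already forces $\bs\lambda_1={\bf 0}$") is not available. Indeed $\ell\left((\lambda,0,\dots,0),K\right)=1$ for every $\lambda\in[0,1)$, so the left equality in (\ref{eqn:lam_ell_con}) holds for any $\bs\lambda_1$ with at most one nonzero entry, not only for $\bs\lambda_1={\bf 0}$; likewise $\bs\lambda_1={\bf 0}\preceq\bs\lambda_2=(\lambda,0,\dots,0)$ gives $\ell(\bs\lambda_1,K)=\ell(\bs\lambda_2,K)$ with $\bs\lambda_1\ne\bs\lambda_2$, so the stated right-equality condition fails too. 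The correct characterization is the one your own observation supplies ($\prod_k(1+s_k)=1+\sum_k s_k$ iff $s_ks_l=0$ for all $k\ne l$): left equality holds iff at most one coordinate of $\bs\lambda_1$ is nonzero, and right equality holds iff every coordinate that increases does so while all other coordinates of $\bs\lambda_2$ are zero. To be fair, the paper's own proof asserts the stated equality conditions without justification and suffers from the same defect; your computation is the one that actually exposes it, so you should state the corrected conditions rather than force the claimed ones.
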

\begin{proof}
Please see Appendix \ref{app:proo_cor}.
\end{proof}

It can be concluded from (\ref{eqn:der_hat_G_CDF_asy_furre}) and Lemma \ref{cor:time_corr} that although the time correlation does not affect the diversity order, the increase of time correlation coefficients would cause the increases of $\varrho(\bs \lambda)$ and the outage probability, thus resulting in the degradation of outage performance.
%
%According to the preceding analysis, even though the time correlation of fading channels does not influence the diversity order (except for quasi-static fading channels), it affects the time correlation scaling constant $\varrho(\bs \lambda)$ from (\ref{eqn:der_hat_G_CDF_asy_furre}), eventually deteriorates the outage performance.

\subsubsection{Coding and Modulation Gain $C(\mathcal R)$}
As defined in (\ref{eqn:der_hat_G_CDF_asy_furre}), $C(\mathcal R)=\left({g_{{\bf{0}}}}\left( 2^{\cal R} \right)\right)^{-\frac{1}{d}}$. For a given outage probability, the increase of $C(\mathcal R)$ can result in the reduction of the SNR $\gamma $. In other words, $C(\mathcal R)$ can quantify the amount of SNR reduction for a given outage probability under certain coding and modulation scheme. It is thus termed as coding and modulation gain \cite{goldsmith2005wireless}. After analyzing the function ${g_{{\bf{0}}}}\left( 2^{\mathcal R} \right)$, we have the following property of the coding and modulation gain.
\begin{lemma}\label{the:coding_modulation_gain}
The function ${g_{{\bf{0}}}}\left( 2^{\mathcal R} \right)$ is a monotonically increasing function of the information transmission rate $\mathcal R$, and is convex with respect to $\mathcal R$ for any fading order $m \ge 1$, and thus the coding and modulation gain $C(\mathcal R)$ is a monotonically decreasing function of the information transmission rate $\mathcal R$.
\end{lemma}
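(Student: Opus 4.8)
The plan is to work with the integral representation of $g_{\bs\ell}$ in (\ref{eqn:sec_def_g_fun}) specialized to $\bs\ell=\mathbf 0$, namely $g_{\mathbf 0}(x)=\int_{\prod_{k=1}^K(1+t_k)\le x,\ t_k\ge 0}\prod_{k=1}^K t_k^{m-1}\,dt$, rather than the Meijer-$G$ form (\ref{eqn:g_0_0_der_meij_rem}), since both monotonicity and convexity are most transparent there. For monotonicity I would substitute $x=2^{\mathcal R}$: the integration region $\{t\ge\mathbf 0:\prod_k(1+t_k)\le 2^{\mathcal R}\}$ is nested and strictly enlarging in $\mathcal R$, while the integrand $\prod_k t_k^{m-1}$ is nonnegative and positive on a set of positive measure; hence $\mathcal R\mapsto g_{\mathbf 0}(2^{\mathcal R})$ is strictly increasing. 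This disposes of the monotonicity claim (and in fact holds for all $m>0$).

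For convexity I would first linearize the constraint by the change of variables $u_k=\ln(1+t_k)$, under which $t_k^{m-1}\,dt_k=(e^{u_k}-1)^{m-1}e^{u_k}\,du_k$ and $\prod_k(1+t_k)\le 2^{\mathcal R}$ becomes $\sum_k u_k\le\rho$ with $\rho:=\mathcal R\ln 2$. Setting $\phi(\rho):=g_{\mathbf 0}(e^{\rho})$, this yields $\phi(\rho)=\int_{\{u\ge\mathbf 0:\ \sum_k u_k\le\rho\}}\prod_{k=1}^K(e^{u_k}-1)^{m-1}e^{u_k}\,du$; since $\mathcal R\mapsto\rho$ is affine, convexity of $g_{\mathbf 0}(2^{\mathcal R})$ in $\mathcal R$ is equivalent to convexity of $\phi$ in $\rho$, which I would obtain by showing $\phi'$ is nondecreasing. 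Differentiating (integrating out the last coordinate and applying the fundamental theorem of calculus, with a standard dominated-convergence justification) expresses $\phi'(\rho)$ as the integral of $f(u):=\prod_{k=1}^K(e^{u_k}-1)^{m-1}e^{u_k}$ over the moving simplex face $\{u\ge\mathbf 0:\sum_k u_k=\rho\}$; rescaling $u_k=\rho w_k$ to free the domain of $\rho$ gives $\phi'(\rho)=\rho^{K-1}\int_{\Delta}\prod_{k=1}^K(e^{\rho w_k}-1)^{m-1}e^{\rho w_k}\,dw$, where $w_K:=1-\sum_{i<K}w_i$ and $\Delta$ is the standard $(K-1)$-simplex. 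The crux is then a pointwise observation: for every fixed $w\ge\mathbf 0$ the integrand $\rho\mapsto\rho^{K-1}\prod_k(e^{\rho w_k}-1)^{m-1}e^{\rho w_k}$ is a product of nonnegative, nondecreasing functions of $\rho>0$ — and this is exactly where the hypothesis $m\ge 1$ enters, to ensure each factor $(e^{\rho w_k}-1)^{m-1}$ is nondecreasing. Consequently $\phi'$ is nondecreasing, $\phi$ is convex, and the convexity claim follows.

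Finally, by the definition in (\ref{eqn:der_hat_G_CDF_asy_furre}) one has $C(\mathcal R)=\big(g_{\mathbf 0}(2^{\mathcal R})\big)^{-1/(mK)}$ with $g_{\mathbf 0}(2^{\mathcal R})>0$ for $\mathcal R>0$ and $mK>0$; since $t\mapsto t^{-1/(mK)}$ is strictly decreasing on $(0,\infty)$, composing it with the strictly increasing map $\mathcal R\mapsto g_{\mathbf 0}(2^{\mathcal R})$ shows $C(\mathcal R)$ is strictly decreasing. The main obstacle is the convexity step: one has to hit on the logarithmic substitution that renders the domain affine in $\mathcal R$, carry out the differentiation under the integral sign together with the rescaling that detaches the domain from $\rho$, and then recognize the resulting integrand as monotone in $\rho$, with $m\ge 1$ being precisely the condition that secures this monotonicity. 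The monotonicity of $g_{\mathbf 0}$ and the transfer to $C(\mathcal R)$ are routine by comparison.
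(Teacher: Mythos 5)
Your proof is correct, and the convexity step takes a genuinely different route from the paper's. The paper stays in the Mellin--Barnes domain: each derivative in $\mathcal R$ inserts a factor $s\ln 2$ into the contour integrand of (\ref{eqn:g_0_0_der_meij_rem}), and the extra factor $s$ in the second derivative is split as $s=(s-m-\ell_1)+(m+\ell_1)$ so that, via $\Gamma(1-m-\ell_1+s)=(s-m-\ell_1)\Gamma(-m-\ell_1+s)$, the two resulting contour integrals are recognized as first derivatives of $g$-functions with shifted indices ($g_{\bs\ell_{/1}}$ or $g_{\bar{\bs\ell}}$), each positive by monotonicity; hence $g_{\bs\ell}''>0$. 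You instead work directly with the real integral (\ref{eqn:sec_def_g_fun}): the substitution $u_k=\ln(1+t_k)$ makes the domain the simplex $\sum_k u_k\le \mathcal R\ln 2$, affine in $\mathcal R$, and after differentiating and rescaling to the standard simplex, convexity reduces to the pointwise observation that $\rho\mapsto\rho^{K-1}\prod_k(e^{\rho w_k}-1)^{m-1}e^{\rho w_k}$ is a product of nonnegative nondecreasing factors precisely when $m\ge 1$. Your computation of $\phi'(\rho)$ checks out (the boundary contribution vanishes, and the Jacobian $\rho^{K-1}$ is right), and the identification of $m\ge1$ as exactly the condition making $(e^{\rho w_k}-1)^{m-1}$ nondecreasing is correct --- indeed for $m<1$ and $K=1$ one can check $\phi'$ decreases near $\rho=0$, so the hypothesis is not an artifact. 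What each approach buys: yours is elementary, self-contained, and makes the role of $m\ge1$ transparent without invoking positivity of contour integrals; the paper's stays within the Fox's H / Meijer-G calculus used throughout, handles general $\bs\ell$ with the condition $\max\{\bs\ell\}+m\ge 1$ in one stroke, and yields a recursion expressing $g''$ through first derivatives of related $g$-functions. The monotonicity of $g_{\bf 0}(2^{\mathcal R})$ and the transfer to $C(\mathcal R)=(g_{\bf 0}(2^{\mathcal R}))^{-1/(mK)}$ are handled essentially as in the paper.
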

\begin{proof}
Please see Appendix \ref{app:coding_gain}.
\end{proof}

It follows from (\ref{eqn:der_hat_G_CDF_asy_furre}) and Lemma \ref{the:coding_modulation_gain} that the asymptotic outage probability is a monotonically increasing and convex function of the information transmission rate $\mathcal R$ when $m \ge 1$. In order to achieve a desired performance, the information transmission rate should be properly chosen. Owing to the simple analytical expression in (\ref{eqn:der_hat_G_CDF_asy_furre}), the optimal rate can be easily found. Taking the maximization of the long term average throughput (LTAT) given an allowable outage constraint $\epsilon $ as an example, the rate selection problem can be formulated as
\begin{align}\label{eqn_op}
{\mathop {{\rm{max}}}\limits_{\mathcal R \in \left\{\mathcal R \in \mathbb R_+: {p_{out,K}} \le \epsilon \right\}} }&{\quad \bar {\mathcal T}  = \frac{{\mathcal R\left( {1 - {p_{out,K}}} \right)}}{{\sum\nolimits_{k = 0}^{K - 1} {{p_{out,k}}} }}},
\end{align}
where $\bar {\mathcal T}$ denotes the LTAT. Clearly, the numerator ${\mathcal R\left( {1 - {p_{out,K}}} \right)}$ and the denominator ${ \sum\nolimits_{k = 0}^{K - 1} {{p_{out,k}}} }$ in (\ref{eqn_op}) are concave and convex with respect to $\mathcal R$, respectively, while the feasible region $\left\{\mathcal R \in \mathbb R_+: {p_{out,K}} \le \epsilon \right\}$ is a convex set, when $m \ge 1$. As shown in \cite{dinkelbach1967nonlinear}, the optimization problem in (\ref{eqn_op}) is a concave fractional programming problem and the globally optimal solution can be easily found using the techniques proposed in \cite{dinkelbach1967nonlinear}.

%Therefore, Dinkelbach's algorithm proposed in \cite{dinkelbach1967nonlinear} can be adopted to solve the nonlinear fractional programming problem (\ref{eqn_op}).

\subsubsection{Quasi-Static Fading channels}

As aforementioned, the probability distribution in Section \ref{sec:exa} and the asymptotic outage probability in (\ref{eqn:der_hat_G_CDF_asy_furre}) are not applicable to the case of quasi-static fading channels with ${\boldsymbol{\lambda }} = {\bf 1}$. Particularly, under quasi-static fading channels, the channel coefficients are constant among multiple transmissions, i.e., $|h_1|=\cdots=|h_K| \triangleq |h| \sim {\rm Nakagami}(m,{\sigma}^2)$. Assuming constant transmission powers $P_1 = \cdots= P_K \triangleq P$, that is, $\theta_1 = \cdots= \theta_K \triangleq \theta$, the outage probability ${p_{out,K}}$ under quasi-static fading channels is readily obtained as
\begin{equation}\label{eqn:out_prob_quasi_static}
{p_{out,K}} = \Pr \left( {{{\log }_2}{{\left( {1 + \frac{{P}{{\left| {{h}} \right|}^2}}{\mathcal N_0}} \right)}^K} < \mathcal R} \right) = \frac{1}{{\Gamma \left( m \right)}}\Upsilon \left( {m,\frac{{m\mathcal N_0\left( {{2^{\mathcal R/K}} - 1} \right)}}{{{P}{\sigma}^2}}} \right).
\end{equation}
%Even though (\ref{eqn:joint_gamma_rho_1}) establishes the limit relationship between joint PDFs of the received SNRs under time-correlated fading channels with $|{\boldsymbol{\lambda }}| \ne {\bf 1}_K$ and under quasi-static fading channels, it is clear that the same relationship as (\ref{eqn:joint_gamma_rho_1}) does not hold for (\ref{eqn:out_prob_def_hat}) and (\ref{eqn:out_prob_quasi_static}), because the operations of limit and Mellin transform are not exchangeable.
%Undoubtedly, time correlation of fading channels, power allocation and coding rate affect asymptotic outage probability, whose impacts will be elaborated later. However, it is worth noting that the same results in Theorem \ref{rem:out_asy_fin_hat}  does not hold for quasi-static fading channels.
By applying \cite[Eq.8.354.1]{gradshteyn1965table} into (\ref{eqn:out_prob_quasi_static}), the outage probability can be rewritten as
\begin{align}\label{eqn:out_prob_quasi_asy}
{p_{out,K}}&= \frac{1}{{\Gamma \left( m \right)}}\sum\limits_{n = 0}^\infty  {\frac{{{{\left( { - 1} \right)}^n}}}{{n!\left( {m + n} \right)}}{{\left( {\frac{{m\left( {{2^{{\cal R}/K}} - 1} \right)}}{{\gamma {\theta{\sigma}^2}}}} \right)}^{m + n}}} \notag \\
&= {\left( {\frac{m}{{{\theta}{\sigma}^2}}} \right)^m}{\left( {\frac{{{{\left( {\Gamma \left( {m + 1} \right)} \right)}^{1/m}}}}{{{2^{{\cal R}/K}} - 1}}\gamma } \right)^{ - m}} + o\left( {{\gamma ^{ - m}}} \right).
\end{align}
Then the asymptotic outage probability under this quasi-static fading channels can be written as ${p_{out\_{asy},K}} = \zeta(\bs \theta) \left( C(\mathcal R \right ) \gamma)^{-m}$, where the power allocation impact factor is given as $\zeta(\bs \theta) = {\left( {\frac{m}{{{\theta{\sigma}^2}}}} \right)^{ m}}$, while the coding and modulation gain becomes $C(\mathcal R) = \frac{{{{\left( {\Gamma \left( {m + 1} \right)} \right)}^{1/m}}}}{{{2^{{\mathcal R}/K}} - 1}}$ and the diversity order reduces to $d=m$ due to the full correlation of fading channels. It means that no time diversity can be achieved from multiple transmissions under quasi-static fading channels.

\section{Numerical Results And Optimal Design} \label{sec:num_res}
The analytical results are now verified and optimal system design is discussed in this section. For illustration, we take systems with equicorrelated channels (i.e., $\bs \lambda_{\rm eq}=(\rho,\cdots,\rho)$) \cite{alouini2001sum,aalo1995performance,chen2004distribution} and unit Nakagami spread ${\sigma_1}^2=\cdots={\sigma_K}^2=1$ as examples, unless otherwise indicated. %Notice that the exact outage probability in (\ref{eqn:out_prob_def_hat}) is calculated approximately by truncating the infinite series into a finite series with $\sum\limits_{k = 1}^K {{\ell _k}} \le N$.
In the following numerical analysis, the exact outage probability is approximated by (\ref{eqn:F_G_x_trunc_ser}) with the truncation order set as $N=3$ and the involved generalized Fox's H function is efficiently calculated by the MATHEMATICA{\textregistered} program in \cite{yilmaz2010outage} with a properly chosen Mellin-Barnes contour.

%For comparison, the exponential correlation channels with correlation coefficients as $\bs \lambda_{\rm exp}=(\rho,\rho^2,\cdots,\rho^K)$ \cite{aalo1995performance,kim2011optimal} and equal power allocation (EPA) scheme with $\theta_1=\cdots=\theta_K=1/K$ are also considered.

%\subsection{Verification of Analytical Results}
\subsection{Performance Evaluation}
In Fig. \ref{fig:out_s}, the outage probability $p_{out,K}$ is plotted against the transmission SNR $\gamma$ by setting $K=4$ and $\mathcal R=4$bps/Hz. The approximated results based on  polynomial fitting technique \cite{shi2015analysis} and inverse moment matching method \cite{shi2016inverse} are also presented for comparison. Notice that polynomial fitting technique in \cite{shi2015analysis} is proposed for HARQ-IR under Rayleigh fading channels and thus its result only for Rayleigh fading (i.e., $m=1$) is shown in Fig. \ref{fig:out_s}. It is clear that the approximated results (\ref{eqn:F_G_x_trunc_ser}) coincide with the simulation results, while the asymptotic results (\ref{eqn:der_hat_G_CDF_asy_furre}) approach to the approximated/simulation results under high SNR regime. However, under the considered scenarios, neither \cite{shi2015analysis} nor \cite{shi2016inverse} can provide an accurate approximation under high SNR regime due to the fact that those proposed methods can only guarantee their convergence in MSE. In addition, the outage probabilities decrease with the increase of the transmission SNR $\gamma$ and the decreasing rate becomes larger when the fading order $m$ increases, because the diversity order is $mK$ and $p_{out\_asy,K}$ is directly proportional to $\gamma^{-mK}$. As expected, channel time correlation has a detrimental impact on outage probability. For a given fading order $m$, the outage probability curves corresponding to different correlations become parallel as $\gamma$ increases due to the same diversity order which is irrelevant to the correlation coefficient. These numerical results thus validate the results in Section \ref{sec:asy}.

%Thus this comparison further demonstrates the significance of our asymptotic outage analysis. More interestingly, it is worth emphasizing that the usage of (\ref{eqn:der_hat_G_CDF_asy_furre}) to evaluate outage probability definitely has a much lower computational complexity than those of simulation/numerical methods due to the simple form of (\ref{eqn:der_hat_G_CDF_asy_furre}). Particularly for an integer $m$, ${{g_{\bf{0}}}\left(  2^{\cal R} \right)} $ can be further simplified by using (\ref{eqn:G_meij_factor_poly}), then the evaluation of asymptotic outage probability in (\ref{eqn:der_hat_G_CDF_asy_furre}) only involves a number of basic arithmetical operations.

\begin{figure}
  \centering
  % Requires \usepackage{graphicx}
  \includegraphics[width=2.5in]{./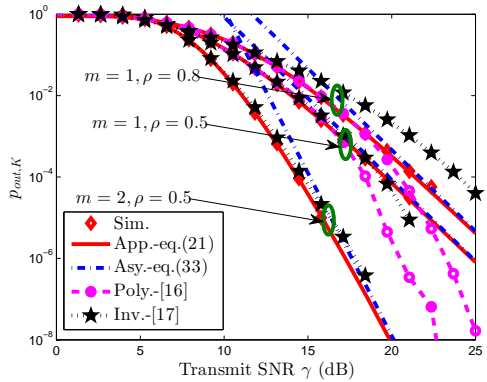}\\
  \caption{Outage probability $p_{out,K}$ versus transmission SNR $\gamma$.}\label{fig:out_s}
\end{figure}

To further investigate the effect of time correlation, Fig. \ref{fig:corr} depicts time correlation impact factors $\varrho(\bs \lambda)$ under two different channel correlation models, i.e., equal correlation and exponential correlation. For exponential correlation model, the correlation coefficients follow as $\bs \lambda_{\rm exp}=(\rho,\rho^2,\cdots,\rho^K)$ \cite{aalo1995performance,kim2011optimal}. It can be seen that the time correlation impact factor $\varrho(\bs \lambda)$ increases as $\rho$ increases, thus causing the degradation of the outage performance. The curves become steeper with the increase of $\rho$, which means the outage performance degradation would become more significant when time correlation is high. By comparing the time correlation impact factors $\varrho(\bs \lambda_{\rm eq})$ and $\varrho(\bs \lambda_{\rm exp})$, we can find that $\bs\lambda_{\rm exp} \preceq \bs\lambda_{\rm eq} $ and  $\varrho(\bs \lambda_{\rm exp}) \le \varrho(\bs \lambda_{\rm eq})$, which is consistent with Lemma \ref{cor:time_corr}. In addition, it is found that the increase of the maximum number of transmissions $K$ will lead to the increase of the time correlation impact factor $\varrho(\bs \lambda)$ no matter under equal correlation model or under exponential correlation model. This is because that $\varrho(\bs \lambda_{K}) = \varrho((\bs \lambda_K,0))$ and $(\bs \lambda_K,0) \preceq (\bs \lambda_K,\lambda_{K+1})$. It follows from Lemma \ref{cor:time_corr} that $\varrho(\bs \lambda_K) \le \varrho((\bs \lambda_K,\lambda_{K+1}))$ and the equality holds if and only if $\lambda_{K+1}=0$. However, when $K$ increases, the increase of the time correlation impact factor will be offset by the decrease of the term $\left( C(\mathcal R \right ) \gamma)^{-mK}$ and thus would not cause the degradation of the outage performance.

% Moreover, it is also found from Fig. \ref{fig:corr} that the increase of $m$ will lead to the increase of $\varrho(\bs \lambda)$, but it does not cause the increase of outage probability because the diversity order dominates $p_{out\_asy,K}$.
\begin{figure}
  \centering
  % Requires \usepackage{graphicx}
  \includegraphics[width=2.5in]{./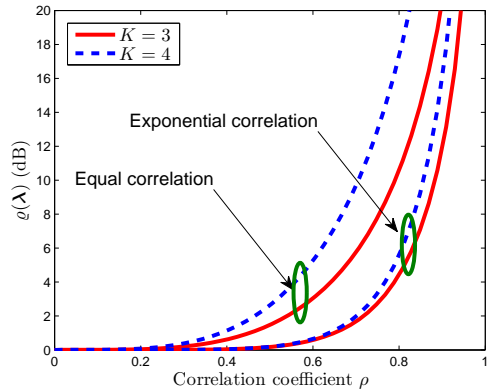}
  \caption{Time correlation impact factor with $m=2$.}\label{fig:corr}
\end{figure}

Finally, Fig. \ref{fig:coding_gain} illustrates the impacts of the information transmission rate $\cal R$ and the fading order $m$ on the coding and modulation gain $C(\cal R)$ under both time-correlated fading channels and quasi-static fading channels. Clearly, the coding and modulation gain $C(\cal R)$ decreases with the increase of the rate $\mathcal R$, which has already been proved in Lemma \ref{the:coding_modulation_gain}. Additionally, the coding and modulation gain over quasi-static fading channels is superior to the gain over time-correlated fading channels. The reason behind this is that multiple HARQ transmissions under quasi-static fading channels can be viewed as one single transmission with lower coding rate due to the same channel realization experienced, thus leading to the improvement of coding and modulation gain. However, such coding and modulation gain improvement is achieved in sacrifice of time diversity, that is, the diversity order of quasi-static fading channels reduces to $m$. Moreover, Fig. \ref{fig:coding_gain} shows that the coding and modulation gain $C(\mathcal R)$ could benefit from the increase of fading order $m$. %Particularly for quasi-static fading channels, this can be rigorously proved by applying Lyapunov's inequality and the detailed proof is skipped due to space limitation. %which can be rigorously proved by applying Lyapunov inequality to the definition of $C(\mathcal R)$. %A rigorous proof of this result is given in Appendix \ref{app:relation_c_m}.
\begin{figure}
  \centering
  % Requires \usepackage{graphicx}
  \includegraphics[width=2.5in]{./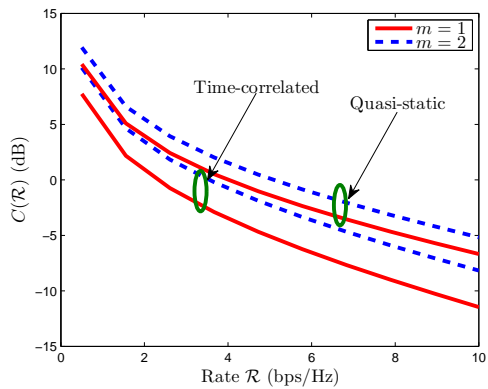}\\
  \caption{Coding and modulation gain $C(\mathcal R)$ versus information transmission rate $\mathcal R$ for systems with $K=4$.}\label{fig:coding_gain}
\end{figure}
\subsection{Optimal System Design}
With the simple expression, the asymptotic result in (\ref{eqn:der_hat_G_CDF_asy_furre}) would facilitate optimal system design for HARQ-IR with low complexity\footnote{Specifically, the asymptotic outage probability (\ref{eqn:der_hat_G_CDF_asy_furre}) can be easily computed by a few arithmetical operations. Its computational complexity is much lower than the simulation results which require the calculations on a large number of channel realizations. For example, in order to obtain the outage probability of $10^{-5}$, more than $10^6$ channel realizations are required to be simulated, which definitely causes very high computational complexity.}. Here optimal power allocation in (\ref{eqn:opt_prob_simp}) and optimal rate selection in (\ref{eqn_op}) are particularly investigated as examples.

The optimal power allocation (OPA) schemes in (\ref{eqn:opt_prob_simp}) designed based on approximated outage probability (\ref{eqn:F_G_x_trunc_ser}) and asymptotic outage probability (\ref{eqn:der_hat_G_CDF_asy_furre}) are first compared with an optimal equal power allocation (OEPA) scheme where the outage probability is minimized as (\ref{eqn:opt_prob_simp}) with additional equal power allocation constraint (i.e., $P_1=\cdots=P_K$). The achieved optimal outage probabilities $p_{out,K}^*$ for a system with $K=2$, $\mathcal R=2$bps/Hz and $\rho=0.5$ are shown in Fig. \ref{fig:ppa_epa_comp}. It is clear that the OPA solutions found based on approximated outage probability and asymptotic outage probability agree well and can lead to similar outage probability, which justifies the adoption of the asymptotic outage probability for system design. Notice that the OPA solution based on asymptotic outage probability can be easily found in closed-form as shown in \cite{chaitanya2016energy} and its computational complexity is significantly reduced compared to the optimization based on approximated outage probability which requires exhaustive search with a large number of numerical computations. Moreover, OPA scheme performs better than OEPA since the transmission powers in all HARQ rounds are optimized.

\begin{figure}
  \centering
  % Requires \usepackage{graphicx}
  \includegraphics[width=2.5in]{./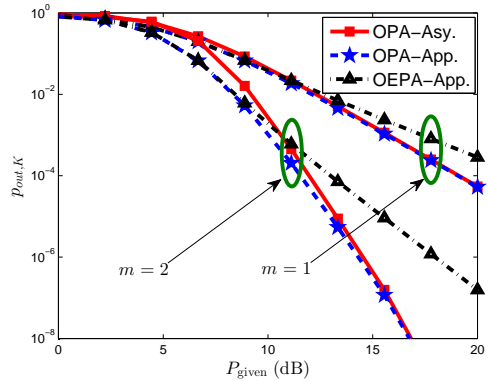}\\
  \caption{Comparison between OPA and OEPA.}\label{fig:ppa_epa_comp}
\end{figure}

Fig. \ref{fig:opt_rate} then illustrates the maximal LTAT $\bar {\mathcal T}$ achieved through optimal rate selection in (\ref{eqn_op}) by setting $m=1$ and $K=4$. Similarly, the results based on the approximated outage probability and the asymptotic outage probability match well, further validating our asymptotic results. It is also observed that the maximal LTAT increases when the outage constraint $\epsilon $ increases and/or the channel time correlation reduces.
\begin{figure}
  \centering
  % Requires \usepackage{graphicx}
  \includegraphics[width=2.5in]{./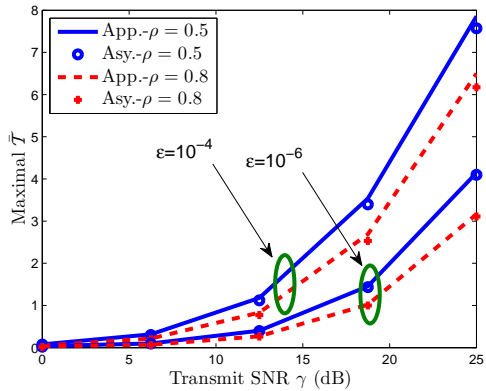}\\
  \caption{Maximal LTAT $\bar {\mathcal T}$ versus transmission SNR $\gamma$.}\label{fig:opt_rate}
\end{figure}
\section{Conclusions} \label{sec:con}
In this paper, asymptotic outage analysis has been conducted to thoroughly investigate the impacts of channel time correlation, transmit powers and information transmission rate on the performance of HARQ-IR over time correlated Nakagami fading channels. Clear insights have been discovered. Particularly, it has been revealed that the power allocation impact factor is an inverse power function of the product of power allocation factors, the time correlation impact factor is an increasing function of the channel time correlation coefficients, while the coding and modulation gain is a monotonically decreasing function of the information transmission rate. Therefore, high product of power allocation factors, low channel time correlation and low information transmission rate are favorable for improving outage performance. The simple form and special properties of asymptotic outage probability would effectively facilitate the optimal system design to achieve various objectives, e.g., optimal power allocation to minimize outage probability and optimal rate selection to maximize the LTAT.

\appendices
\section{The derivation of ${f_{\left. G_K \right|T}}(x|t)$}\label{app:phi_derivation}
The Mellin transform with respect to the conditional PDF of $G_K$ given $T=t$, ${f_{\left. G_K \right|T}}(x|t)$, can be written as
\begin{equation}\label{eqn_mellin_Rk1_def}
\left\{ {\mathcal M{f_{\left. G_K \right|T}}} \right\}\left( s \right) = {\rm E}\left\{ {\left. {{G_K^{s - 1}}} \right|T=t} \right\} \triangleq \phi \left( {\left. s \right|t} \right).
\end{equation}
Due to the independence of SNRs $\bs \gamma $ given $T=t$ and with the definition of $G_K \triangleq \prod\nolimits_{k = 1}^K {\left( {1 + {\gamma _k}} \right)}$, $\phi \left( {\left. s \right|t} \right)$ can be rewritten as
\begin{equation}\label{eqn_mellin_Rk}
\phi \left( {\left. s \right|t} \right) = \prod\nolimits_{k = 1}^K {{\rm E}\left\{ {\left. {{{{{\left(1 + \gamma_k\right)}} }^{s - 1}}} \right|t} \right\}}.
\end{equation}
Herein, with the conditional PDF ${{f_{\left. {{\gamma _k}} \right|T}}\left( {\left. {{x_k}} \right|t} \right)}$ in (\ref{eqn:cond_pdf_r_k_rew}), ${\rm E}\left\{ {\left. {{{\left( {1 + {\gamma_k}} \right)}^{s - 1}}} \right|t} \right\}$ can be derived as
\begin{equation}\label{eqn_mellin_r_1_p}
{\rm{E}}\left\{ {\left. {{{\left( {1 + {\gamma_k}} \right)}^{s - 1}}} \right|t} \right\} %= \int\nolimits_0^\infty  {{{\left( {1 + {x_k}} \right)}^{s - 1}}{f_{\left. {{\gamma_k}} \right|T}}\left( {\left. {{x_k}} \right|t} \right)d{x_k}} \\
= \frac{{{e^{ - \frac{{{u_k}{\lambda _k}^2t}}{{{\Omega _k}}}}}}}{{\Gamma \left( m \right){{\left( {{\Omega _k}} \right)}^m}}}\int\nolimits_0^\infty  {{{\left( {1 + {x_k}} \right)}^{s - 1}}{x_k}^{m - 1}{e^{ - \frac{{{x_k}}}{{{\Omega _k}}}}}{}_0{F_1}\left( {;m;{{\left( {\frac{{\sqrt {{u_k}{\lambda _k}^2t} }}{{{\Omega _k}}}} \right)}^2}{x_k}} \right)d{x_k}}.
\end{equation}
By using the series expansion of the confluent hypergeometric limit function \cite[Eq.16.2.1]{olver2010nist}, (\ref{eqn_mellin_r_1_p}) is further derived as
\begin{align}\label{eqn_expre_mellin_r_expansion}
{\rm{E}}\left\{ {\left. {{{\left( {1 + \gamma_k} \right)}^{s - 1}}} \right|t} \right\} %&= \frac{{{e^{ - \frac{{{u_k}{\lambda _k}^2t}}{{{\Omega _k}}}}}}}{{\Gamma \left( m \right){{\left( {{\Omega _k}} \right)}^m}}} \times \notag \\
%&  \int\nolimits_0^\infty  {{{\left( {1 + {x_k}} \right)}^{s - 1}}{x_k}^{m - 1}{e^{ - \frac{{{x_k}}}{{{\Omega _k}}}}} \sum\limits_{{\ell_k} = 0}^\infty  {\frac{1}{{{\ell_k}!{{\left( m \right)}_{{\ell_k}}}}}{{\left( {{{\left( {\frac{{\sqrt {{u_k}{\lambda _k}^2t} }}{{{\Omega _k}}}} \right)}^2}{x_k}} \right)}^{{\ell_k}}}} d{x_k}}\notag\\
 &= \frac{{{e^{ - \frac{{{u_k}{\lambda _k}^2t}}{{{\Omega _k}}}}}}}{{{{\left( {{\Omega _k}} \right)}^m}}}\sum\limits_{{\ell_k} = 0}^\infty  {\frac{1}{{{\ell_k}!\Gamma \left( {m + {\ell_k}} \right)}}{{{{\left( {\frac{{\sqrt {{u_k}{\lambda _k}^2t} }}{{{\Omega _k}}}} \right)}^{2{\ell_k}}}}}} \int\nolimits_0^\infty  {{{\left( {1 + {x_k}} \right)}^{s - 1}}{x_k}^{m + {\ell_k} - 1}{e^{ - \frac{{{x_k}}}{{{\Omega _k}}}}}d{x_k}} \notag \\
 &= \frac{{{e^{ - \frac{{{u_k}{\lambda _k}^2t}}{{{\Omega _k}}}}}}}{{{{\left( {{\Omega _k}} \right)}^m}}}\sum\limits_{{\ell_k} = 0}^\infty  {\frac{1}{{{\ell_k}!}}{{{{\left( {\frac{{\sqrt {{u_k}{\lambda _k}^2t} }}{{{\Omega _k}}}} \right)}^{2{\ell_k}}}}}} \Psi \left( {m + {\ell_k}, m + {\ell_k} + s;\frac{1}{{{\Omega _k}}}} \right),
\end{align}
where $\Psi \left( \cdot \right)$ denotes Tricomi's confluent hypergeometric function \cite[Eq.9.211.4]{gradshteyn1965table}. Plugging (\ref{eqn_expre_mellin_r_expansion}) into (\ref{eqn_mellin_Rk}) together with some algebraic manipulations yields
%\begin{align}\label{eqn_psi_1_p_app}
%\phi \left( {\left. s \right|t} \right) &= \prod\limits_{k = 1}^K {\frac{{{e^{ - \frac{{{u_k}{\lambda _k}^2t}}{{{\Omega _k}}}}}}}{{{{\left( {{\Omega _k}} \right)}^m}}}\sum\limits_{{\ell_k} = 0}^\infty  {\frac{{{{\left( {\frac{{\sqrt {{u_k}{\lambda _k}^2t} }}{{{\Omega _k}}}} \right)}^{2{\ell_k}}}}}{{{\ell_k}!}}} \Psi \left( {m + {\ell_k},m + {\ell_k} + s;\frac{1}{{{\Omega _k}}}} \right)}.
%\end{align}
%After some algebraic manipulations, $\phi \left( {\left. s \right|t} \right)$ can be finally derived as .
\begin{equation}\label{eqn_mellin_Rk1}
\phi \left( {\left. s \right|t} \right)
= \sum\limits_{{\ell_1}, \cdots ,{\ell_K} = 0}^\infty  {{t^{\sum\limits_{k = 1}^K {{\ell_k}} }}{e^{ - t\sum\limits_{k = 1}^K {\frac{{{\lambda _k}^2}}{{1 - {\lambda _k}^2}}} }}\prod\limits_{k = 1}^K {\frac{{{{\left( {\frac{{{\lambda _k}^2}}{{1 - {\lambda _k}^2}}} \right)}^{{\ell_k}}}}}{{{\ell_k}!{{\left( {{\Omega _k}} \right)}^{m + {\ell_k}}}}}\Psi \left( {m + {\ell_k},m + {\ell_k} + s;\frac{1}{{{\Omega _k}}}} \right)} }.
\end{equation}
Meanwhile, by using inverse Mellin transform, the conditional PDF $f_{\left. G_K \right|T}(x|t)$ can be written as \cite{debnath2010integral}
\begin{equation}\label{eqn_pdf_G_on_T}
{f_{\left. G_K \right|T}}\left( {\left. x \right|t} \right) = \left\{ {{\mathcal M^{ - 1}}\phi } \right\}\left( \left. x \right| t\right) = \frac{1}{{2\pi \rm i}}\int\nolimits_{c - {\rm i}\infty }^{c + {\rm i}\infty } {{x^{ - s}}} \phi \left( {\left. s \right|t} \right)ds,
\end{equation}
where ${\rm i} = \sqrt{-1}$. Putting (\ref{eqn_mellin_Rk1}) into (\ref{eqn_pdf_G_on_T}) finally leads to (\ref{eqn:g_hat_cond_t}).

%By putting (\ref{eqn:cond_pdf_r_k_rew}) into (\ref{eqn_mellin_Rk}), it yields

%Due to the independence of the SNRs $\bs \gamma$ together with (\ref{eqn_mellin_Rk1}), the Mellin transform with respect to ${f_{\left. { G} \right|T}}(x|t)$ is thus derived as
%\begin{equation}\label{eqn_mellin_Rk}
%\phi \left( {\left. s \right|t} \right) = \prod\limits_{k = 1}^K {{\rm E}\left\{ {\left. {{{{{\left(1 + \gamma_k\right)}} }^{s - 1}}} \right|t} \right\}},
%\end{equation}

\section{Proof of Theorem \ref{the:cdf_pdf_corr_gam_shif}} \label{app:fox_H_fun}
Based on (\ref{eqn_pdf_f_g_1_p}), the CDF of the product of multiple correlated shifted-Gamma RVs ${F_{ G_K}}\left( x \right)$ directly follows as (\ref{eqn:CDF_G_def_sec}) where the CDF of $\mathcal A_{\bs \ell}$ can be written based on (\ref{eqn:pdf_produc_gamma_ind_sec}) as
\begin{equation}\label{eqn:CDF_A_ell_def}
{F_{{{\cal A}_{\bs \ell} }}}(x) %&= \int_0^x {{f_{{{\cal A}_{\bs \ell} }}}(t)dt} \notag \\
 = \frac{1}{{\prod\nolimits_{k = 1}^K {{\Omega _k}} }}\int_0^x {Y_{0,K}^{K,0}\left[ {\left. {\begin{array}{*{20}{c}}
 - \\
{\left( {0,1,\frac{1}{{{\Omega _1}}},m + {\ell _1}} \right), \cdots ,\left( {0,1,\frac{1}{{{\Omega _K}}},m + {\ell _K}} \right)}
\end{array}} \right|\frac{t}{{\prod\nolimits_{k = 1}^K {{\Omega _k}} }}} \right]dt} .
\end{equation}
Regarding to the generalized Fox's H function in (\ref{eqn:CDF_A_ell_def}), we have the following properties.

%\subsubsection{property 1}
\begin{property}\label{prop:1}
\begin{multline}
{t^\rho }Y_{p,q}^{m,n}\left[ {\left. {\begin{array}{*{20}{c}}
{\left( {{a_1},{\alpha _1},{A_1},{\varphi _1}} \right), \cdots ,\left( {{a_p},{\alpha _p},{A_p},{\varphi _p}} \right)}\\
{\left( {{b_1},{\beta _1},{B_1},{\phi _1}} \right), \cdots ,\left( {{b_q},{\beta _q},{B_q},{\phi _q}} \right)}
\end{array}} \right|t} \right]\\
% = \frac{1}{{2\pi i}}\int_{\mathcal L} {\frac{{\prod\limits_{j = 1}^m {{B_j}^{{\phi _j} + {b_j} + {\beta _j}s - 1}\Psi \left( {{\phi _j},{\phi _j} + {b_j} + {\beta _j}s;{B_j}} \right)} }}{{\prod\limits_{i = n + 1}^p {{A_i}^{{\varphi _i} + {a_i} + {\alpha _i}s - 1}\Psi \left( {{\varphi _i},{\varphi _i} + {a_i} + {\alpha _i}s;{A_i}} \right)} }}\frac{{\prod\limits_{i = 1}^n {{A_i}^{{\varphi _i} - {a_i} - {\alpha _i}s}\Psi \left( {{\varphi _i},{\varphi _i} + 1 - {a_i} - {\alpha _i}s;{A_i}} \right)} }}{{\prod\limits_{j = m + 1}^q {{B_j}^{{\phi _j} - {b_j} - {\beta _j}s}\Psi \left( {{\phi _j},{\phi _j} + 1 - {b_j} - {\beta _j}s;{B_j}} \right)} }}{x^{ - s + \rho }}ds} \\
% = \frac{1}{{2\pi i}}\int_{\mathcal L} {\frac{{\prod\limits_{j = 1}^m {{B_j}^{{\phi _j} + {b_j} + \rho {\beta _j} + {\beta _j}s - 1}\Psi \left( {{\phi _j},{\phi _j} + {b_j} + \rho {\beta _j} + {\beta _j}s;{B_j}} \right)} }}{{\prod\limits_{i = n + 1}^p {{A_i}^{{\varphi _i} + {a_i} + \rho {\alpha _i} + {\alpha _i}s - 1}\Psi \left( {{\varphi _i},{\varphi _i} + {a_i} + \rho {\alpha _i} + {\alpha _i}s;{A_i}} \right)} }}\frac{{\prod\limits_{i = 1}^n {{A_i}^{{\varphi _i} - {a_i} - \rho {\alpha _i} - {\alpha _i}s}\Psi \left( {{\varphi _i},{\varphi _i} + 1 - {a_i} - \rho {\alpha _i} - {\alpha _i}s;{A_i}} \right)} }}{{\prod\limits_{j = m + 1}^q {{B_j}^{{\phi _j} - {b_j} - \rho {\beta _j} - {\beta _j}s}\Psi \left( {{\phi _j},{\phi _j} + 1 - {b_j} - \rho {\beta _j} - {\beta _j}s;{B_j}} \right)} }}{x^{ - s}}ds} \\
 = Y_{p,q}^{m,n}\left[ {\left. {\begin{array}{*{20}{c}}
{\left( {{a_1} + \rho {\alpha _1},{\alpha _1},{A_1},{\varphi _1}} \right), \cdots ,\left( {{a_p} + \rho {\alpha _q},{\alpha _p},{A_p},{\varphi _p}} \right)}\\
{\left( {{b_1} + \rho {\beta _1},{\beta _1},{B_1},{\phi _1}} \right), \cdots ,\left( {{b_q} + \rho {\beta _q},{\beta _q},{B_q},{\phi _q}} \right)}
\end{array}} \right|t} \right].
\end{multline}
\end{property}

%\subsubsection{property 2}
\begin{property}\label{prop:2}
\begin{multline}\label{eqn:x_1_fox_H}
Y_{p,q}^{m,n}\left[ {\left. {\begin{array}{*{20}{c}}
{\left( {{a_1},{\alpha _1},{A_1},{\varphi _1}} \right), \cdots ,\left( {{a_p},{\alpha _p},{A_p},{\varphi _p}} \right)}\\
{\left( {{b_1},{\beta _1},{B_1},{\phi _1}} \right), \cdots ,\left( {{b_q},{\beta _q},{B_q},{\phi _q}} \right)}
\end{array}} \right|t} \right]\\
% = \frac{1}{{2\pi i}}\int_{\mathcal L} {M_{p,q}^{m,n}\left[ { - s} \right]{x^s}ds}
%& = \frac{1}{{2\pi i}}\int_{\mathcal L} {\frac{{\prod\limits_{i = 1}^n {{A_i}^{{\varphi _i} - {a_i} + {\alpha _i}s}\Psi \left( {{\varphi _i},{\varphi _i} + 1 - {a_i} + {\alpha _i}s;{A_i}} \right)} }}{{\prod\limits_{j = m + 1}^q {{B_j}^{{\phi _j} - {b_j} + {\beta _j}s}\Psi \left( {{\phi _j},{\phi _j} + 1 - {b_j} + {\beta _j}s;{B_j}} \right)} }}\frac{{\prod\limits_{j = 1}^m {{B_j}^{{\phi _j} + {b_j} - {\beta _j}s - 1}\Psi \left( {{\phi _j},{\phi _j} + {b_j} - {\beta _j}s;{B_j}} \right)} }}{{\prod\limits_{i = n + 1}^p {{A_i}^{{\varphi _i} + {a_i} - {\alpha _i}s - 1}\Psi \left( {{\varphi _i},{\varphi _i} + {a_i} - {\alpha _i}s;{A_i}} \right)} }}{x^s}ds} \notag\\
 = Y_{q,p}^{n,m}\left[ {\left. {\begin{array}{*{20}{c}}
{\left( { 1 - {b_1},{\beta _1},{B_1},{\phi _1}} \right), \cdots ,\left( { 1 - {b_q},{\beta _q},{B_q},{\phi _q}} \right)}\\
{\left( { 1 - {a_1},{\alpha _1},{A_1},{\varphi _1}} \right), \cdots ,\left( { 1 - {a_p},{\alpha _p},{A_p},{\varphi _p}} \right)}
\end{array}} \right|t^{-1}} \right].
\end{multline}
\end{property}

Notice that Property \ref{prop:1} is similar to the property of Meijer-G function in \cite[Eq.9.31.5]{gradshteyn1965table} and the property of generalized upper incomplete Fox's H functions in \cite[Eq.A.10]{yilmaz2009productshifted}, while Property \ref{prop:2} is similar to the property of Meijer-G function in \cite[Eq.9.31.2]{gradshteyn1965table} and the property of generalized upper incomplete Fox's H functions in \cite[Eq.A.9]{yilmaz2009productshifted}. They can thus be proved using similar approaches in \cite{yilmaz2009productshifted} and \cite{gradshteyn1965table}.

By using Property \ref{prop:1}, (\ref{eqn:CDF_A_ell_def}) can be derived as
\begin{equation}\label{eqn:F_a_ell_step1}
{F_{{{\cal A}_{\bs \ell} }}}(x) = \int_0^x {{t^{ - 1}}Y_{0,K}^{K,0}\left[ {\left. {\begin{array}{*{20}{c}}
 - \\
{\left( {1,1,\frac{1}{{{\Omega _1}}},m + {\ell _1}} \right), \cdots ,\left( {1,1,\frac{1}{{{\Omega _K}}},m + {\ell _K}} \right)}
\end{array}} \right|\frac{t}{{\prod\nolimits_{k = 1}^K {{\Omega _k}} }}} \right]dt}.
\end{equation}

Then applying Property \ref{prop:2} into (\ref{eqn:F_a_ell_step1}) yields
\begin{equation}\label{eqn:F_a_ell_step2}
{F_{{{\cal A}_{\bs \ell} }}}(x) = \int_0^x {{t^{ - 1}}Y_{K,0}^{0,K}\left[ {\left. {\begin{array}{*{20}{c}}
{\left( {0,1,\frac{1}{{{\Omega _1}}},m + {\ell _1}} \right), \cdots ,\left( {0,1,\frac{1}{{{\Omega _K}}},m + {\ell _K}} \right)}\\
 -
\end{array}} \right|\frac{{\prod\nolimits_{k = 1}^K {{\Omega _k}} }}{t}} \right]dt}.
\end{equation}

With the definition of generalized Fox's H function \cite{yilmaz2010outage,chelli2013performance}, (\ref{eqn:F_a_ell_step2}) can be further derived as
\begin{align}\label{eqn:F_a_ell_step3}
{F_{{{\cal A}_{\bs \ell} }}}(x) &= \frac{1}{{\prod\nolimits_{k = 1}^K {{\Omega _k}} }}\frac{1}{{2\pi {\rm i}}}\int_{c_1-{\rm i}\infty}^{c_1-{\rm i}\infty} {\int_0^x {{{\left( {\frac{t}{{\prod\nolimits_{k = 1}^K {{\Omega _k}} }}} \right)}^{s - 1}}M_{K,0}^{0,K}\left( s \right)dt} ds} \notag\\
 %&= \frac{1}{{2\pi {\rm i}}}\int_{c_1-{\rm i}\infty}^{c_1-{\rm i}\infty} {\frac{1}{s}{{\left( {\frac{x}{{\prod\limits_{k = 1}^K {{\Omega _k}} }}} \right)}^s}M_{K,0}^{0,K}\left( s \right)ds} \notag\\
 &= \frac{1}{{2\pi {\rm i}}}\int_{c_1-{\rm i}\infty}^{c_1-{\rm i}\infty} {{{\left( {\frac{x}{{\prod\nolimits_{k = 1}^K {{\Omega _k}} }}} \right)}^s}\frac{{\Gamma \left( s \right)}}{{\Gamma \left( {s + 1} \right)}}M_{K,0}^{0,K}\left( s \right)ds},
\end{align}
where $M_{K,0}^{0,K}\left( s \right) = \frac{1}{{\prod\nolimits_{j = 1}^K {\Xi \left( {0,1,\frac{1}{{{\Omega _j}}},m + {\ell _j}} \right)} }}$ %$M_{K,0}^{0,K}\left( s \right)$ is given as
%\begin{equation}\label{eqn:M_def}
%M_{K,0}^{0,K}\left( s \right) = \frac{1}{{\prod\limits_{j = 1}^K {\Xi \left( {0,1,\frac{1}{{{\Omega _j}}},m + {\ell _j}} \right)} }}.
%\end{equation}
%and $\Xi (a,\alpha,A,\varphi)$ is defined as
and $\Xi \left( {a,\alpha ,A,\varphi } \right) = {A^{\varphi  + a + \alpha s - 1}}\Psi \left( {\varphi ,\varphi  + a + \alpha s;A} \right)$.
%\begin{equation}\label{eqn:Xi_def}
%\Xi \left( {a,\alpha ,A,\varphi } \right) = {A^{\varphi  + a + \alpha s - 1}}\Psi \left( {\varphi ,\varphi  + a + \alpha s;A} \right).
%\end{equation}
The function $\Xi \left( {a,\alpha ,A,\varphi } \right)$ can be rewritten by using one important property of Tricomi's confluent hypergeometric function $\Psi \left( {\alpha ,\gamma ;z} \right)$ in \cite[Eq.9.210.2]{gradshteyn1965table} as
\begin{multline}\label{eqn:Xi_rew}
\Xi \left( {a,\alpha ,A,\varphi } \right) = {A^{\varphi  + a + \alpha s - 1}}\frac{{\Gamma \left( {1 - \left( {\varphi  + a + \alpha s} \right)} \right)}}{{\Gamma \left( {1 - a - \alpha s} \right)}}{}_1{F_1}\left( {\varphi ,\varphi  + a + \alpha s;A} \right)\\
 + \frac{{\Gamma \left( {\varphi  + a + \alpha s - 1} \right)}}{{\Gamma \left( \varphi  \right)}}{}_1{F_1}\left( {1 - a - \alpha s,2 - \left( {\varphi  + a + \alpha s} \right);A} \right),
\end{multline}
where ${}_{1}F_{1}(\cdot)$ represents the confluent hypergeometric function \cite[Eq.9.210.1]{gradshteyn1965table}. Thus it follows that
\begin{equation}\label{eqn:gamma__plus_s}
\Xi \left( {a,1,0,1} \right)= \Gamma \left( {a + s} \right) = \mathop {\lim }\limits_{u \to 0} {u^{a + s}}\Psi \left( {1,1 + a + s;u} \right), \, a+s > 0.
\end{equation}

By using (\ref{eqn:gamma__plus_s}) together with the definition of generalized Fox's H function, (\ref{eqn:F_a_ell_step3}) can be further rewritten as
\begin{multline}\label{eqn:use_integral_inv_fox_H_1}
{F_{{{\cal A}_{\bs \ell} }}}(x) = \mathop {\lim }\limits_{u \to 0} \frac{1}{{2\pi {\rm i}}}\int_{c_1-{\rm i}\infty}^{c_1+{\rm i}\infty} {{{\left( {\frac{x}{{\prod\nolimits_{k = 1}^K {{\Omega _k}} }}} \right)}^s}\frac{{{u^s}\Psi \left( {1,1 + s;u} \right)}}{{{u^{1 + s}}\Psi \left( {1,2 + s;u} \right)\prod\nolimits_{j = 1}^K {\Xi \left( {0,1,\frac{1}{{{\Omega _j}}},m + {\ell _j}} \right)} }}ds} \\
 = Y_{K + 1,1}^{1,K}\left[ {\left. {\begin{array}{*{20}{c}}
{\left( {0,1,\frac{1}{{{\Omega _1}}},m + {\ell _1}} \right), \cdots ,\left( {0,1,\frac{1}{{{\Omega _K}}},m + {\ell _K}} \right),\left( {1,1,0,1} \right)}\\
{\left( {0,1,0,1} \right)}
\end{array}} \right|\frac{{\prod\nolimits_{k = 1}^K {{\Omega _k}} }}{x}} \right].
\end{multline}
Finally, applying Property \ref{prop:2} to (\ref{eqn:use_integral_inv_fox_H_1}) yields (\ref{eqn:CDF_F_A_def_sec}).

\section{Proof of (\ref{eqn:trun_err_fina_hyp})}\label{app:uperb_err}
Following from (\ref{eqn:trun_err_prod_gam}), the truncation error $\nabla(N)$ is upper bounded as
\begin{align}\label{eqn:trun_err_upp}
\nabla(N) &= \sum\nolimits_{n = N + 1}^\infty  {\sum\nolimits_{{\sum\nolimits_{k=1}^K  \ell_k = n}}  {{W_{\bs{\ell}}}{F_{{ {\cal A}_{\bs{\ell}}}}}\left( 2^{\cal R} \right)} }\notag\\
 &\le {{F_{{ {\cal A}_{\bs{\ell}}},N}^{\rm max}}\left( 2^{\cal R} \right)} \sum\nolimits_{n = N + 1}^\infty  {\sum\nolimits_{{{\sum\nolimits_{k=1}^K  \ell_k = n}}} {{W_{\bs{\ell}}}} }\notag\\
& %\notag \\
 =% {{F_{{ {\cal A}_{\bs{\ell}}},N}^{\rm max}}\left( 2^{\cal R} \right)} {W_{\bf{0}}}\sum\limits_{n = N + 1}^\infty  {{{\left( m \right)}_n}\sum\limits_{{{\sum\nolimits_{k=1}^K  \ell_k = n}}} {\prod\limits_{k = 1}^K {\frac{{{{\left( {{w_k}} \right)}^{{\ell_k}}}}}{{{\ell_k}!}}} } },
 {{F_{{ {\cal A}_{\bs{\ell}}},N}^{\rm max}}\left( 2^{\cal R} \right)} {W_{\bf{0}}}\sum\nolimits_{n = N + 1}^\infty  {\frac{{{{\left( m \right)}_n}}}{{n!}}\sum\nolimits_{{\sum\nolimits_{k=1}^K  \ell_k = n}} {n!\prod\nolimits_{k = 1}^K {\frac{{{{\left( {{w_k}} \right)}^{{\ell_k}}}}}{{{\ell_k}!}}} } },
\end{align}
where ${{F_{{ {\cal A}_{\bs{\ell}}},N}^{\rm max}}\left( 2^{\cal R} \right)} = \mathop {\max }\nolimits_{\sum\nolimits_{k=1}^K  \ell_k \ge N + 1} \left( {{F_{{ {\cal A}_{\bs{\ell}}}}}\left( 2^{\cal R} \right)} \right)$ and $(\cdot)_n$ denotes Pochhammer symbol. Herein, it is readily found that $\mathop {\max }\nolimits_{\sum\nolimits_{k=1}^K  \ell_k = N_1} \left( {{F_{{ {\cal A}_{\bs{\ell}}}}}\left( 2^{\cal R} \right)} \right) \ge \mathop {\max }\nolimits_{\sum\nolimits_{k=1}^K  \ell_k = N_2} \left( {{F_{{ {\cal A}_{\bs{\ell}}}}}\left( 2^{\cal R} \right)} \right)$ if $N_1 < N_2$ by the physical interpretation of ${{F_{{ {\cal A}_{\bs{\ell}}}} \left( 2^{\cal R} \right)}}$ in Section \ref{sec:exa}. Thus we have ${{F_{{ {\cal A}_{\bs{\ell}}},N}^{\rm max}}\left( 2^{\cal R} \right)} = \mathop {\max }\nolimits_{\sum\nolimits_{k=1}^K  \ell_k = N + 1} \left( {{F_{{ {\cal A}_{\bs{\ell}}}}}\left( 2^{\cal R} \right)} \right)$. By means of multinomial theorem \cite[Eq.26.4.9]{NIST:DLMF}, it yields
\begin{align}\label{eqn:trun_erro_mul_exp}
\nabla(N)  %&\le {{F_{{ {\cal A}_{\bs{\ell}}},N}^{\rm max}}\left( 2^{\cal R} \right)} {W_{\bf{0}}}\sum\limits_{k = N + 1}^\infty  {\frac{{{{\left( m \right)}_n}}}{{n!}}\sum\limits_{{\sum\nolimits_{k=1}^K  \ell_k = n}} {n!\prod\limits_{k = 1}^K {\frac{{{{\left( {{w_k}} \right)}^{{\ell_k}}}}}{{{\ell_k}!}}} } } \notag \\
 &\le {W_{\bf{0}}}{{F_{{ {\cal A}_{\bs{\ell}}},N}^{\rm max}}\left( 2^{\cal R} \right)} \sum\nolimits_{n = N + 1}^\infty  {\frac{{{{\left( m \right)}_n}}}{{n!}}{{\left( {\sum\nolimits_{k = 1}^K {{w_k}} } \right)}^n}}.
\end{align}
Reformulating the summation term in the right hand side of (\ref{eqn:trun_erro_mul_exp}) in terms of hypergeometric function ${}_2F_1(\cdot)$ [29,Eq.15.1.1], the truncation error $\nabla(N)$ is consequently upper bounded as (\ref{eqn:trun_err_fina_hyp}).
%\begin{align}\label{eqn:trun_err_fina_hypapp}
%\varepsilon  &\le {{F_{{ {\cal A}_{\bs{\ell}}},N}^{\rm max}}\left( 2^{\cal R} \right)} {W_{\bf{0}}}\sum\limits_{n = 0}^\infty  {\frac{{{{\left( m \right)}_{n + N + 1}}}}{{\left( {n + N + 1} \right)!}}{{\left( {\sum\limits_{k = 1}^K {{w_k}} } \right)}^{n + N + 1}}} \notag \\
% &= {{F_{{ {\cal A}_{\bs{\ell}}},N}^{\rm max}}\left( 2^{\cal R} \right)} {W_{\bf{0}}}{\left( {\sum\limits_{k = 1}^K {{w_k}} } \right)^{N + 1}}\frac{{{{\left( m \right)}_{N + 1}}}}{{\left( {N + 1} \right)!}}\sum\limits_{n = 0}^\infty  {\frac{{{{\left( {m + N + 1} \right)}_n}{{\left( 1 \right)}_n}}}{{{{\left( {N + 2} \right)}_n}}}\frac{{{{\left( {\sum\limits_{k = 1}^K {{w_k}} } \right)}^n}}}{{n!}}}. %\notag \\
% %&= {{F_{{ {\cal A}_{\bs{\ell}}},N}^{\rm max}}\left( 2^{\cal R} \right)} {W_{\bf{0}}}{\left( {\sum\limits_{k = 1}^K {{w_k}} } \right)^{N + 1}}\frac{{{{\left( m \right)}_{N + 1}}}}{{\left( {N + 1} \right)!}}{}_2{F_1}\left( {m + N + 1,1;N + 2;\sum\limits_{k = 1}^K {{w_k}} } \right).
%% \label{eqn:trun_err_fina_hyp1}
%% &\le {W_{\bf{0}}}{\left( {\sum\limits_{k = 1}^K {{w_k}} } \right)^{N + 1}}\frac{{{{\left( m \right)}_{N + 1}}}}{{\left( {N + 1} \right)!}}{}_2{F_1}\left( {m + N + 1,1;N + 2;\sum\limits_{k = 1}^K {{w_k}} } \right) \le 1.
%\end{align}
\section{Proof of Theorem \ref{lemma:lemma_gener_fox_H}}\label{app:lemma_gener_fox_H}
%\subsection{Proof of (\ref{eqn:der_A_1_CDF}) and (\ref{eqn:sec_def_g_fun})}
Recalling from Theorem \ref{the:cdf_pdf_corr_gam_shif} that ${F_{{{ {\mathcal A}}_{\bs{\ell}}}}}(x)$ is the CDF of the product of multiple independent shifted-Gamma RVs, i.e., $\mathcal A_{\bs{\ell}} = \prod\nolimits_{k = 1}^K (1+R_{{\bs{\ell}},k})$ and $R_{{\bs{\ell}},k} \sim \mathcal {G}(m+\ell_k,{\Omega _k})$, it can be expressed as
\begin{equation}\label{eqn:F_hat_A_1}
{F_{{{ {\mathcal A}}_{\bs{\ell}}}}}(x) = \Pr \left( {\prod\nolimits_{k = 1}^K {\left( {1 + {R_{{\bs{\ell}},k}}} \right)}  \le x} \right) %\notag\\
% &= \int\nolimits_0^{x - 1} { \cdots \int\nolimits_0^{x\prod\limits_{k = 1}^{K - 2} {{{\left( {1 + {t_k}} \right)}^{ - 1}}}  - 1} {\int\nolimits_0^{x\prod\limits_{k = 1}^{K - 1} {{{\left( {1 + {t_k}} \right)}^{ - 1}}}  - 1} {{f_{\bf{R}}}\left( {{r_1},{r_2}, \cdots ,{r_K}} \right)d{t_1} \cdots d{t_{K - 1}}d{t_K}} } } \notag\\
% &= \int\nolimits_0^{x - 1} { \cdots \int\limits_0^{x\prod\limits_{k = 1}^{K - 2} {{{\left( {1 + {t_k}} \right)}^{ - 1}}}  - 1} {\int\limits_0^{x\prod\limits_{k = 1}^{K - 1} {{{\left( {1 + {t_k}} \right)}^{ - 1}}}  - 1} {\prod\limits_{k = 1}^K {\frac{{{t_k}^{m + {\ell_k} - 1}{e^{ - \frac{{{t_k}}}{{{\Omega _k}}}}}}}{{{{\left( {{\Omega _k}} \right)}^{m + {\ell_k}}}\Gamma \left( {m + {\ell_k}} \right)}}} d{t_1} \cdots d{t_{K - 1}}d{t_K}} } }.
=\int\nolimits_{\prod\nolimits_{k = 1}^K {\left( {1 + {t_k}} \right)}  \le x} {\prod\nolimits_{k = 1}^K {\frac{{{t_k}^{m + {\ell_k} - 1}{e^{ - \frac{{{t_k}}}{{{\Omega _k}}}}}}}{{{{\left( {{\Omega _k}} \right)}^{m + {\ell_k}}}\Gamma \left( {m + {\ell_k}} \right)}}} d{t_1} \cdots d{t_{K - 1}}d{t_K}}.
\end{equation}
By using Maclaurin series of exponential function, ${F_{{{ {\mathcal A}}_{\bs{\ell}}}}}(x)$ in (\ref{eqn:F_hat_A_1}) can then be derived as
\begin{align}\label{eqn:F_hat_A_1_fur}
{F_{{{ {\mathcal A}}_{\bs{\ell}}}}}(x)
&= \prod\nolimits_{k = 1}^K {\frac{{\frac{1}{{{{\left( {{\Omega _k}} \right)}^{m + {\ell_k}}}}}}}{{\Gamma \left( {m + {\ell_k}} \right)}}}
%&\times \int\limits_0^{x - 1} { \cdots \int\limits_0^{x\prod\limits_{k = 1}^{K - 2} {{{\left( {1 + {t_k}} \right)}^{ - 1}}}  - 1} {\int\limits_0^{x\prod\limits_{k = 1}^{K - 1} {{{\left( {1 + {t_k}} \right)}^{ - 1}}}  - 1} {\prod\limits_{k = 1}^K {{t_k}^{m + {\ell_k} - 1}\sum\limits_{{n_1}, \cdots ,{n_K} = 0}^\infty  {\frac{{{{\left( { - \frac{{{t_k}}}{{{\Omega _k}}}} \right)}^{{n_k}}}}}{{{n_k}!}}} } d{t_1} \cdots d{t_{K - 1}}d{t_K}} } } \notag\\
 \int\nolimits_{\prod\nolimits_{k = 1}^K {\left( {1 + {t_k}} \right)}  \le x} {\prod\nolimits_{k = 1}^K {{t_k}^{m + {\ell_k} - 1}\sum\nolimits_{{n_1}, \cdots ,{n_K} = 0}^\infty  {\frac{{{{\left( { - \frac{{{t_k}}}{{{\Omega _k}}}} \right)}^{{n_k}}}}}{{{n_k}!}}} } d{t_1} \cdots d{t_{K - 1}}d{t_K}}  \notag\\
 &= \prod\nolimits_{k = 1}^K {\frac{1}{{{{\left( {{\Omega _k}} \right)}^{m + {\ell_k}}}}}} \sum\nolimits_{{n_1}, \cdots ,{n_K} = 0}^\infty  {\prod\nolimits_{k = 1}^K {\frac{1}{{\Gamma \left( {m + {\ell_k}} \right){n_k}!}}{{\left( { - \frac{1}{{{\Omega _k}}}} \right)}^{{n_k}}}} } {g_{{\bf{n}} + {\bs{\ell}}}}\left( x \right),
\end{align}
where ${\bf n} = (n_1,\cdots,n_K)$ and ${g_{{\bs{\ell}}}}\left( x \right)$ is given by (\ref{eqn:sec_def_g_fun}). After putting $\Omega _k = \frac{{P_k}{\sigma _k}^2\left( {1 - {\lambda _k}^2} \right)}{m\mathcal N_0}$ and (\ref{eqn:gamma_snr_rel}) into (\ref{eqn:F_hat_A_1_fur}), the CDF expression in (\ref{eqn:der_A_1_CDF}) directly follows.
On the other hand, ${F_{{{ {\cal A}}_{\bs{\ell}}}}}(x)$ can also be expressed in the form of Mellin-Barnes integral from (\ref{eqn:F_a_ell_step3}) as
%By expressing (\ref{eqn:use_integral_inv_fox_H_1}) in the form of Mellin-Barnes integral, it follows from (\ref{eqn:F_a_ell_step3}) that
\begin{align}\label{eqn:F_A0_brane_inte}
{F_{{{ {\cal A}}_{\bs{\ell}}}}}(x) &= \frac{1}{{2\pi {\rm{i}}}}\int\nolimits_{c_1 - {\rm{i}}\infty }^{c_1 + {\rm{i}}\infty } {\frac{{\Gamma \left( s \right)}}{{\Gamma \left( {s + 1} \right)}}\prod\nolimits_{k = 1}^K {\frac{{\Psi \left( {m ,m + \ell_k + 1 - s;\frac{1}{{{\Omega _k}}}} \right)}}{{{{\left( {{\Omega _k}} \right)}^{m+\ell_k}}}}} {x^s}ds} \notag \\
%&=\frac{1}{{2\pi {\rm{i}}}}\int_{{c_2} - {\rm{i}}\infty }^{{c_2} + {\rm{i}}\infty } {\left\{ {{\cal M}{F_{{\mathcal A_{\bs \ell} }}}} \right\}\left( s \right){x^{ - s}}ds}.
&=\frac{1}{{2\pi {\rm{i}}}}\int_{{c_2} - {\rm{i}}\infty }^{{c_2} + {\rm{i}}\infty } {\underbrace {\frac{{\Gamma \left( { - s} \right)}}{{\Gamma \left( { - s + 1} \right)}}\prod\nolimits_{k = 1}^K {\frac{{\Psi \left( {m,m + {\ell _k} + 1 + s;\frac{1}{{{\Omega _k}}}} \right)}}{{{{\left( {{\Omega _k}} \right)}^{m + {\ell _k}}}}}} }_{\left\{ {\mathcal M{F_{{{\cal A}_{\bs \ell} }}}} \right\}\left( s \right)}{x^{ - s}}ds}.
%&= \frac{1}{{2\pi {\rm{i}}}}\int\limits_{c_1 - {\rm{i}}\infty }^{c_1 + {\rm{i}}\infty } {\frac{{\Gamma \left( s \right)}}{{\Gamma \left( {s + 1} \right)}}\prod\limits_{k = 1}^K {{{\left( {\frac{1}{{{\Omega _k}}}} \right)}^{m - s}}\Psi \left( {m ,m + 1 - s;\frac{1}{{{\Omega _k}}}} \right)} {{\left( {\frac{x}{{\prod\limits_{k = 1}^K {{\Omega _k}} }}} \right)}^s}ds} \notag \\
\end{align}
where
$c_2 = -c_1$ and the fundamental strip\footnote{$ \left\{ {\mathcal M{{F_{{{ {\cal A}}_{\bs{\ell}}}}}}} \right\}\left( s \right)$ exists for any complex number $s$ in the fundamental strip.} of $ \left\{ {\mathcal M{{F_{{{ {\cal A}}_{\bs{\ell}}}}}}} \right\}\left( s \right)$ implies $c_2 \in (-\infty,0)$ because $ {F_{{\mathcal A_{\bs \ell} }}}\left( x \right)$ admits\cite[p400]{szpankowski2010average}
 \begin{equation}\label{eqn:fundamental_steip_F_al}
 {F_{{\mathcal A_{\bs \ell} }}}\left( x \right) = \left\{ {\begin{array}{*{20}{c}}
{\mathcal O\left( {{x^\infty }} \right)}&{x \to 0}\\
{\mathcal O\left( {{x^0}} \right)}&{x \to \infty }
\end{array}} \right.
 \end{equation}
 where $\mathcal O(\cdot)$ denotes the big O notation and the first equation of (\ref{eqn:fundamental_steip_F_al}) holds because of ${F_{{\mathcal A_{\bs \ell} }}}\left( x \right) = 0$ for $x \le 1$. Thus $c_1 \in (0,\infty)$.

By adopting \cite[Eq.9.210.2]{gradshteyn1965table} into (\ref{eqn:F_A0_brane_inte}), it yields
\begin{multline}\label{eqn:F_hat_A_0_der_fur_ex}
{F_{{{ {\cal A}}_{\bs{\ell}}}}}(x) = \frac{1}{{2\pi {\rm{i}}}}\int\nolimits_{c_1 - {\rm{i}}\infty }^{c_1 + {\rm{i}}\infty } {\frac{{\Gamma \left( s \right)}}{{\Gamma \left( {s + 1} \right)}} } \\
\times \prod\limits_{k = 1}^K {\frac{1}{{{{\left( {{\Omega _k}} \right)}^{m+\ell_k }}}}}\left( {\begin{array}{*{20}{l}}
{\frac{{\Gamma \left( { - m -\ell_k+ s} \right)}}{{\Gamma \left( s \right)}}{}_1{F_1}\left( {m+\ell_k ,m+\ell_k + 1 - s;\frac{1}{{{\Omega _k}}}} \right) + }\\
{\frac{{\Gamma \left( {m+\ell_k - s} \right)}}{{\Gamma \left( {m+\ell_k } \right)}}{{\left( {\frac{1}{{{\Omega _k}}}} \right)}^{s - m-\ell_k }}{}_1{F_1}\left( {s,1 - m-\ell_k + s;\frac{1}{{{\Omega _k}}}} \right)}
\end{array}} \right){x^s}ds.
\end{multline}
where ${}_1{F_1}\left( {\alpha ,\beta ;\frac{1}{{{\Omega _k}}}} \right)$ can be expanded as
\begin{equation}\label{eqn:F_11_series_exp}
{}_1{F_1}\left( {\alpha ,\beta ;\frac{1}{{{\Omega _k}}}} \right) = 1 + \frac{\beta }{\alpha }\frac{1}{{{\Omega _k}}} + o\left( {\frac{1}{{{\Omega _k}}}} \right).
\end{equation}
Clearly, as $\gamma$ approaches to infinity, the dominant term in (\ref{eqn:F_11_series_exp}) is $1$.
 %\begin{align}\label{eqn:F_Al_as_0}
%{F_{{A_{\bs \ell} }}}\left( x \right) &= \Pr \left( {\prod\limits_{k = 1}^K {(1 + {R_{{\bs \ell} ,k}})}  \le x} \right) \notag\\
% &\le \Pr \left( {{R_{{\bs \ell},1}} \le x - 1} \right) = \frac{{\Upsilon \left( {m + {l_1},{\Omega _1}x} \right)}}{{\Gamma \left( {m + {l_1}} \right)}} = O\left( {{x^\infty }} \right)
% \end{align}

Herein, we assume $c_1>m+{\rm max}\{\bs{\ell}\}$ because $c_1 $ could be any point in $(0,\infty)$. Then by substituting (\ref{eqn:F_11_series_exp}) into (\ref{eqn:F_hat_A_0_der_fur_ex}), we have
\begin{align}\label{eqn:F_A_0_asym_F_hy_exp}
&{{F_{{{ {\cal A}}_{\bs{\ell}}}}}(x)}{ = \prod\limits_{k = 1}^K {\frac{1}{{{{\left( {{\Omega _k}} \right)}^{m+\ell_k}}}}} \frac{1}{{2\pi {\rm{i}}}}\int\nolimits_{{c_1} - {\rm{i}}\infty }^{{c_1} + {\rm{i}}\infty } {\frac{{\Gamma \left( s \right)}}{{\Gamma \left( {s + 1} \right)}}\prod\limits_{k = 1}^K {\frac{{\Gamma \left( { - m-\ell_k + s} \right)}}{{\Gamma \left( s \right)}}} {x^s}ds}  + o\left( {\prod\limits_{k = 1}^K {\frac{1}{{{{\left( {{\Omega _k}} \right)}^{m+\ell_k}}}}} } \right)} \notag\\
{}&{ = \prod\limits_{k = 1}^K {\frac{1}{{{{\left( {{\Omega _k}} \right)}^{m +\ell_k}}}}} G_{K + 1,K + 1}^{0,K + 1}\left( {\left. {\begin{array}{*{20}{c}}
{1,1 +\ell_1+ m , \cdots ,1 +\ell_K+ m}\\
{1, \cdots ,1,0}
\end{array}} \right|x} \right) + o\left( {\prod\limits_{k = 1}^K {\frac{1}{{{{\left( {{\Omega _k}} \right)}^{m+\ell_k }}}}} } \right)}.
\end{align}
Putting (\ref{eqn:gamma_snr_rel}) into (\ref{eqn:F_A_0_asym_F_hy_exp}), ${{F_{{{ {\cal A}}_{\bs{\ell}}}}}(x)}$ can be finally written as
\begin{align}\label{eqn:F_A_0_asym_F_hy_exp1}
{{F_{{{ {\cal A}}_{\bs{\ell}}}}}(x)}&{ = \prod\limits_{k = 1}^K {{{\left( {\frac{m}{{{\theta _k}{\sigma _k}^2\left( {1 - {\lambda _k}^2} \right)}}} \right)}^{m +\ell_k }}} G_{K + 1,K + 1}^{0,K + 1}\left( {\left. {\begin{array}{*{20}{c}}
{1,1 +\ell_1+ m, \cdots ,1 +\ell_K+m}\\
{1, \cdots ,1,0}
\end{array}} \right|x} \right){\gamma ^{-d_{{ {\cal A}_{\bs{\ell}}}}}} }\notag \\
&+ o\left( {{\gamma ^{-d_{{ {\cal A}_{\bs{\ell}}}}}}} \right).
\end{align}
Comparing (\ref{eqn:F_A_0_asym_F_hy_exp1}) with (\ref{eqn:der_A_1_CDF}) and setting the coefficients corresponding to ${\gamma ^{-d_{{ {\cal A}_{\bs{\ell}}}}}}$ equal, we finally have (\ref{eqn:g_0_0_der_meij_rem}).

\section{Proof of Lemma \ref{cor:time_corr}} \label{app:proo_cor}
%Without loss of generality, the coefficients of correlation factor are assumed to be non-negative, i.e., $ {\bs{\lambda }} \succ _{\mathbb{R}_+^K} \bf 0$. Thus $|{\bs{\lambda }}| = {\bs{\lambda }} $.
Defining ${\bs{\lambda }_1} = \left( {{\lambda _1}, \cdots,{\lambda _K}} \right)$ and given $K>1$, we first consider the simplest case where $\bs{\lambda}_1 \preceq \bs{\lambda}_2$ and ${\bs{\lambda }_2} = \left( {{\lambda _1}+ {\Delta _1}, \cdots ,{\lambda _K}} \right)$ with ${\Delta _1} \ge 0$. With (\ref{eqn:ell_def}), $\ell \left( {{\bs \lambda _2},K} \right)$ can be written as
\begin{equation}\label{eqn:ell_lambda_2_def}
\ell \left( {{\bs \lambda _2},K} \right) = \left( {1 + \sum\limits_{k = 2}^K {\frac{{{\lambda _k}^2}}{{1 - {\lambda _k}^2}}}  + \frac{{{{\left( {{\lambda _1} + {\Delta _1}} \right)}^2}}}{{1 - {{\left( {{\lambda _1} + {\Delta _1}} \right)}^2}}}} \right)\left( {1 - {{\left( {{\lambda _1} + {\Delta _1}} \right)}^2}} \right)\prod\limits_{k = 2}^K {\left( {1 - {\lambda _k}^2} \right)}.
\end{equation}
Since ${\Delta _1} \ge 0$, we have
\begin{align}\label{eqn:lambda_2_writ_re}
\ell \left( {{\bs \lambda _2},K} \right) &= \left( {1 + \sum\limits_{k = 2}^K {\frac{{{\lambda _k}^2\left( {1 - {{\left( {{\lambda _1} + {\Delta _1}} \right)}^2}} \right)}}{{1 - {\lambda _k}^2}}} } \right)\prod\limits_{k = 2}^K {\left( {1 - {\lambda _k}^2} \right)} \notag\\
 &\le \left( {1 + \sum\limits_{k = 2}^K {\frac{{{\lambda _k}^2\left( {1 - {\lambda _1}^2} \right)}}{{1 - {\lambda _k}^2}}} } \right)\prod\limits_{k = 2}^K {\left( {1 - {\lambda _k}^2} \right)}
 %&= \left( {\frac{1}{{1 - {\lambda _1}^2}} + \sum\limits_{k = 2}^K {\frac{{{\lambda _k}^2}}{{1 - {\lambda _k}^2}}} } \right)\left( {1 - {\lambda _1}^2} \right)\prod\limits_{k = 2}^K {\left( {1 - {\lambda _k}^2} \right)}
= \ell \left( {{\bs \lambda _1},K} \right),
\end{align}
where the equality holds if and only if ${\Delta _1} = 0$. Now gradually adding non-negative increments $\Delta_i$ to the other elements $\lambda_i$, $i=2, \cdots, K$, in ${\bs \lambda _2}$ and using the similar approach as (\ref{eqn:lambda_2_writ_re}), we can finally have $\ell \left( {{\bs \lambda _2},K} \right) \le \ell \left( {{\bs \lambda _1},K} \right)$ for general ${\bs{\lambda }_2} = \left( {{\lambda _1}+ {\Delta _1}, \cdots, {\lambda _i}+{\Delta _i}, \cdots,{\lambda _K}+{\Delta _K}} \right)$ with ${\Delta _i} \ge 0$ where the equality holds if and only if ${\Delta _i} =0$. Clearly, when $K > 1$, we have $\ell \left( {{\bs \lambda _1},K} \right) \le \ell \left( {{\bf 0},K} \right)=1$, where the equality holds if and only if ${\bs \lambda _1} = {\bf 0}$. Then the proof completes.

\section{Proof of Lemma \ref{the:coding_modulation_gain}}\label{app:coding_gain}

It is readily found from (\ref{eqn:sec_def_g_fun}) that ${g_{\bs{\ell}}}\left( 2^{\mathcal R} \right)$ is a monotonically increasing function of $\mathcal R$. Therefore the first derivative of ${g_{\bs{\ell}}}\left( 2^{\mathcal R} \right)$ with respect to $\mathcal R$ is greater than $0$, i.e., ${g_{\bs{\ell}}}^\prime \left( 2^{\mathcal R} \right) > 0$. By applying the property of derivatives of Laplace transform into (\ref{eqn:g_0_0_der_meij_rem}) \cite{debnath2010integral}, it follows that
\begin{align}\label{eqn:first_der_g}
{g_{\bs{\ell}}}^\prime \left( {{2^{\cal R}}} \right) &= \Theta \prod\limits_{k = 1}^K {\Gamma \left( {m + {\ell_k}} \right)} \frac{1}{{2\pi {\rm{i}}}}\int_{{c_1} - {\rm{i}}\infty }^{{c_1} + {\rm{i}}\infty } {\prod\limits_{k = 1}^K {\frac{{\Gamma \left( { - m - {\ell_k} + s} \right)}}{{\Gamma \left( s \right)}}} {2^{{\cal R}s}}ds} > 0. %\notag\\
% &= \ln 2\prod\limits_{k = 1}^K {\Gamma \left( {m + {\ell_k}} \right)} G_{K,K}^{0,K}\left( {\left. {\begin{array}{*{20}{c}}
%{1 + {\ell_1} + m, \cdots ,1 + {\ell_K} + m}\\
%{1, \cdots ,1}
%\end{array}} \right|{2^{\cal R}}} \right) > 0.
\end{align}
where $\Theta = \ln 2$. Similarly, the second derivative of ${g_{\bs{\ell}}}\left( 2^{\mathcal R} \right)$ with respect to $\mathcal R$ is given by
\begin{equation}\label{eqn:sec_der_g}
{g_{\bs{\ell}}}^{\prime \prime }\left( 2^{\mathcal R} \right) = {\Theta^2}\prod\limits_{k = 1}^K {\Gamma \left( {m + {\ell_k}} \right)} \frac{1}{{2\pi {\rm{i}}}}\int_{{c_1} - {\rm{i}}\infty }^{{c_1} + {\rm{i}}\infty } {s\prod\limits_{k = 1}^K {\frac{{\Gamma \left( { - m - {\ell_k} + s} \right)}}{{\Gamma \left( s \right)}}} {2^{{\cal R}s}}ds}.
\end{equation}
Without loss of generality, we assume that $\ell_1={\rm max}\{\bs{\ell}\}$. By rewriting $s=(s - m - \ell_1) + (m + \ell_1)$, the integral in (\ref{eqn:sec_der_g}) can be derived as
\begin{multline}\label{eqn:mellin_g_0_secd}
\frac{1}{{2\pi {\rm{i}}}}\int_{{c_1} - {\rm{i}}\infty }^{{c_1} + {\rm{i}}\infty } {s\prod\limits_{k = 1}^K {\frac{{\Gamma \left( { - m - {\ell_k} + s} \right)}}{{\Gamma \left( s \right)}}} {2^{\mathcal Rs}}ds}
=  \left( {m + {\ell_1}} \right)\frac{1}{{2\pi {\rm{i}}}}\int_{{c_1} - {\rm{i}}\infty }^{{c_1} + {\rm{i}}\infty } {\prod\limits_{k = 1}^K {\frac{{\Gamma \left( { - m - {\ell_k} + s} \right)}}{{\Gamma \left( s \right)}}} {2^{\mathcal Rs}}ds}\\
 + \frac{1}{{2\pi {\rm{i}}}}\int_{{c_1} - {\rm{i}}\infty }^{{c_1} + {\rm{i}}\infty } {\frac{{\Gamma \left( {1 - m - {\ell_1} + s} \right)}}{{\Gamma \left( s \right)}}\prod\limits_{k = 2}^K {\frac{{\Gamma \left( { - m - {\ell_k} + s} \right)}}{{\Gamma \left( s \right)}}} {2^{\mathcal Rs}}ds}.
\end{multline}
Putting (\ref{eqn:mellin_g_0_secd}) into (\ref{eqn:sec_der_g}) along with (\ref{eqn:first_der_g}), it follows that
\begin{equation}\label{eqn:sec_der_g_fin}
{g_{\bs{\ell}}}^{\prime \prime }\left( x \right){\rm{ = }}\left\{ {\begin{array}{*{20}{c}}
{\Theta {g_{{{\bs{\ell}}_{/1}}}}^\prime \left( {{2^{\cal R}}} \right) + \Theta {g_{\bs{\ell}}}^\prime \left( {{2^{\cal R}}} \right)>0,}&{{\ell_1} + m = 1{\rm{;}}}\\
{\Theta \frac{{\Gamma \left( {m + {\ell_1}} \right)}}{{\Gamma \left( {m + {\ell_1} - 1} \right)}}{g_{\bar {{\bs{\ell}}}}}^\prime \left( {{2^{\cal R}}} \right) + \Theta \left( {m + {\ell_1}} \right){g_{\bs{\ell}}}^\prime \left( {{2^{\cal R}}} \right)>0,}&{{\ell_1} + m > 1.}
\end{array}} \right.
\end{equation}
where ${{\bs{\ell}}_{/1}} = \left( {{\ell_2},{\ell_3}, \cdots ,{\ell_K}} \right)$ and $\bar {{\bs{\ell}}} = \left( {{\ell_1} - 1,{\ell_2},{\ell_3}, \cdots ,{\ell_K}} \right)$. Thus it proves that ${{g_{\bs{\ell}}}\left(  2^{\cal R} \right)}$ is a convex function of $\mathcal R$ if ${\rm max}{\{\bs{\ell}\}}+m \ge 1$.  Then the proof directly follows.%Therefore, ${{g_{\bf{0}}}\left(  2^{\cal R} \right)}$ is a convex function of $\mathcal R$ for any $m \ge 1$.

\bibliographystyle{ieeetran}
\bibliography{Asy_ana}

\end{document}